\documentclass[letterpaper, 10 pt, conference]{ieeeconf}  % Comment this line out if you need a4paper

\IEEEoverridecommandlockouts                              % This command is only needed if 
\usepackage[T1]{fontenc}
\usepackage{graphicx}
\usepackage{amsmath}
\usepackage{amssymb}
\usepackage{xcolor}
\usepackage{todonotes}
\usepackage{tikz}
\usepackage[ruled,vlined]{algorithm2e}
\usepackage{multirow}
\usepackage{subcaption}
\usepackage{lineno}

\usetikzlibrary{automata,positioning,decorations.markings,arrows,intersections,calc,shapes}
\usetikzlibrary{arrows.meta,fit,backgrounds}

	\newcommand{\ba}{\begin{array}}
		\newcommand{\ea}{\end{array}}
	\newcommand{\be}{\begin{equation}}
		\newcommand{\ee}{\end{equation}}
	\newcommand{\bea}{\begin{eqnarray}}
		\newcommand{\eea}{\end{eqnarray}}
	\newcommand{\bean}{\begin{eqnarray*}}
		\newcommand{\eean}{\end{eqnarray*}}
	\newcommand{\bc}{\begin{center}}
		\newcommand{\ec}{\end{center}}

    \newcommand{\pd}{\Delta}
	\newcommand{\pa}{s}

	\newcommand{\dis}{\textbf{Dis}}	
	\newcommand{\mo}{\mathcal{M}}

    \newcommand{\ap}{\mathcal{AP}}	
    \newcommand{\Aa}{\mathcal{A}}
    \newcommand{\alphabet}{\Sigma}

\tikzset{
  >=latex,node distance=2cm,on grid,auto, initial text=,
  box state/.style={draw,rectangle,minimum size=8mm,rounded corners},
  prob state/.style={draw,very thick,shape=circle,darkblue,minimum size=3mm,inner sep=0mm},
  every loop/.style={shorten >=0pt},
  accepting state/.style={double distance=1.2pt, outer sep = 0.6pt+\pgflinewidth},
  accepting dot/.style={above=-2.5pt,circle,fill,darkgreen,inner sep=2pt,radius=1pt},
  loop above/.append style={every loop/.append style={out=120, in=60, looseness=6}},
  loop below/.append style={every loop/.append style={out=300, in=240, looseness=6}},
  loop left/.append style={every loop/.append style={out=210, in=150, looseness=6}},
  loop right/.append style={every loop/.append style={out=30, in=330, looseness=6}}
}

 \newtheorem{assumption}{Assumption}
\usepackage{hyperref}
\usepackage{graphicx}
\usepackage{amssymb,amsmath,amsfonts}
\usepackage{mathtools}
\usepackage{ifthen,version}
\usepackage{tikz}
\usepackage{todonotes}
\usetikzlibrary{automata,positioning,decorations.markings,arrows,intersections,calc,shapes}

\pgfmathsetmacro{\arrLen}{1}        
\pgfmathsetlengthmacro{\HorizVarDistnc}{2cm}

\usepackage{cleveref}

\usepackage{makecell}
\usepackage{booktabs}

\colorlet{darkgreen}{green!40!black}
\colorlet{darkblue}{blue!60!black}
\colorlet{darkred}{red!50!black}
\colorlet{safecellcolor}{yellow!5}
\colorlet{goodcellcolor}{green!10}
\colorlet{badcellcolor}{blue!10}

\PassOptionsToPackage{usenames,dvipsnames}{xcolor}

\definecolor{dkgreen}{rgb}{0,0.6,0}
\definecolor{gray}{rgb}{0.5,0.5,0.5}
\definecolor{mauve}{rgb}{0.58,0,0.82}

\tikzset{
  >=latex,node distance=2cm,on grid,auto, initial text=,
  box state/.style={draw,rectangle,minimum size=8mm,rounded corners},
  prob state/.style={draw,very thick,shape=circle,darkblue,minimum size=3mm,inner sep=0mm},
  every loop/.style={shorten >=0pt},
  accepting state/.style={double distance=1.2pt, outer sep = 0.6pt+\pgflinewidth},
  accepting dot/.style={above=-2.5pt,circle,fill,darkgreen,inner sep=2pt,radius=1pt},
  loop above/.append style={every loop/.append style={out=120, in=60, looseness=6}},
  loop below/.append style={every loop/.append style={out=300, in=240, looseness=6}},
  loop left/.append style={every loop/.append style={out=210, in=150, looseness=6}},
  loop right/.append style={every loop/.append style={out=30, in=330, looseness=6}}
}

\newcommand{\set}[1]{\left\{ #1 \right\}}

\newcommand{\seq}[1]{\langle #1 \rangle}

\newcommand{\Ff}{\mathcal{F}}

\newcommand{\eE}{\mathbb E}

\newcommand{\Real}{\mathbb R}

\newtheorem{theorem}{Theorem}[section]
\newtheorem{lemma}[theorem]{Lemma}

\newtheorem{definition}{Definition}

\newtheorem{example}{Example}

\usepackage{wrapfig}
\usepackage{subcaption}  

\usepackage{thmtools,thm-restate}

\title{\LARGE \bf
Automata-Theoretic Verification of Interval Markov Decision Processes
}

\author{Sarvin Bahmani, Soumyajit Paul, Sven Schewe, Sadegh Soudjani and
 Ashutosh Trivedi% <-this % stops a space
\thanks{This research was supported in part by the following funding agencies: NSF under CAREER Award CCF-2146563, EPSRC through grant EP/X03688X/1, EIC 101070802, and ERC 101089047.}%
\thanks{A.\ Trivedi is a Royal Society Wolfson Visiting Fellow and gratefully acknowledges the support of the Wolfson Foundation and the Royal Society.}
\thanks{S. Bahmani, S. Paul, and S. Schewe are affiliated with the School of Computer Science and Informatics at the University of Liverpool. S. Soudjani is affiliated with MPI-SWS, Germany, and University of Birmingham, UK. A. Trivedi is affiliated with the Department of Computer Science at the University of Colorado Boulder.
Emails: $\mathtt{r.bahmani,soumyajit.paul,sven.schewe@liverpool.ac.uk}$, $\mathtt{sadegh@mpi\mbox{-}sws.org}$, $\mathtt{ashutosh.trivedi@colorado.edu}$} %
}

\begin{document}

\maketitle
\thispagestyle{empty}
\pagestyle{empty}

%%%%%%%%%%%%%%%%%%%%%%%%%%%%%%%%%%%%%%%%%%%%%%%%%%%%%%%%%%%%%%%%%%%%%%%%%%%%%%%%
\section{Abstract}
\label{sec:abstr}
\begin{abstract}
Interval Markov decision processes (IMDPs) provide a natural framework for modeling stochastic systems with uncertain transition probabilities, represented by probability intervals and resolved adversarially. Such uncertainty arises naturally, for example, when the transition model is learned from finite data or obtained through model-based reinforcement learning. In this paper, we study the automata-theoretic verification of IMDPs against rich temporal specifications, including all LTL specifications, by considering the broader class of $\omega$-regular objectives. We show that classical automata-theoretic verification techniques extend to IMDPs, but with a sharp distinction determined by the structure of the transition intervals. For stable IMDPs, where either the upper bound is zero or the lower bound is strictly positive, verification reduces to ordinary MDP analysis and can be carried out using the standard automata used in that setting (good-for-MDP automata). For unstable IMDPs, where intervals may include zero while the upper bound is strictly positive, verification becomes game-like and requires automata whose nondeterminism can be resolved on the fly (good-for-games automata). Building on these insights, we develop algorithms for verifying $\omega$-regular specifications over IMDPs and derive probabilistic guarantees when the interval model is learned from sampled data.
The resulting framework enables principled verification of stochastic systems under probabilistic model uncertainty, connecting automata-based verification with data-driven stochastic modeling.
\end{abstract}

\section{Introduction}
\label{sec:intro}
Stochastic systems in control, robotics, and learning are often modeled only approximately. In many settings, transition probabilities are not known exactly, but must be estimated from finite data or inferred through model-based reinforcement learning. This uncertainty makes formal reasoning difficult, particularly when one seeks to verify rich temporal properties rather than simple reachability or safety objectives.

A natural framework for capturing such uncertainty is provided by interval Markov decision processes (IMDPs), in which each transition probability is specified by an interval rather than a fixed value. The interval representation compactly describes a family of stochastic models and supports robust reasoning by interpreting the uncertainty adversarially. IMDPs have therefore emerged as a useful abstraction for data-driven stochastic systems, where exact transition laws are unavailable but confidence bounds or structural regularity assumptions allow one to bound transition probabilities.

To specify desired behaviors, the formal methods literature has long relied on temporal logics such as linear temporal logic (LTL) and, more generally, on the expressive class of $\omega$-regular objectives~\cite{Baier08,deAlfa98,Babiak15}. Automata-theoretic methods provide a powerful bridge between such specifications and algorithmic verification. This raises a natural question: how far do these classical verification techniques extend when the underlying system is an IMDP rather than an MDP?

In this work, we study the automata-theoretic verification and control of IMDPs against $\omega$-regular objectives, including all LTL specifications. Our goal is to understand when interval uncertainty preserves the classical MDP view of automata-based verification and when it instead induces a genuinely game-like setting which requires stronger automata that the product construction remains sound; good-for-games (GFG) automata suffice for this purpose.

\smallskip
\noindent \textbf{Interval MDPs.}
In data-driven stochastic control, IMDPs arise by abstracting an uncertain system into finitely many states and bounding transition probabilities by intervals derived from samples or structural assumptions on the dynamics. This yields a finite model that represents a family of stochastic systems and supports robust verification under adversarially resolved uncertainty.

Constructing an IMDP from data typically involves partitioning the continuous state space into finitely many regions and estimating possible transitions between regions under different control inputs.
% , as illustrated in Fig.~\ref{fig:imdp-abstraction}. 
Rather than relying on exact probability estimates, the abstraction bounds transition probabilities using intervals derived from statistical confidence bounds, regularity assumptions such as Lipschitz continuity, or other structural properties of the dynamics. Once an IMDP is available, one can apply probabilistic verification and policy-synthesis techniques to analyze temporal specifications under model uncertainty.

Recent work has developed methods for constructing IMDP abstractions from data. Badings et al.~\cite{badings2023robust} consider systems with known linear dynamics and additive noise of unknown distribution, yielding IMDPs with a fixed underlying graph structure. We refer to such IMDPs as \emph{stable}. Nazeri et al.~\cite{nazeri2025data} study fully unknown nonlinear systems with a known Lipschitz constant, yielding abstractions in which some transition intervals may include zero, so that the underlying graph may depend on how the uncertainty is resolved. We refer to such IMDPs as \emph{unstable}. Both works focus on reach-avoid objectives and compute optimal probabilities on the IMDP abstraction, thereby obtaining guarantees for probabilistic reachability and safety. In our paper, stability depends on the transition bounds, not on whether the underlying dynamics are linear or nonlinear. 

\smallskip
\noindent\textbf{Which automata are good for IMDPs?}
Automata-theoretic verification of $\omega$-regular objectives typically proceeds by translating the specification into an automaton and forming a product with the underlying stochastic model. A central difficulty is that expressive automata often involve nondeterminism. For such automata to be useful algorithmically, their nondeterministic choices must be resolvable in a way that preserves the satisfaction probability of the specification.
For ordinary MDPs, this difficulty is mitigated by the fact that the controller interacts only with stochastic dynamics rather than a strategic adversary. In this setting, \emph{good-for-MDP (GFM)} automata~\cite{Hahn20,Courco95,Brazdi14,Hahn15} suffice: their nondeterminism can be resolved without distorting satisfaction probabilities.

For IMDPs, transition uncertainty is resolved adversarially. Thus, although there is only one explicit controller, the interval uncertainty acts as an adversarial environment. Verification is therefore semantically closer to a game than to a standard MDP. In this setting, resolving automaton nondeterminism from the observed history alone may no longer preserve correctness. This is precisely where the stronger \emph{good-for-games (GFG)}~\cite{Henzin06} property becomes relevant: it requires that nondeterminism be resolved online, based only on the past, against an adversarial environment.

% For IMDPs, however, transition uncertainty is resolved adversarially. Although there is only one explicit controller, the uncertain transition probabilities act as an adversarial environment choosing values within the prescribed intervals. This makes the verification problem semantically closer to a game than to a standard MDP. In such a setting, resolving automaton nondeterminism based only on the history observed so far may no longer preserve correctness. This is precisely where the stronger \emph{good-for-games (GFG)}~\cite{Henzin06} property becomes relevant: it requires that nondeterminism can be resolved online, based only on the past, against an adversarial environment.

This observation reveals the central challenge addressed in this paper: interval uncertainty changes the semantics of automata-based verification. Automata that are sound for MDPs need not remain sound for IMDPs, but where they are, it allows for using significantly faster algorithms.
The precise boundary depends on whether the uncertainty preserves or can alter the connectivity structure of the model.

\smallskip
\noindent\textbf{Contributions.}
We study verification and control of IMDPs against $\omega$-regular objectives. Our main contributions are:
\begin{enumerate}
    \item We identify the automata-theoretic boundary for IMDP verification. In particular, we show that GFM automata suffice for stable IMDPs, where interval bounds exclude zero so that the adversarial player cannot alter the connectivity structure, whereas unstable IMDPs require the stronger GFG property.
    \item We develop algorithms for computing worst-case optimal policies for IMDPs with $\omega$-regular objectives, combining automata-theoretic constructions with insights from probabilistic parity games~\cite{Chatte03,Chatte04}.
    \item We implement two solution procedures: (i) a value iteration method that introduces slight imprecision because of the contraction used for upper bounds, and (ii) a strategy-improvement method that is slightly more expensive but exact. Unlike previous approaches, our method iterates over nature strategies, yielding small linear programs instead of exponential-size resolver programs. We evaluate the approach on large-scale case studies with diverse $\omega$-regular objectives, demonstrating scalability and practical applicability.
\end{enumerate}
Proofs and more details can be found in the appendix. %the full version~\cite{imdpverify}.

\section{Preliminaries}
\label{sec:preliminaries}
A \emph{probability distribution} over a  set $X$ is a function $d \colon X {\to} [0, 1]$ such that $\sum_{x \in X} d(x) = 1$.  Let
$\dis(X)$ denote the set of all distributions over $X$. 
% We
% say a distribution ${d \in \dis(X)}$ is a \emph{point distribution}
% if $d(x) {=} 1$ for some $x \in X$.  
% For $d \in \dis(X)$ we write
% $\supp(d)$ for $\set{x \in X : d(x) > 0}$. 
% A \emph{sequence} is an ordered list of elements $\seq{x_1, x_2, x_3, \ldots}$.
For a finite or infinite sequence $\pa = \seq{x_1, x_2, \ldots}$,
we write $\pa(k)$ to denote its $k$-th element.
For a finite sequence $s = \seq{x_1, \ldots, x_n}$ and an element $y$, we write $s \cdot y$ for the sequence $\seq{x_1, \ldots, x_n, y}$.
For a finite sequence $\pa = \seq{x_1, x_2, \ldots, x_n}$, we write $\textsf{last}(\pa)$ for the final element $x_n$ of $\pa$.

\begin{definition}[IMDP]\label{def_imdp}
An Interval MDP (IMDP) $\mo$ is a tuple $(X,x_0, A, \overline{P}, \underline{P},\ap,L)$ where 
    $X$ is a finite set of states; $x_0\in X$ is the initial state;
    $A$ is a finite set of actions; 
    $\overline{P}:X {\times} A {\times} X \to [0,1]$ and 
    $\underline{P}:X {\times} A {\times} X \to [0,1]$ are upper and lower transition probability bound functions;
    $\ap$ is a finite set of atomic propositions and 
    $L:X\to 2^\ap$ is a labeling function.
Furthermore, to ensure $\overline{P}$ and $\underline{P}$ admit well-formed transition probabilities, we require that, for any $x, x'{\in} X$ and $a {\in} A$, we have $\underline{P}(x,a,x') \leq \overline{P}(x,a,x')$, and
\[
    \sum_{x'\in X} \underline{P}(x,a,x') \leq 1 \leq  \sum_{x'\in X} \overline{P}(x,a,x').
\]  
When $\overline{P} = \underline{P}$, the IMDP becomes an MDP.
\end{definition}
Given a state $x\in X$ and input $a \in A$, we denote by $\pd(x,a)$ the set of feasible distribution over $X$, i.e., 
\begin{multline}
\label{feasible-distribution}
    \pd(x,a)=
    \Big\{ p\in \dis(X): \sum_{x' \in X} p(x'|x,a){=}1   \text{ and }\\
 \underline{P}(x,a,x')\leq p(x'|x,a) \leq \overline{P}(x,a,x')  \text{ for all } x'{\in} X\Big\}.
\end{multline}
Note that $\pd(x,a)$ is a polytope in $\mathbb R^{|X|}$. 
Given a state $x \in X$, we denote the set of all actions that are enabled at state $x$ by $\Gamma(x)=\{a \in A \mid \text{ there exists } x' \in X, \text{ such that } \overline{P}(x,a,x') >0\}$.
A finite path of an IMDP $\mo = (X,x_0, A, \overline{P}, \underline{P},\ap,L)$ is a finite sequence of states $\pa = \seq{x_1, x_2, \ldots, x_n}$ where $x_1 = x_0$ and, for all $1 < i \leq n $, there exists $a \in \Gamma(x_{i-1})$ such that $\overline{P}(x_{i-1},a,x_i) > 0$.  
An infinite path is an infinite such sequence. 
We denote by $\textsf{Path}^*(\mo)$ and $\textsf{Path}^\omega(\mo)$ the set of all finite paths and infinite paths in $\mo$, respectively.
Let $\textsf{Path}(\mo)=\textsf{Path}^*(\mo) \cup \textsf{Path}^\omega(\mo)$ be the set of all paths.

\smallskip
\noindent
\textbf{The policy and the nature.}
A \emph{policy} is a function
$\pi : \textsf{Path}^*(\mo) \to \dis(A)$,
which assigns a distribution over action set to each finite path. For every finite path $s$,
$\pi(s,a)=0$ whenever $a\notin\Gamma(\mathrm{last}(s))$.
A \emph{nature} is a function
$\theta : \textsf{Path}^*(\mo) \times A \to \dis(X)$,
such that, for every finite path $s \in \textsf{Path}^*(\mo)$ and action $a \in A$, the distribution
$\theta(s, a)$ is a feasible successor distribution in $\pd(\textsf{last}(s), a)$.
Let $\Pi(\mo)$ and $\Theta(\mo)$ be the sets of all policies and natures for~$\mo$.

For $(\pi, \theta) \in \Pi(\mo) \times \Theta(\mo)$, let
$\textsf{Path}^{\pi,\theta}_\mo$ denote the set of infinite runs of $\mo$ starting at $x_0$ that are consistent with $(\pi,\theta)$.
Given a finite path $\pa = \seq{x_1, x_2, \ldots, x_n} \in \textsf{Path}^*(\mo)$, the \emph{cylinder set} generated by $\pa$ is
\[
\textsf{Cyl}(\pa)
  = \bigl\{
      \rho \in \textsf{Path}^\omega(\mo)
      \,\big|\,
      \rho_i = \pa_i \text{ for } 1 \le i \le n
    \bigr\}.
\]
Let $\Ff_\mo^{\pi,\theta}$ be the $\sigma$-algebra generated by cylinder sets. Then the behavior of $\mo$ under $(\pi,\theta)$ is described by the probability space
$(\textsf{Path}^{\pi,\theta}_\mo, \Ff_\mo^{\pi,\theta}, \Pr^{\pi,\theta}_\mo)$,
where the measure $\Pr^{\pi,\theta}_\mo$ is uniquely determined by the Ionescu--Tulcea extension theorem.
Given some $f \colon \textsf{Path}^\omega(\mo) \to \Real$, we write $\eE_\mo^{\pi,\theta}[f]$ for its expectation with respect to this probability space.

\noindent\textbf{Parity IMDPs.}
When analyzing IMDPs with parity objectives, we consider \emph{parity IMDPs}, which are tuples $\mo=(X,x_0, A, \overline{P}, \underline{P},\alpha)$, where $(X,x_0,A,\overline{P}, \underline{P})$ is the interval transition structure of an IMDP as in Definition~\ref{def_imdp}, and $\alpha:X \rightarrow \mathbb N$ is a priority function that maps states to natural numbers, called \emph{colors}.
For infinite paths in the IMDP, we call the highest color that occurs infinitely often \emph{dominating}, and the path \emph{accepting} if the dominating color is even.
For a parity IMDP $\mo$, a policy $\pi$ and a nature $\theta$,
we let
\[
\textsf{Pr}_{\mo}^{\pi,\theta} = \Pr_{\mo}^{\pi,\theta} \set{ s \in \textsf{Path}^{\pi,\theta}_\mo \::\: s \text{ is an accepting path}}.
\]
Let $\textsf{Pr}_{\mo}^{\pi} = \inf_\theta \textsf{Pr}_{\mo}^{\pi,\theta}$ be the (guaranteed) probability that $\mo$ creates an accepting path (regardless of nature) under policy $\pi$, and
    $\textsf{Pr}_{\mo}= \sup_\pi \textsf{Pr}_{\mo}^{\pi}$ be the (achievable) probability that $\mo$ creates an accepting path for any policy $\pi$.

\noindent\textbf{Stochastic parity games.}
A stochastic parity game (SPG) $\mathcal G$ is a tuple $(X_e,X_o,X_r,x_0, P,E,\alpha)$ where  $X=X_e \cup X_o \cup X_r$ is a finite set of states, partitioned into three pairwise disjoint sets: states $X_e$ owned by the maximizer, states $X_o$ owned by the minimizer, and random states $X_r$; $x_0\in X$ is the initial state; $\alpha:X \rightarrow \mathbb N$ is a coloring function; 
$P: X_r \times X \rightarrow [0,1]$ is a probability function, so that $\sum_{x\in X} P(x',x) = 1$ holds for all $x'\in X_r$; and $E \subseteq (X_e \cup X_o) \times X$ is a transition relation that maps at least one successor to each state in $X_e \cup X_o$.
SPGs are played by placing a token on $x_0$. When the token is on a state in $X_e$, the maximizer chooses the 
successor state from $E$, when it is on a state in $X_o$, the minimizer chooses, and for states in $X_r$, the successor state is sampled with the probabilities from $P$.
The objective of the maximizer (minimizer) is to maximize (minimize) the chance that the dominating color in the ensuing play is even.

\noindent\textbf{Parity and B\"uchi Automata.}
A deterministic parity automaton $\mathcal D$ is a tuple $\mathcal D = (\Sigma, Q, q_0, \delta_{\mathcal N}, \alpha)$, where $\Sigma=2^\ap$ and $\ap$ is a set of atomic propositions (Definition~\ref{def_imdp}), $Q$ is a finite set of states, $q_0 \in Q$ is the initial state, $\delta_{\mathcal N} \colon Q \times \Sigma \to Q$ is the transition function, and $\alpha \colon Q \to \mathbb N$ is the coloring function.
A deterministic parity automaton reads an infinite word over the alphabet $\Sigma$ and induces a unique infinite run over $Q$ starting from $q_0$. A run is accepting if the highest color that appears infinitely often along the run is even. The set of words whose runs are accepting is the language of the automaton.

A nondeterministic parity automaton $\mathcal N$ is a tuple $\mathcal N = (\Sigma, Q, Q_0, \Delta_{\mathcal N}, \alpha)$ defined as above, except that the set of initial states $Q_0$ need not be a singleton, and the transition function $\Delta_{\mathcal N} : Q \times \Sigma \to 2^Q$ maps each state and input letter to a set of successor states.
As a result, a nondeterministic automaton typically admits multiple runs on a given word, and its language consists of all words for which at least one run is accepting.
Nondeterministic automata $\mathcal N$ with the property that resolving the nondeterminism in $\mathcal N$ yields, for every finite MDP $\mo$, the same value as computing an accepting run in the parity MDP obtained from the product of $\mo$ and $\mathcal N$ are called \emph{good-for-MDPs}.
Parity conditions for which the image of $\alpha$ is ${1,2}$ are called \emph{B\"uchi} conditions.

\section{What's Good for IMDPs?}
\label{sec:goodForIMDP}
We develop automata-theoretic algorithms for analyzing IMDPs, which raises the question of which automata are appropriate for this setting.
We show that this question turns crucially on whether transition intervals include zero, since zero-valued lower bounds allow the environment to suppress certain transitions altogether. This feature typically depends on the construction of the IMDP: some abstraction methods yield such intervals, whereas others avoid them.

\begin{definition}[SIMDPs]
We call an IMDP \emph{stable} if, for all $x,x'\in X$ and $a \in A$, either $\underline{P}(x,a,x') >0$ or $\overline{P}(x,a,x')=0$ holds.  
Intuitively, stability ensures that every transition allowed by the abstraction remains possible under all admissible probability selections of the environment.
\end{definition}

In this paper, we focus primarily on stable IMDPs (SIMDPs). For this class, we show that qualitative analysis coincides with that of ordinary MDPs, and therefore good-for-MDP B\"uchi automata~\cite{Hahn20} suffice to capture $\omega$-regular objectives. Moreover, quantitative analysis reduces to reachability, allowing the use of standard techniques for reachability games, such as value iteration and strategy improvement. This reduction is specific to stable IMDPs and does not hold for general IMDPs.

Before turning to SIMDPs, we first outline an approach for analyzing general IMDPs. The main idea is to construct the synchronous product of an IMDP with a suitable automaton whose memory acts as a witness for acceptance. For general IMDPs, this automaton must be a deterministic parity automaton, or more generally a GFG parity automaton~\cite{Henzin06}, recognizing the target $\omega$-regular language on $2^{\ap}$.

To analyze an IMDP $\mo = (X,x_0,A,\overline{P},\underline{P},\ap,L)$ against a specification given by a deterministic parity automaton $\mathcal{D} = (\alphabet,Q,q_0,\delta_{\mathcal N},\alpha)$, we construct a stochastic parity game $\mathcal{G}$ as follows:
\begin{itemize}
    \item \textbf{Maximizer states:} The states are $X \times Q$, with initial state $(x_0,q_0)$.
    \item \textbf{Maximizer choices:} From a state $(x,q)$, the maximizer chooses an action $a \in \Gamma(x)$ and moves to the corresponding minimizer state $(x,q,a)$.

\item \textbf{Minimizer states:} These are states $(x,q,a) \in X \times Q \times A$ with $a \in \Gamma(x)$. From such a state, the minimizer selects a corner $p$ of the polytope $\pd(x,a)$, yielding a random state $(x,q,a,p)$.\footnote{A corner is a point that cannot be written as a nontrivial convex combination of two distinct points of the polytope. Any probabilistic choice $p\in\pd(x,a)$ can be expressed
as a convex combination of the vertices of $\pd(x,a)$.}

\item \textbf{Random states:} From $(x,q,a,p)$, the game moves to a successor $(x',q')$ with probability $p(x' \mid x,a)$, where $q' = \delta_{\mathcal N}(q,L(x))$.

\item \textbf{Colors:} The coloring $\alpha'$ is inherited from $\mathcal{D}$:
\[
\alpha'(x,q) {=} \alpha(q), \;
\alpha'(x,q,a) {=} \alpha(q), \;
\alpha'(x,q,a,p) {=} \alpha(q).
\]
\end{itemize}
Alternatively, we may construct a parity IMDP $\mathcal{P} = \bigl(X \times Q,(x_0,q_0),A,\overline{P}',\underline{P}',\alpha'\bigr)$, in which the automaton component is updated deterministically according to $q'=\delta_{\mathcal N}(q,L(x))$, while the system component evolves as in $\mo$. Hence, the transition bounds $\overline{P}'$ and $\underline{P}'$ depend only on the $X$-component, and the coloring is given by $\alpha'(x,q)=\alpha(q)$.

% Alternatively, we may form a parity IMDP $\mathcal{P} = \big(X \times Q,(x_0,q_0),A,\overline{P}',\underline{P}',\alpha'\big)$,
% where the second component of each state $(x,q)$ is updated deterministically according to the automaton, i.e., $q' = \delta_{\mathcal N}(q,L(x))$, while the first component evolves as in $\mo$. Thus, the transition probabilities $\overline{P}'$ and $\underline{P}'$ depend only on the $X$-component. The coloring is similar : $\alpha'(x,q) = \alpha(q)$.

Note that $\mathcal{G}$ may be exponentially larger than $\mo$. This is because $\pd(x,a)$ may have exponentially many vertices, each inducing a random state, even when $\mo$ is an SIMDP. For example, if the support of $\pd(x,a)$ has size $2k+1$ for some $k \in \mathbb{N}$ and each interval is
$\big[\frac{1}{2k+2},\,\frac{3}{2k+2}\big]$,
then $\pd(x,a)$ has $(2k+1)\binom{2k}{k}$ vertices. Each vertex corresponds to a distribution in which one successor has probability $\frac{1}{k+1}$, exactly $k$ successors have probability $\frac{1}{2k+2}$, and the remaining $k$ have probability $\frac{3}{2k+2}$.

\begin{lemma}
For every policy $\pi$ and every nature $\theta$,
the acceptance probability
$\textsf{Pr}^{\pi,\theta}_{\mathcal{P}}$
coincides with the probability that $\mo$ generates
a path whose label trace is accepted by $\mathcal{D}$.
\end{lemma}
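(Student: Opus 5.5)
The plan is to make precise the correspondence between policies and natures on $\mo$ and on $\mathcal{P}$, and then show that this correspondence transports the probability measures. Because $\mathcal{D}$ is deterministic, every finite path $s=\seq{x_1,\ldots,x_n}$ of $\mo$ has exactly one lift $\hat s=\seq{(x_1,q_1),\ldots,(x_n,q_n)}$ in $\mathcal{P}$. Here $q_1=q_0$ and $q_{i+1}=\delta_{\mathcal N}(q_i,L(x_i))$. The same holds for infinite paths.

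\textbf{Step 1: the lifting is a bijection.} I first show that the map $s\mapsto\hat s$ is a bijection between $\textsf{Path}^*(\mo)$ and $\textsf{Path}^*(\mathcal{P})$, and likewise for infinite paths. Its inverse is the projection onto the $X$-components. This uses only two facts: $\overline{P}'((x,q),a,(x',q'))=\overline{P}(x,a,x')$ when $q'=\delta_{\mathcal N}(q,L(x))$, and it is $0$ otherwise. Consequently a policy $\pi$ (respectively a nature $\theta$) for $\mo$ is identified with the policy $\hat\pi(\hat s)=\pi(s)$ (respectively the nature $\hat\theta(\hat s,a)(x',q')=\theta(s,a)(x')$) for $\mathcal{P}$.

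I need to check that $\hat\theta(\hat s,a)$ is feasible in $\pd'(\textsf{last}(\hat s),a)$. This holds because the bounds of $\mathcal{P}$ are those of $\mo$, moved onto the unique $Q$-successor, and all other successors have the interval $[0,0]$. The enabled actions also agree: $\Gamma'(x,q)=\Gamma(x)$.

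I read the statement with $\pi,\theta$ understood through this identification. The identification is onto, since any policy or nature on $\mathcal{P}$ arises in this way via the projection.

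\textbf{Step 2: the measures agree on cylinders.} I show that the bijection preserves the measures. On cylinders, $\Pr^{\hat\pi,\hat\theta}_{\mathcal{P}}(\textsf{Cyl}(\hat s))=\Pr^{\pi,\theta}_{\mo}(\textsf{Cyl}(s))$. The proof is by induction on $|s|$. Both sides factor as products of $\sum_a \pi(s_{\le i},a)\,\theta(s_{\le i},a)(x_{i+1})$, and every other successor in $\mathcal{P}$ receives probability $0$.

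Cylinders form a $\pi$-system that generates the $\sigma$-algebras. The lifting map is measurable, since preimages of cylinders are cylinders. By uniqueness of the extension (Ionescu--Tulcea / Carathéodory), $\Pr_{\mathcal{P}}^{\hat\pi,\hat\theta}$ is therefore the pushforward of $\Pr_\mo^{\pi,\theta}$ under the lifting.

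\textbf{Step 3: acceptance corresponds.} I identify the events. For an infinite path $\rho$ of $\mo$, the $Q$-components of $\hat\rho$ are exactly the run of $\mathcal{D}$ on the label trace $L(x_1)L(x_2)\cdots$. The colour of $\hat\rho$ at position $i$ is $\alpha(q_i)$. Hence $\hat\rho$ is accepting in $\mathcal{P}$ if and only if the run of $\mathcal{D}$ is accepting, that is, if and only if the trace of $\rho$ lies in the language of $\mathcal{D}$. This language event is measurable, as it is the preimage of a parity event under the lifting.

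\textbf{Conclusion and main obstacle.} Combining Steps 2 and 3 gives the equality for every $\pi,\theta$. The main obstacle is the bookkeeping in Step 2. One point is the off-by-one in the automaton update: the update reads $L(x)$ of the current state, so $q_{i+1}$ depends on $x_i$, and acceptance is unaffected. The other is checking that the feasible polytopes of $\mathcal{P}$ and $\mo$ are in exact correspondence, so that the identification of natures is a genuine bijection. Neither point is deep.
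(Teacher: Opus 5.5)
Your proposal is correct and takes the same route as the paper's proof. The deterministic update of the automaton component identifies paths, policies and natures of $\mathcal{P}$ and $\mo$, so the induced measures coincide. The $Q$-component of each lifted path is then exactly the run of $\mathcal{D}$ on its label trace, so the acceptance events correspond; you additionally spell out the cylinder-level measure argument and the feasibility of lifted natures, which the paper leaves implicit.

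One bookkeeping fix: as written, your $\hat\theta$ is not a distribution. Define $\hat\theta(\hat s,a)(x',q')=\theta(s,a)(x')$ only for $q'=\delta_{\mathcal N}(q,L(x))$ with $(x,q)=\textsf{last}(\hat s)$, and $0$ otherwise; your $[0,0]$-interval remark already implies this.
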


\begin{proof}
By construction of the product, $\mathcal{P}$ and $\mo$ induce the same probability space over paths, generated by the same choices of $\pi$ and $\theta$.
Moreover, along each path, the automaton component is updated exactly according to $\delta_{\mathcal N}$, so the color sequence in $\mathcal{P}$ matches the acceptance condition of $\mathcal{D}$ on the corresponding label trace of $\mo$. Hence, a path is accepting in $\mathcal{P}$ if and only if its trace in $\mo$ is accepted by $\mathcal{D}$. 
\end{proof}

% \noindent Consequently, $\mathcal{P}$ preserves both the correct probabilities and the optimal policies.

% \begin{lemma}
% The maximizer has the same pure positional optimal policies, achieving the same winning probabilities, in $\mathcal{P}$ and in $\mathcal{G}$.
% \end{lemma}

\begin{lemma}
The maximizer's pure positional optimal policies in $\mathcal{P}$ and $\mathcal{G}$ correspond under the
natural identification of maximizer states and actions, and achieve the same winning probabilities.
\end{lemma}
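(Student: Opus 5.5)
The plan is to show that, once the maximizer fixes a pure positional policy, the minimizer's options in $\mathcal{G}$ and the natures of $\mathcal{P}$ yield the same infimum of acceptance probabilities. The two objects are identified in the obvious way: maximizer states $(x,q)$ of $\mathcal{G}$ are the states of $\mathcal{P}$. A choice of $(x,q,a)$ in $\mathcal{G}$ is the action $a\in\Gamma(x)$ in $\mathcal{P}$. So a pure positional policy $\sigma$ of $\mathcal{P}$ and a pure positional maximizer strategy of $\mathcal{G}$ are literally the same map $X\times Q\to A$. It therefore suffices to show that, for every such $\sigma$, the guaranteed value $\inf_\theta \textsf{Pr}^{\sigma,\theta}_{\mathcal{P}}$ equals the value of $\mathcal{G}$ when the maximizer plays $\sigma$ and the minimizer plays optimally. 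In addition, the optimal values of both games must be attained by pure positional maximizer strategies. For $\mathcal{G}$ this is the classical positional determinacy of stochastic parity games~\cite{Chatte03,Chatte04}. For $\mathcal{P}$ it will follow from the value equality, once we show that general policies of $\mathcal{P}$ gain nothing over $\mathcal{G}$.

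\textbf{Step 1: nature simulates the minimizer.} Every minimizer strategy in $\mathcal{G}$, including a randomized, history-dependent one, picks at each history a distribution over the corners of $\pd(x,a)$. Its convex combination is a point of $\pd(x,a)$, since polytopes are convex. Define $\theta$ to choose exactly this point at the corresponding history of $\mathcal{P}$. The random state only matters through the induced successor distribution, and colors are constant along the gadget $(x,q)\to(x,q,a)\to(x,q,a,p)$. Hence the induced measures on sequences of maximizer states coincide, and so do the acceptance probabilities. This gives $\text{val}_{\mathcal{P}}\le \text{val}_{\mathcal{G}}$ strategy-wise.

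\textbf{Step 2: the minimizer simulates nature.} This is the converse direction. Any $\theta(s,a)\in\pd(x,a)$ is a convex combination of vertices, by the footnote's Carath\'eodory/Minkowski fact. So a randomized minimizer strategy that samples vertex $p$ with the corresponding weight reproduces the same one-step distribution. The same measure argument then gives equality of acceptance probabilities for every pair of strategies. This holds for arbitrary, not only positional, maximizer policies, since the correspondence acts on histories. Consequently $\sup_\pi\inf_\theta$ in $\mathcal{P}$ equals the value of $\mathcal{G}$. Positional optimal maximizer strategies of $\mathcal{G}$ then transfer to $\mathcal{P}$ as optimal policies with the same value, and conversely.

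\textbf{Expected obstacle.} The main difficulty is Step 2 with randomized minimizer strategies in $\mathcal{G}$. Stochastic parity games are usually stated with pure strategies, and pure corner choices cannot represent an interior $\theta(s,a)$. I would handle this by noting that the minimizer also has pure positional optimal strategies in $\mathcal{G}$~\cite{Chatte03}, so randomization is harmless for values. The remaining care is to check that the measures agree on histories projected to $X\times Q$. I would do this by an induction on cylinder sets, using the fact that the intermediate gadget states are deterministic bookkeeping with inherited colors.
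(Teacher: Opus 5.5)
Your argument is correct. It uses the same mechanism as the paper: translate strategies between $\mathcal{P}$ and $\mathcal{G}$ without changing acceptance probabilities, and invoke positional determinacy of finite stochastic parity games. The organization differs. The paper fixes an optimal pure positional pair $(\pi,\theta)$ of $\mathcal{G}$ and observes that it is also a strategy pair of $\mathcal{P}$. It then shows the pair is still a saddle point there, by moving any unilateral deviation in $\mathcal{P}$ (a policy $\pi'$ against $\theta$, or a nature $\theta'$ against $\pi$) into $\mathcal{G}$. You instead show that every positional maximizer strategy has the same guaranteed value in both models, and that the two optimal values coincide.

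Your route buys two things. First, it gives the correspondence in both directions: a positional policy is optimal in $\mathcal{P}$ if and only if it is optimal in $\mathcal{G}$. The paper only transfers optimality from $\mathcal{G}$ to $\mathcal{P}$. Second, your Step 2 spells out what the paper's deviation step (2) silently relies on. A nature picking an interior point of $\Delta(x,a)$ can in general be imitated in $\mathcal{G}$ only by a randomized choice of corners, so one needs the pure positional minimizer strategy to be optimal against randomized strategies too. The paper's saddle-point formulation is more economical. Beyond the trivial embedding of one fixed positional pair, it only needs translations from $\mathcal{P}$ into $\mathcal{G}$.

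One detail in Step 1 should be tightened. A history of $\mathcal{G}$ records the chosen action and corner at every step, while a path of $\mathcal{P}$ records neither. So many $\mathcal{G}$-histories project onto the same $\mathcal{P}$-path, and ``the corresponding history'' is not well defined for a randomized or history-dependent minimizer strategy. You never need that generality. Every use of Step 1 can be made with a pure positional minimizer strategy, which is literally a positional nature of $\mathcal{P}$. Such a strategy can be an optimal one of $\mathcal{G}$, or an optimal response to a fixed positional $\sigma$; the latter exists because $\mathcal{G}$ under $\sigma$ is a parity MDP for the minimizer. If you want the general version, let $\theta$ choose, at a $\mathcal{P}$-path and action, the conditional mean of the chosen corner given that projected history and action. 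This depends on the fixed maximizer policy, which is harmless since Step 1 is a per-policy statement. Step 2 has no such problem, because there the projection from $\mathcal{G}$-histories to $\mathcal{P}$-paths is a function.
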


\begin{proof}
Finite SPGs admit optimal pure positional strategies~\cite{DBLP:journals/jcss/AlfaroM04,DBLP:conf/fossacs/Zielonka04}. Let $p$ denote the probability of satisfying the parity objective in $\mathcal G$ when both players follow optimal pure positional policies
$\pi : X \times Q \rightarrow A$ and
$\theta : X \times Q \times A \rightarrow (X \times Q \rightarrow [0,1])$.
Note that each transition in $\mathcal{P}$ is expanded into three steps in $\mathcal{G}$, but this affects only the encoding of paths, not the induced values or optimal choices.

We now show that these policies are also optimal in $\mathcal P$ and yield the same probability $p$.
\begin{enumerate}
    \item
    Suppose for contradiction that there exists a policy $\pi'$ in $\mathcal P$ that achieves a higher probability $p' > p$ against nature $\theta$.
    Then $\pi'$ induces a corresponding policy against $\theta$ in $\mathcal G$, yielding the same winning probability $p' > p$,
    contradicting the optimality of $\pi$.

    \item
    Conversely, suppose for contradiction that there exists a nature strategy $\theta'$ in $\mathcal P$ that achieves a lower winning probability $p' < p$ against $\pi$.
    Then $\theta'$ induces a corresponding strategy against $\pi$ in $\mathcal G$, again yielding probability $p' < p$,
    contradicting the optimality of $\theta$.
\end{enumerate}
Therefore, $\pi$ and $\theta$ remain optimal in $\mathcal P$, and the winning probability is the same in $\mathcal P$ and $\mathcal G$.
\end{proof}

Efficient algorithms for both qualitative~\cite{Hahn16} and quantitative~\cite{DBLP:conf/vmcai/HahnST017} analysis of stochastic parity games are available. 
In particular, the qualitative analysis can, in principle, be solved via a gadget-based reduction to non-stochastic parity games~\cite{Chatte04}, which can then be handled in quasi-polynomial time~\cite{DBLP:journals/siamcomp/CaludeJKLS22,DBLP:journals/lmcs/LehtinenPSW22,DBLP:journals/fcomp/DellErbaS22}.
For an IMDP $\mo$, the \emph{qualitative analysis} problem is determining the set of states $x \in X$ for which $\mathsf{Pr}_{\mo_x}=0$ or $\mathsf{Pr}_{\mo_x}=1$ holds (where $\mo_x=(X,x,A,\overline{P},\underline{P},\alpha)$).

\begin{theorem}
The qualitative analysis of IMDPs is EXPTIME-complete when the property is given as a nondeterministic B\"uchi automaton, and 2EXPTIME-complete when the property is given as an LTL formula.
\end{theorem}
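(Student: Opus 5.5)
The plan is to prove matching upper and lower bounds, reducing in both directions to known results for MDPs and stochastic games.

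\textbf{Upper bounds.} Given a nondeterministic B\"uchi automaton $\mathcal N$ with $n$ states, I first determinize it into a deterministic parity automaton $\mathcal D$ with $2^{O(n\log n)}$ states and $O(n)$ colors, using Safra/Piterman. I then build the stochastic parity game $\mathcal G$ from the construction above. By the two preceding lemmas, the qualitative values of $\mathcal P$, and hence of $\mo$, coincide with those of $\mathcal G$.

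The subtlety is that $\pd(x,a)$ may have exponentially many corners, so $\mathcal G$ may have exponentially many random states. For qualitative analysis, however, only the support of the chosen distribution matters. The minimizer can therefore be restricted to one representative distribution per feasible support set. There are at most $2^{|X|}$ supports per $(x,a)$, and feasibility of each is checkable by a linear program. This leaves the game of size exponential in $|\mathcal N|$ and $|X|$, but with only polynomially many colors.

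Qualitative analysis of stochastic parity games reduces via the gadget construction~\cite{Chatte04} to non-stochastic parity games of polynomially larger size with the same number of colors. Those games are solvable in time $m^{O(d)}$, which is exponential here, giving EXPTIME. For LTL, I first translate the formula into an NBA of exponential size, which yields 2EXPTIME.

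\textbf{Lower bounds.} Since every MDP is an IMDP (with $\overline{P}=\underline{P}$), hardness transfers directly from MDPs. Qualitative model checking of MDPs against LTL is 2EXPTIME-complete (Courcoubetis--Yannakakis). For NBA specifications, however, the known MDP bound is only EXPTIME for the \emph{universal} question, and the "exists a policy with probability $1$" version is typically also EXPTIME-hard, via encoding alternating polynomial-space machines or universality of NBA.

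I would obtain EXPTIME-hardness more robustly by exploiting the unstable intervals $[0,1]$. These let nature act as a full adversary, so a turn-based game against an NBA objective (a Church synthesis game) embeds into an IMDP. Deciding such games is EXPTIME-hard, since the NBA must effectively be determinized. Likewise, LTL games are 2EXPTIME-hard (Pnueli--Rosner) and embed in the same way. Whether nature wins surely or only almost surely collapses for such deterministic-choice encodings.

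\textbf{Main obstacle.} I expect the hardest step to be the complexity bookkeeping in the upper bound. One must argue that collapsing the exponentially many corners to supports preserves the qualitative values. One must also check that the gadget reduction plus quasi-polynomial or $m^{O(d)}$ parity-game solving stays within a single exponential when both the automaton and the support choices contribute exponential factors. If needed, I would instead bound everything by a single $2^{\mathrm{poly}(|X|+|\mathcal N|)}$.
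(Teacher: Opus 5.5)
Your proof is correct. The upper bound is essentially the paper's: determinize $\mathcal N$ into a parity automaton with $O(n)$ priorities, build $\mathcal G$, and solve it qualitatively in time exponential in $\mo$ and $\mathcal N$ (doubly exponential for LTL). The support-collapsing step is sound but unnecessary. $\pd(x,a)$ is cut out by $2|X|+1$ constraints in $\mathbb R^{|X|}$, so it already has only $2^{O(|X|)}$ corners. That is why the paper simply lets $\mathcal G$ be exponential in $\mo$, and the ``main obstacle'' you anticipate does not arise.

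The real difference is in the lower bound. The paper inherits it from the MDP case $\overline P=\underline P$ via \cite{Courco95}. This gives hardness already for MDPs, and hence for stable IMDPs. Your polarity worry is fair, though. For automata specifications, the cited result is phrased for the all-schedulers question, while the paper's problem is about $\sup_\pi$, so the paper leaves a short transfer argument implicit.

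Your embedding of turn-based games via $[0,1]$ intervals sidesteps this. With no genuine randomness, determinacy forces every value into $\{0,1\}$. Hence both the $\mathsf{Pr}=1$ and $\mathsf{Pr}=0$ questions inherit the EXPTIME-hardness of NBA-objective games and the 2EXPTIME-hardness of LTL games. The price is twofold. First, the argument only covers unstable IMDPs; that is enough for this theorem but says nothing about SIMDPs. Second, you should support the hardness of NBA-objective games with a citation or an explicit reduction rather than ``the NBA must effectively be determinized.'' For example, use an alternating polynomial-space encoding in which the opponent writes configurations and the NBA guesses an error. Encode letters in binary so that the explicit arena stays polynomial (this matters for the LTL case as well).
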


\begin{proof}
For the upper bound, note that an NBA $\mathcal{N}$ with $s$ states can be determinized into a parity automaton $\mathcal{D}$ with an exponential number of states ($s!^2$) and with linearly many priorities ($2s{+}1$).
The resulting stochastic parity game $\mathcal{G}$ has the same priorities as $\mathcal{D}$ and a number of states that is exponential in both $\mathcal{D}$ and $\mo$.
Qualitative analysis of $\mathcal{G}$ can then be performed in time exponential in the sizes of $\mo$ and $\mathcal{N}$ (doubly exponential in the size of an LTL formula $\varphi$).
Matching lower bounds follow from the complexity of MDP analysis~\cite{Courco95}.
%\qed
\end{proof}

%%%%%
\subsection{Quantitative Analysis}
We now establish three upper bounds for the \emph{quantitative analysis} problem, i.e., computing the maximal satisfaction probability of an IMDP $\mo$ against an NBA~$\mathcal{N}$.

\begin{theorem}
The quantitative analysis of IMDPs against NBA specifications satisfies the following:
\begin{enumerate}
    \item it is in $\textsf{UEXPTIME} \cap \textsf{coUEXPTIME}$, and
    \item it can be solved in co-nondeterministic time which is polynomial in the size of the IMDP and exponential in the size of the NBA.
\end{enumerate}
\end{theorem}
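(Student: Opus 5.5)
The plan is to reduce to the stochastic parity game $\mathcal G$ built above and use the known complexity of quantitative SPG solving. The one subtlety is that $\mathcal G$ has exponentially many random states, because of the corners of $\pd(x,a)$. We avoid building it explicitly by having nature's positional strategy guessed or enumerated, one corner per minimizer state.

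First, determinize the NBA $\mathcal N$ with $s$ states into a deterministic parity automaton $\mathcal D$, as in the proof of the qualitative theorem. $\mathcal D$ has $2^{O(s\log s)}$ states and $2s{+}1$ priorities. Form the product, either the parity IMDP $\mathcal P$ or the SPG $\mathcal G$. By the two lemmas above, its value equals the optimal satisfaction probability of $\mo$ against $\mathcal N$, and pure positional optimal strategies exist for both players. The state space $X\times Q\times A$ is polynomial in $\mo$ and exponential in $\mathcal N$.

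\textbf{Item 2.} A positional nature strategy assigns to each $(x,q,a)$ a corner of $\pd(x,a)$. A corner of a box-constrained simplex is determined by which coordinates sit at their lower or upper bound, with at most one free coordinate. It is therefore representable in size polynomial in $|X|$ and the bit size of the bounds. Consider the threshold problem ``is the value $\ge t$?''. A co-nondeterministic algorithm guesses a positional nature strategy $\theta$. Fixing $\theta$ turns $\mathcal P$ into a finite parity MDP over $X\times Q$. That MDP can be solved in time polynomial in its size via end-component analysis and a linear program for reachability. The algorithm rejects iff the resulting value is $<t$. Correctness follows from positional determinacy: the value is $<t$ iff some positional $\theta$ pushes the maximizer's best response below $t$. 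The running time is polynomial in $\mo$ and exponential in $\mathcal N$ (through $|Q|$), which gives item 2. One more point: ``computing'' the value exactly follows because the value is a rational number whose bit size is bounded via the LP. That rational can be pinned down by binary search over such thresholds, or guessed together with both strategies.

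\textbf{Item 1.} Here we guess both positional strategies $\pi$ and $\theta$, each of exponential size. Then we verify optimality: fixing $\theta$, solve the MDP for the maximizer and check that its value equals the value under $\pi$; symmetrically, fixing $\pi$, solve nature's MDP for the minimizer and check the same. Nature's MDP ranges over the polytopes $\pd(x,a)$, which is still an LP and polynomial in its explicit description. Both checks take exponential time.

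Uniqueness is the main obstacle. The guessed object must be made canonical, for example by demanding a particular optimal strategy pair such as the lexicographically least one, or by certifying the value vector itself, which is unique. I would guess the value vector $v$ together with certificates that make it unique. These are the fixed-point and attractor-rank certificates used to place SPG values in $\mathsf{UP}\cap\mathsf{coUP}$, lifted to exponential size, together with the LP duality check for nature's inner minimization over $\pd(x,a)$. The value vector is unique, so exactly one accepting computation exists when the answer is yes, and symmetrically for the complement. This gives $\textsf{UEXPTIME}\cap\textsf{coUEXPTIME}$.
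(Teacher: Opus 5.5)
Your argument for item~2 is the same as the paper's. You guess a pure positional nature strategy, one corner of $\pd(x,a)$ per minimizer state $(x,q,a)$. This turns the product into a parity MDP of size polynomial in $\mo$ and exponential in $\mathcal N$, which is then solved in polynomial time. Your remark that a corner has at most one coordinate strictly inside its interval, so the guess has polynomial size, is a useful detail the paper leaves implicit.

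Item~1, however, has a gap exactly where you locate the main obstacle. The value vector $v$ is unique, but that does not make the computation unambiguous: every bit of the guessed witness must be forced, and you never say what the extra certificates are.
\begin{itemize}
\item For parity SPGs the value is not the unique solution of a local optimality equation. Figure~\ref{fig-imdps}(left) shows it is not even the value of reaching the a.s.\ winning region. So ``fixed-point and attractor-rank certificates'' does not point to any existing unambiguous certificate.
\item The LP-duality check you add for nature's inner minimization over $\pd(x,a)$ would itself break uniqueness, since optimal dual solutions are generally not unique.
\item The strategy-pair witness you start from gives only $\textsf{NEXPTIME}\cap\textsf{coNEXPTIME}$. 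Verifying lexicographic minimality of an optimal pair would require solving further games, which is not obviously possible in deterministic exponential time.
\end{itemize}

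The missing step is the one the paper uses. Stochastic parity games are polynomial-time interreducible with the other classical stochastic game classes, in particular simple stochastic games (Andersson and Miltersen~\cite{DBLP:conf/isaac/AnderssonM09}), and these are in $\textsf{UP}\cap\textsf{coUP}$. For example, after Condon's reduction to stopping games, the value vector is the unique solution of the optimality equations, so it is itself a unique witness that can be checked deterministically. Apply this to the explicitly expanded game $\mathcal G$ with all corners enumerated. The blow-up is harmless here because exponential size in both $\mo$ and $\mathcal N$ is allowed, and the result is $\textsf{UEXPTIME}\cap\textsf{coUEXPTIME}$. If you still want to keep the polytopes implicit, compute the inner minimum deterministically with the greedy procedure of Algorithm~\ref{alg:nature-best-response} rather than guessing dual certificates.
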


\begin{proof}
For (1), we use that stochastic parity games belong to a class of stochastic games that are polynomial-time interreducible~\cite{DBLP:conf/isaac/AnderssonM09} and that membership of parity games in $\textsf{UP} \cap \textsf{coUP}$ is well established (see, e.g.,~\cite{Jurdzi98} for parity games, which admit a unique value).
For (2), we exploit that both players have optimal pure positional strategies.
To refute an inequality of the form $\mathsf{Pr}_\mo \ge p_0$, it suffices to nondeterministically guess a pure positional minimizer strategy. 
This guess can be made incrementally while constructing the game, guaranteeing that each minimizer state has a single successor.
The resulting structure is then an ordinary parity MDP of size polynomial in~$\mo$ and exponential in~$\mathcal{N}$ (via the determinization of~$\mathcal{N}$ into a parity automaton~$\mathcal{D}$).
Solving such parity MDPs can be done in polynomial time.
%\qed
\end{proof}

As these bounds are expressed in terms of $\mathcal{N}$, the corresponding upper bounds for LTL follow directly from translating an LTL formula into an equivalent, but potentially exponentially larger, NBA~$\mathcal{N}$.

\begin{figure}[t]
\centering
\begin{subfigure}[h]{0.27\textwidth}
    \centering
    \resizebox{\linewidth}{!}{\begin{tikzpicture}[every node/.style={transform shape}]
\node[prob state] (s1) {};
\node[box state,accepting] (v1) [right=1cm of s1] {$v_1$};
\node[prob state] (s2) [right=2cm of v1] {};
\node[box state,accepting] (v2) [right=of s2] {$v_2$};
\node[prob state] (s3) [right=1cm of v2] {};

\node[box state] (v3) [below=1.5cm of s2] {$v_3$};
\node[box state,accepting] (v4) [above=1.5cm of s2] {$v_4$};

% --- Transitions ---
\draw[->, bend right] (s1) to node[below left] {$[\frac{1}{3}, \frac{1}{3}]$} (v3);

\draw[->, bend left] (s1) to node[above left]{$[\frac{2}{3},\frac{2}{3}]$} (v4);

\draw[->, bend left] (s3) to node[below right]{$[\frac{2}{3},\frac{2}{3}]$} (v3);
\draw[->, bend right] (s3) to  node[above right]{$[\frac{1}{3},\frac{1}{3}]$} (v4);

\draw[->] (s2) -- node[above]{$[0,1]$}(v1);
\draw[->] (s2) -- node[above]{$[0,1]$}(v2);
\draw[-] (v2) -- node[above]{$b$} (s3);
\draw[-] (v1) -- (s1);
\draw[-] (v2) to [bend right=-45] node[below]{$a$} (s2);
\draw[->] (v3) edge[loop below] node {} (v3);
\draw[->] (v4) edge[loop above] node {} (v4);
\end{tikzpicture}}
\end{subfigure}
\hfill
\begin{subfigure}[h]{0.2\textwidth}
    \centering
    \resizebox{\linewidth}{!}{%!TEX root = ../main.tex

\begin{tikzpicture}[scale=1, every node/.style={transform shape}]
\node[initial,box state] (v1) {$v_1$};
\node[prob state] (s1) [right=of v1] {};
\node[box state,accepting] (v2) [above=of s1] {$v_2$};

\node[box state] (v3) [below=of s1] {$v_3$};

\draw[->] (s1) -- node[right]{$[0.1, 0.1]$}(v2);
\draw[->] (s1) -- node[right]{$[0.9, 0.9]$}(v3);
\draw[-] (v1) -- node[above]{$b$} (s1);

\draw[->] (v1) edge[loop above] node {$a$} (v1);
\draw[->] (v2) edge[loop left] node {} (v2);
\draw[->] (v3) edge[loop left] node {} (v3);
\end{tikzpicture}}
\end{subfigure}
\caption{(Left) B\"uchi IMDP where the maximizer can win from $v_2$ with probability $\frac{2}{3}$ by always playing $a$, but reaches the almost-sure winning region only with probability $\frac{1}{3}$. 
(Right) Reachability IMDP, where interval value iteration does not terminate without contraction of the upper value. Upper value remains strictly greater than lower value by a fixed margin after few steps.}
%\vspace{-2em}
\label{fig-imdps}
\end{figure}

% \begin{figure}[t]
% \centering
%     \input{Figures/fig-imdp01}\hfill
%     \input{Figures/fig-imdp02}
% \caption{(Left) B\"uchi IMDP where the maximizer can win from $v_2$ with probability $\frac{2}{3}$ by always playing $a$, but reaches the almost-sure winning region only with probability $\frac{1}{3}$. 
% (Right) Reachability IMDP, where interval value iteration does not terminate without contraction of the upper value. Upper value remains strictly greater than lower value by a fixed margin after few steps.}
% %\vspace{-2em}
% \label{fig-imdps}
% \end{figure}

We note that, while we can in principle reduce the problem to a reachability game using the construction from \cite{DBLP:conf/isaac/AnderssonM09}, this reduction introduces gadgets with extremely small probabilities. 
In particular, these are \emph{not} reductions to the reachability of the almost sure (a.s.) winning region from the qualitative analysis, and such reductions would not yield correct results.
This is shown in Figure~\ref{fig-imdps}(left), which also shows why naive value iteration fails for these games.

\begin{example}\textit{(Reachability of a.s.\ Winning Regions vs.\ Winning)}: 
In Figure~\ref{fig-imdps}(left), the a.s.\ winning region is $\{v_4\}$.
If the maximizer wants to maximize her chances of reaching it from $v_2$, her best move is to choose action $b$, because the minimizer can push her back to state $v_2$ if she plays $a$.
Playing $b$ reaches $v_2$ and wins with a chance of $\frac{1}{3}$, while it reaches reaching $v_3$ and loses with a chance of $\frac{2}{3}$.
The best strategy for the maximizer, however, is to always play $a$ from state $v_2$. While this does not guarantee reach her winning (almost) sure winning as the minimizer can always play back to $v_2$, the minimizer loses if he does this forever, and always playing $a$ guarantees a chance of winning of at least $\frac{2}{3}$ to the maximizer.
This shows that maximizing the chance of reaching an a.s.\ winning region is not the same as maximizing the chance of winning.
\end{example}

\subsection{Good-for-IMDP automata}
We have presented the product construction using deterministic automata only for simplicity of exposition. The same construction applies equally well to good-for-games automata (GFG)~\cite{Henzin06,DBLP:conf/fsttcs/BokerKLS20} that recognize the same language, with the maximizer choosing an automaton successor $q'\in\Delta_{\mathcal N}(q,L(x))$ together with the action $a$,
before the minimizer (nature) chooses the distribution of transition probabilities. Conversely, labeled stochastic games can be encoded as IMDPs in a straightforward manner. It follows that the class of automata suitable for IMDP analysis cannot be more general than the class of automata suitable for games.
An automaton is good for a class of IMDPs if, for every IMDP in that class, the best guaranteed probability of generating a word in the automaton’s language coincides with the best guaranteed probability of acceptance in the product, where the maximizer resolves the automaton’s nondeterminism.
GFM ensure equality between the two probabilities for stable IMDPs, while GFG ensure this equality for general IMDPs.

\iffalse
\paragraph{\bf Qualitative analysis of SIMDPs}
The qualitative analysis of SIMDPs is the same as for MDPs.

\begin{theorem}
Given a nondeterministic B\"uchi automaton $\mathcal N$ and an SIMDP $\mo$, the qualitative analysis of $\mo$ can be done in time polynomial in $\mo$ and exponential in $\mathcal N$, and in time polynomial in $\mo$ and $\mathcal N$ if $\mathcal N$ is good for MDPs.
\end{theorem}

[There is some completeness formulation in \cite{Courco95}, looking for the right formulation.]

\begin{proof}
We first observe that, for the qualitative analysis, only the support of the probability distribution matters, not the probabilities themselves, cf.~\cite{Chatte04,Hahn16}.
As a consequence, the choice of the environment in the game are irrelevant and we can instead work with any distribution from $\pd(x,a)$, and thus with a consistent one, which can therefore be chosen arbitrarily and consistently for all states $(x,q,a)$ (i.e.\ independent of $q$).

This brings us back to the qualitative analysis of ordinary MDP. For the complexity and hardness for ordinary NBA see \cite{Courco95,Hahn15}, for good-for-MDP automata see \cite{Hahn20}.
\end{proof}
\fi

\section{Value Iteration and Strategy Improvement}
\label{sec:algo}
We propose two algorithms for the verification of stable IMDPs against $\omega$-regular properties. We have implemented these algorithms and report experimental results in \Cref{sec:experiments}.
The first algorithm is a value iteration algorithm, while the second one is a strategy improvement variant of. The (interval) value iteration algorithm keeps track of upper and lower values, and deploys a contraction factor for the upper value. The strategy improvement differs from the existing strategy improvement algorithms for IMDPs, and iterates over nature rather than policies.  The first step in both the algorithms is a qualitative analysis of the SIMDP. This procedure is identical to MDP qualitative analysis. 

\paragraph{Qualitative analysis}
The qualitative analysis is performed by ignoring the exact probability values, since in SIMDP all probabilities are positive. Recall that this is essentially computing the almost-sure winning regions and sure losing regions of some reachability game, that we describe shortly.
Given an SIMDP $\mo$ and an automaton $\Aa = (\alphabet, Q,q_0,\Delta_{\mathcal N},\alpha)$, we compute the product $\mo \times \Aa$, with states $X \times Q$. 
 On the underlying MDP structure comprising of all positive probabilities in $\mo$,    
we first identify the accepting maximal end components of $\mo \times \Aa$ and then collapse them into a target (winning sink)  $T$, thus forming a reachability game.
We compute the almost sure winning region $W^+$, from where $T$ can be reached almost surely.
The sure losing region $W^-$ are those states, from which the targets cannot be reached under any policy and collapse them into a losing sink. These computations determine initial values for our value iteration step.

\begin{restatable}{theorem}{qualitativeSIMDP}\label{thm:qualitative-SIMDP}
Given an SIMDP $\mo$ and automaton $\mathcal N$, the qualitative analysis of $\mo$ 
can be done in time polynomial in $\mo$ and exponential in $\mathcal N$ if $\mathcal N$ is an NBA. If $\mathcal N$ is good for MDPs, this can be done in time polynomial in $\mo$ and $\mathcal{N}$.
\end{restatable}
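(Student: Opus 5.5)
The proof reduces qualitative analysis of an SIMDP to qualitative analysis of an ordinary MDP with the same support graph, and then invokes the known complexity results for MDPs. The key observation is that for qualitative questions (value $0$ or value $1$), only the support of the successor distributions matters, not the exact probabilities; this is standard for stochastic parity games, cf.~\cite{Chatte04,Hahn16}. In a stable IMDP, every feasible distribution $p\in\pd(x,a)$ has support exactly $\{x' : \overline{P}(x,a,x')>0\}$. Indeed, if $\overline{P}(x,a,x')=0$ then $p(x'|x,a)=0$, and otherwise stability gives $p(x'|x,a)\geq\underline{P}(x,a,x')>0$. So nature cannot change the support graph, whatever it does, and only the sizes of the positive probabilities are adversarial.

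The plan proceeds in three steps.

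\textbf{Step 1.} Fix one feasible distribution $p_{x,a}\in\pd(x,a)$ for each pair $(x,a)$, for instance any point in the relative interior of the polytope. This yields an ordinary MDP $\mo'$ with the same support graph as $\mo$.

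\textbf{Step 2.} Show that for every state the sets $\{x:\mathsf{Pr}_{\mo_x}=1\}$ and $\{x:\mathsf{Pr}_{\mo_x}=0\}$ coincide with the corresponding sets for $\mo'$.
\begin{itemize}
\item For value~$1$, I would use the end-component characterisation. A pure finite-memory policy that is almost-surely winning in $\mo'$ reaches and stays in accepting end components of the product. The argument only uses that each positive-support edge is taken with probability bounded below by some $\varepsilon>0$, and in $\mo$ this holds for every nature with $\varepsilon=\min\{\underline{P}(x,a,x') : \underline{P}(x,a,x')>0\}$. So the same policy wins almost surely against every nature.
\item Conversely, if the policy wins almost surely in $\mo$ against every nature, then in particular it does so against the memoryless nature that always plays $p_{x,a}$, so it wins almost surely in $\mo'$.
\item For value~$0$, the sets where no policy achieves positive probability depend only on graph reachability of accepting end components, and these are identical in $\mo$ and $\mo'$.
\end{itemize}

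\textbf{Step 3.} Conclude with the complexity.
\begin{itemize}
\item If $\mathcal N$ is an NBA, we can work with a deterministic parity automaton of exponential size, or directly use the approach of~\cite{Courco95}. The product with $\mo'$ is polynomial in $\mo$ and exponential in $\mathcal N$, and qualitative analysis of parity MDPs via maximal end components is polynomial in the product.
\item If $\mathcal N$ is good-for-MDPs, then by definition it may be used directly in the product with the MDP $\mo'$~\cite{Hahn20}. The product then has size $|X|\cdot|Q|$, and MEC-based qualitative analysis runs in polynomial time.
\end{itemize}
Combined with Step~2, applied to the product (which is again stable, since the automaton component is updated deterministically or by the maximizer), this gives the claim.

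The main obstacle is the transfer in Step~2 when $\mathcal N$ is only good-for-MDPs rather than GFG. Nondeterminism is resolved by the maximizer, and we must argue that a resolution which is almost-surely winning against the fixed MDP $\mo'$ remains almost-surely winning against every adversarial nature. I would handle this by arguing on the product directly. The qualitative winning regions of the product $\mo\times\mathcal N$ are determined by its support graph, which is the same for all natures because of stability. Hence the almost-sure winning region computed on $\mo'\times\mathcal N$ is the almost-sure winning region of $\mo\times\mathcal N$ in the game sense. The remaining piece is that the GFM property on $\mo'$ ensures this region correctly reflects the language: product value $1$ if and only if the language is satisfied with probability $1$.
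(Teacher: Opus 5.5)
Your proposal is correct and takes essentially the paper's route: fix one distribution from $\Delta(x,a)$ per state--action pair, independently of the automaton state. Stability makes the supports nature-independent, qualitative questions depend only on supports, and the known MDP results for NBA and good-for-MDP automata then give the complexity bounds. You give more justification than the paper, which only cites the support-only principle: the uniform lower bound $\varepsilon$, and the GFM property applied only to the fixed MDP $\mo'$. One small fix: in the converse value-$1$ direction, argue $\sup_\pi \mathsf{Pr}^{\pi}_{\mo'} \ge \mathsf{Pr}_{\mo_x} = 1$ and then use that value $1$ in a finite parity MDP implies almost-sure winning, instead of assuming one policy wins almost surely against every nature (the value is a $\sup\inf$ and need not be attained a priori).
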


\noindent The corresponding matching lower bounds are inherited from the MDP case~\cite{Courco95}.

In the second step we perform quantitative analysis of SIMDPs. Similar to MDPs, this reduces to a reachability game but with an added adversarial nature.

\subsection{Interval Value Iteration}
Our first quantitative procedure is an interval value iteration algorithm (\Cref{alg:value-iteration-new}), in the spirit of the interval iteration method of~\cite{Haddad18}. In the first step, we identify two regions: $W^+$, the almost-sure winning region, and $W^-$, the sure losing region. The algorithm then initializes, for each state, a lower and an upper value. For states in $W^+$, both values are set to $1$; for states in $W^-$, both are set to $0$; and for all other states, the lower and upper values are initialized to $0$ and $1$, respectively.

In each iteration, the values of states in $W^+$ and $W^-$ remain fixed. For every other state, the values are updated as follows. First, for each $(s,a)$, nature chooses transition probabilities $\theta(s,a,s')$ adversarially within the prescribed intervals using the local best-response algorithm of~\cite{nilim-ghaoui}, applied separately to the current lower and upper values. This minimizes
$\sum_{s' \in S} \theta(s,a,s') V(s')$,
where $V$ is instantiated by the current lower or upper value function (\Cref{alg:nature-best-response}). The lower and upper values are then updated in the usual way by taking the weighted average of successor values. For the upper bound, we additionally multiply by a contraction factor $\alpha < 1$.

The algorithm terminates once the upper value is no greater than the lower value at every state. Standard arguments show that \Cref{alg:value-iteration-new} terminates after finitely many iterations. The contraction factor is essential here: without contraction, the upper value at $v_1$ remains $1$ in every iteration, while the lower value quickly drops to $0.1$, so that the gap eventually stabilizes. With contraction, however, the upper value eventually falls below $0.1$.

% In each iteration, the values of states in $W^+$ and $W^-$ remain unchanged. For all other states, the values are updated as follows. First, for each $(s,a)$ the transition probabilities $\theta(s,a,s')$ of nature are chosen adversarially within their respective intervals using the best local nature Algorithm~\cite{nilim-ghaoui} for lower and upper values respectively.
% This minimizes $\sum_{s' \in S} \theta(s,a,s') V(s')$ where $V(s')$ are lower or upper values respectively (\Cref{alg:nature-best-response})
% After this step, the lower(resp.\ upper) value is updated in the conventional way with the weighted average of the previous lower (resp.\ upper) values. 
% The upper value is also factored by a contraction factor $\alpha < 1$.

% The algorithm terminates when the upper value is no more than the lower value at every state.
% Using standard techniques it can be shown that \Cref{alg:value-iteration-new} terminates after a finite number of steps. 
% Without contraction, the upper value of $v_1$ always stays $1$ in every iteration while the lower value quickly falls to $0.1$.
% The difference of upper and lower values remains fixed after some step. But with contraction, the upper value eventually falls below $0.1$.
% This algorithm also provides basic guarantees: by taking $\alpha$ close enough to $1$, the final value is close enough to the actual value.
The algorithm also provides a basic approximation guarantee: by choosing $\alpha$ sufficiently close to $1$, the final value can be made arbitrarily close to the true value.
\begin{restatable}{theorem}{viTermination}
\label{thm:termination}
For an IMDP $\mo$, \Cref{alg:value-iteration-new} always terminates with a value $v \leq \textsf{Pr}_{\mo}$. Moreover, for every $\varepsilon > 0$, there exists $\alpha < 1$ such that
$v \geq (1-\varepsilon)\textsf{Pr}_{\mo}$.
\end{restatable}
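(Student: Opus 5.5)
We prove the theorem by analysing the two Bellman-type operators that Algorithm~\ref{alg:value-iteration-new} iterates on the reachability game obtained from the qualitative step. Every state in $W^+$ is fixed at $1$ and every state in $W^-$ at $0$; for the remaining states,
\[
L(V)(x)=\max_{a}\min_{p\in\pd(x,a)}\sum_{x'}p(x')V(x'), \qquad U_\alpha(V)(x)=\alpha\,\max_{a}\min_{p\in\pd(x,a)}\sum_{x'}p(x')V(x').
\]
The inner minimum is exactly what the local best-response algorithm of~\cite{nilim-ghaoui} computes.

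\textbf{Step 1 (monotonicity and soundness of the lower value).} Both operators are monotone. The lower iterates start at the vector that is $1$ on $W^+$ and $0$ elsewhere, so they are nondecreasing. By induction on $n$, the $n$-th lower iterate equals the value of the game "reach $T$ within $n$ steps", in which the maximizer picks actions and nature picks distributions in $\pd$. This value is at most the reachability value of the game. The reachability value equals $\textsf{Pr}_{\mo}$ by the product lemmas of Section~\ref{sec:goodForIMDP} and the collapse of accepting end components in the qualitative analysis (Theorem~\ref{thm:qualitative-SIMDP}). Hence any value $v$ read off the lower iterate at termination satisfies $v\le \textsf{Pr}_{\mo}$.

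\textbf{Step 2 (termination).} Outside $W^+\cup W^-$ the operator $U_\alpha$ is an $\alpha$-contraction in the sup norm. It is defined as an $\alpha$-scaling of a nonexpansive max--min operator, with fixed boundary values on $W^\pm$. By Banach's theorem the upper iterates therefore converge to the unique fixed point $u_\alpha$. The upper iterates start from the vector that is $1$ off $W^-$, which is a post-fixed point, so they are nonincreasing. The lower iterates converge to the reachability value $r=\textsf{Pr}_{\mo}$.

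The main obstacle is to show $u_\alpha(x)<r(x)$ strictly at every state outside $W^+\cup W^-$, so that after finitely many steps the upper iterate lies below the lower iterate. This relies on the uniform gap: there is $\eta>0$ with $u_\alpha+\eta\le r$ on these states. First, $u_\alpha$ is the value of the discounted game in which every step survives with probability $\alpha$. This value is at most $r$: any fixed pair of strategies reaches $T$ with probability at most its undiscounted reachability probability, and taking sup/inf preserves the inequality. Second, the inequality is strict. From a state $x$ outside $W^+\cup W^-$ we have $r(x)>0$ (since $x\notin W^-$), and reaching $T$ under optimal play takes at least one step, which incurs a factor $\alpha<1$. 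Because the state space is finite, the strict gap is uniform. The lower iterates converge monotonically to $r$, and the upper iterates converge monotonically to $u_\alpha<r$, so after finitely many iterations upper $\le$ lower at every state.

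\textbf{Step 3 (approximation).} As $\alpha\to1$, the fixed point $u_\alpha$ converges to $r$. This follows because the discounted game values form a monotone family bounded by $r$, and their limit is a fixed point of the undiscounted operator lying below $r$. Positional optimality for reachability games lets us fix a finite family of positional strategy pairs: for each fixed positional pair, the discounted reach probability converges to the undiscounted one as $\alpha\to1$. With only finitely many such pairs, convergence is uniform. At termination the reported value $v$ lies between $u_\alpha$ and $r$, so $v\ge u_\alpha$. Choosing $\alpha$ with $u_\alpha\ge(1-\varepsilon)r$ yields $v\ge(1-\varepsilon)\textsf{Pr}_{\mo}$.
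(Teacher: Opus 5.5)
Your proof is correct and has the same overall structure as the paper's:
\begin{itemize}
\item the lower iterates are bounded-horizon reachability values increasing to $r$;
\item the upper iterates decrease to the discounted value $u_\alpha$;
\item the strict gap $u_\alpha<r$ outside $W^+\cup W^-$ forces termination;
\item at termination $V^-\ge V^+\ge u_\alpha$, so it remains to make $u_\alpha$ close to $r$.
\end{itemize}
Your termination step is more explicit than the paper's, which only asserts that the discounted limit lies below the undiscounted one. Writing $\tau$ for the hitting time of $W^+$, your argument amounts to using $\tau\ge 1$ off $W^+$ to get $u_\alpha\le\alpha r$ strategy-wise. Combined with $r>0$ off $W^-$, this gives the strict gap, and $r>0$ off $W^-$ is exactly where stability enters.

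The real difference is in the approximation step. The paper argues quantitatively. It fixes a horizon $n$ at which the $n$-step value is at least $(1-\varepsilon/2)\textsf{Pr}_{\mo}$, then takes any $\alpha$ with $\alpha^n\ge 1-\varepsilon/2$, using $\alpha^{\tau}\ge\alpha^{n}\mathbf{1}_{\{\tau\le n\}}$. This is elementary, needs no determinacy or fixed-point facts about the discounted game, and gives an explicit threshold for $\alpha$. You instead prove $u_\alpha\to r$ as $\alpha\to 1$ by a continuity argument. That is non-constructive, but it works directly with the game values rather than the paper's somewhat loose ``for an optimal positional policy'' phrasing.

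Your first justification in Step~3 (the limit is a fixed point of $L$ lying below $r$) tacitly needs $r$ to be the \emph{least} nonnegative fixed point of $L$ with the given boundary values. This does follow from your own Step~2 claim $L^n(V^-_0)\to r$, since every nonnegative fixed point $W$ satisfies $W=L^nW\ge L^n(V^-_0)$. Your second justification, via the finitely many positional pairs with nature restricted to vertices of $\pd(x,a)$, is self-contained.
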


\begin{algorithm}[t]
\caption{Value Iteration for IMDPs with A.S. and Losing Regions}
\label{alg:value-iteration-new}
\LinesNumbered
\KwIn{Regions $W^+$, $W^-$; discount factor $\alpha<1$}
\KwOut{Approximate reachability values $V^-$}

\ForEach{$s$}{
    $(V^-(s),V^+(s)) \gets
    \begin{cases}
        (1,1) & \text{if } s\in W^+,\\
        (0,0) & \text{if } s\in W^-,\\
        (0,1) & \text{otherwise.}
    \end{cases}$
}

\While{$\exists s \text{ such that } V^-(s)<V^+(s)$}{
    \ForEach{$s\notin W^+\cup W^-$}{
        \ForEach{action $a$ enabled at $s$}{
            compute $(\theta^-_{s,a,s'},\theta^+_{s,a,s'})$ via
            Algorithm~\ref{alg:nature-best-response}\;
        }
        $V^-(s)\gets \max_a \sum_{s'} \theta^-_{s,a,s'}V^-(s')$\;
        $V^+(s)\gets \max_a \alpha \sum_{s'} \theta^+_{s,a,s'}V^+(s')$\;
    }
}
\Return{$V^-$}\;
\end{algorithm}

\begin{algorithm}[t]
\caption{Best Local Nature Subroutine}
\label{alg:nature-best-response}
\KwIn{Values $V(S)_{s \in S}$}
\KwOut{Best local nature $\theta$ minimizing $\sum_{s' \in S} \theta(s,a,s') V(s')$ for each $s,a$}
\ForEach{$(s,a,s')$}{
        $\theta(s,a,s') \gets \underline{P}(s,a,s')$\;
    }
Sort states w.r.t. $V(s)$ into $s'_0,\dots,s'_{n-1}$ \; 
\ForEach{$(s,a)$}{
	$i \gets 0$\; 
	\While{$\sum_{s' \in S} \theta(s,a,s')$ < 1}{
$\begin{aligned}[t]
\theta(s,a,s'_i)
&\gets \theta(s,a,s'_i) + \\
&\quad \min\Bigl(
    \overline{P}(s,a,s'_i)-\theta(s,a,s'_i),\\
&\qquad 1-\sum_{s'\in S}\theta(s,a,s')
\Bigr)
\end{aligned}$
		$i \gets i+1$\; 
	}
}
\Return{$\theta$}\;
\end{algorithm}

\subsection{Nature Strategy Improvement Algorithm}
In our~\Cref{alg:policy-iteration-new}, we use strategy improvement for exact quantitative analysis. 
After solving the almost sure winning and sure losing regions,
we use strategy improvement, where each iteration (lines 4-7) consists of finding a new nature (lines 5,6) using the same subroutine \Cref{alg:nature-best-response} used for value iteration, followed by finding an optimal policy for this nature (line 7).

Existing strategy improvement \cite{nilim-ghaoui} for IMDP do this reversely: they iterate over policies and then look for the best nature response for these policies in every iteration.  %(\Cref{alg:policy-iteration-classic} in \Cref{app:algos}).
While this difference seems minor, computing the globally best response of nature is expensive: it entails choosing an optimal corner for nature among exponentially many vertices in the polytope formed by the $\underline{P}$ and $\overline{P}$ values, and this creates a linear program of exponential size (one variable for every corner of each polytope), whereas the linear program we generate in line 7 incurs no blow-up.

\begin{restatable}{theorem}{policyItNew}
\label{thm:policy-iteration-new}
\Cref{alg:policy-iteration-new} terminates with the optimal strategy and values. 
\end{restatable}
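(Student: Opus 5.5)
We analyse \Cref{alg:policy-iteration-new} as strategy improvement for the \emph{minimizer} (nature) on the reachability game obtained after the qualitative step. Targets $W^+$ are fixed to value $1$ and $W^-$ to value $0$. For a fixed positional nature $\theta$ the game becomes an ordinary MDP, so the maximizer's best response and its value vector $V_\theta$ are obtained by the linear program of line~7. Correctness of that LP rests on the stability assumption and the preprocessing. Since every transition allowed by $\overline{P}$ has a positive lower bound, every nature induces the same support graph. Consequently the states outside $W^+\cup W^-$ have the same end-component structure under every $\theta$, and no maximizer policy can stay forever outside $W^+\cup W^-$ except inside non-accepting end components, which have already been collapsed or lie in $W^-$. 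Hence $V_\theta$ is the unique solution of the Bellman optimality equation restricted to policies that leave these traps, and the LP computes it exactly.

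\textbf{Monotonicity.} Let $\theta'$ be the nature computed by \Cref{alg:nature-best-response} against $V_\theta$. For every $(s,a)$ it satisfies
\[
\sum_{s'}\theta'(s,a,s')V_\theta(s')\le\sum_{s'}\theta(s,a,s')V_\theta(s').
\]
The subroutine is the exact local minimizer: it is the greedy fractional-knapsack solution over the polytope $\pd(s,a)$, which is optimal, and we may assume ties are broken in favour of the previous $\theta$. It follows that $V_\theta$ is a super-fixpoint of the Bellman operator of the MDP induced by $\theta'$. Because every policy in the induced MDP reaches $W^+\cup W^-$ or a losing trap with probability $1$, that operator has a unique fixpoint and is monotone, so iterating it gives $V_{\theta'}\le V_\theta$. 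If equality holds at every state, then no local improvement exists. We show strictness whenever $\theta'\neq\theta$ by a standard argument: if some state has a strict local decrease, the decrease propagates to $V_{\theta'}$ at that state, again because absorption into $W^+\cup W^-$ occurs almost surely.

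\textbf{Termination.} The algorithm only ever selects natures that are corners of the polytopes, namely the outputs of \Cref{alg:nature-best-response}. There are finitely many positional corner natures, and values strictly decrease between them, so no nature repeats and the algorithm terminates.

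\textbf{Optimality.} At termination $\theta$ is locally optimal against $V_\theta$, and $V_\theta$ is the maximizer's optimal response. Together these say that $V_\theta$ satisfies the min--max Bellman equation of the reachability game. The game has a unique fixpoint under the same absorption argument, and positional strategies suffice by the lemma on $\mathcal P$ and $\mathcal G$. Hence $V_\theta=\textsf{Pr}_\mo$, the pair $(\pi,\theta)$ is a pair of optimal strategies, and the value of the original parity objective is obtained through the reduction of the qualitative step (\Cref{thm:qualitative-SIMDP}).

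The main obstacle is the uniqueness of fixpoints, which is needed for both monotonicity and optimality. Reachability games generally have spurious fixpoints caused by cycles among states of value strictly between $0$ and $1$. We plan to handle this through stability: the support graph does not depend on nature, and the collapse of end components in the qualitative step removes all traps in which the maximizer could stay without reaching a sink. This is precisely where the statement fails for unstable IMDPs.
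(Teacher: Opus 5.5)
The gap is your claim that the fixed points are unique. Your monotonicity step uses it, and your optimality step depends on it entirely. It is false for the game the algorithm actually runs on. The qualitative step collapses only the \emph{accepting} end components into the target, and $W^-$ contains only the states from which the target is unreachable. Non-accepting end components from which the target is reachable therefore stay outside $W^+\cup W^-$, and stability does not remove them. For example, let $s$ be non-accepting, with an action $a$ looping on $s$ with interval $[1,1]$, and an action $b$ going to the target and to a losing sink with interval $[\frac12,\frac12]$ each. This IMDP is stable, $s\notin W^+\cup W^-$, and the value of $s$ is $\frac12$. Yet every $c\in[\frac12,1]$ is a fixed point of both $T_\theta$ and the min--max operator $F$.

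Monotonicity is easily rescued. $V_{\theta'}$ is the \emph{least} fixed point of the monotone operator $T_{\theta'}$, which is what the standard LP computes, so $T_{\theta'}(V_\theta)\le V_\theta$ gives $V_{\theta'}\le V_\theta$ by Knaster--Tarski, exactly as in the paper. Your optimality step, however (``$V_\theta$ solves the min--max equation, whose solution is unique, hence $V_\theta=\textsf{Pr}_\mo$''), is left without support. Your claim that $\theta'\neq\theta$ forces a strict decrease is also false: a local improvement at an action that is strictly suboptimal for the maximizer changes no value. You do not need it, though. The loop condition $\sum_s V'(s)\ge\sum_s V(s)$ means every non-final iteration strictly decreases $\sum_s V(s)$, so no value vector, and hence no nature, repeats.

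The missing idea for optimality is to exhibit a policy rather than appeal to uniqueness. At termination you have $V=V^{\theta}=F(V)$. The inequality $V\ge V^*$ is immediate, because fixing nature can only help the maximizer. For $V\le V^*$, let $\pi$ be a positional policy that \emph{attains} $V^\theta$ against $\theta$. An arbitrary maximizer of the Bellman equation is not enough, since in the example it may select the loop $a$. Treat $W^+$ and $W^-$ as absorbing. Then every bottom SCC of the chain induced by $(\pi,\theta)$ lies in $W^+$ or in the value-$0$ region, which is closed under all actions.

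Stability is used at exactly this point. Under $\pi$, every nature induces the same support graph, and every transition has probability at least the smallest positive lower bound. Hence the play is absorbed almost surely, even against history-dependent natures. Since $V(s)=\sum_t\theta(s,\pi(s),t)V(t)=\min_{p\in\pd(s,\pi(s))}\sum_t p(t)V(t)$, the sequence $V(X_n)$ is a bounded submartingale under $\pi$ against every nature, and its limit is the indicator of reaching $W^+$. Therefore $\pi$ guarantees $V$, so $V=V^*$ and $\pi$ is optimal. For unstable IMDPs, it is this absorption step that fails.

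The paper's own proof is weak at the same place. It derives $V=V^*$ from $V$ being a fixed point of $F$, while $V^*$ is defined as the \emph{least} fixed point; that only yields $V^*\le V$. The argument above closes that step as well.
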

% \red{TO ADD PROOF IN APPENDIX}

In an ideal setting with exact rational arithmetic, the natural termination condition would be that the current policy remains optimal. Equivalently, one could stop when the value function does not change, for example when $\sum_{s \in S} V'(s) = \sum_{s \in S} V(s)$. (This is slightly more expensive as it may require solving the final linear program twice.)
However, in practical floating-point implementations, numerical precision can be insufficient to reliably determine whether nature should assign more probability mass to a successor $s$ or to $s'$ when their current values are extremely close.
Such marginal differences are purely due to numerical effects. They can lead to the solutions going in cycles with numerically indistinguishable values from the linear programs (e.g.\ one variable valuation alternating between 2 and 1.9999999999999999999999) that mask convergence, as it can lead to $\sum_{s \in S} V'(s) > \sum_{s \in S} V(s)$; we therefore use $\sum_{s \in S} V'(s) \geq \sum_{s \in S} V(s)$ as our stopping criterion. 
\begin{algorithm}[t]
\caption{Nature Strategy Improvement}
\label{alg:policy-iteration-new}
\LinesNumbered
\KwIn{$W^+,W^-$}
\KwOut{$\pi^*,V$}

Choose arbitrary $\theta'$\;
Compute best response $(\pi',V')$ to $\theta'$ via LP\;

\Repeat{$\sum_{s\in S}V'(s)\ge \sum_{s\in S}V(s)$}{
    $\theta,\pi,V \gets \theta',\pi',V'$\;
    \ForEach{$(s,a)$}{
        set $\theta'(s,a)$ to the best local nature for $V$\;
    }
    Compute best response $(\pi',V')$ to $\theta'$ via LP\;
}
    $\pi,V \gets \pi',V'$\;
\Return{$\pi,V$}\;
\end{algorithm}

\section{Experimental Evaluation}
\label{sec:experiments}
We implemented \Cref{alg:value-iteration-new,alg:policy-iteration-new}~\cite{experiment}. We evaluated the methods on two benchmarks from \textsc{DynAbs}~\cite{badings2023robust}, which constructs IMDP abstractions of stochastic dynamical systems.
We consider two case studies: (i) \emph{UAV motion control}, where a drone must be stabilized and guided safely to a target under non-Gaussian disturbances, and (ii) \emph{Shuttle control}, where a spacecraft must track a desired trajectory under uncertain noise. For the UAV benchmark, we varied the partition granularity to study scalability, obtaining models with \(787\) to \(2{,}430\) states and \(0.52\times 10^6\) to \(14.9\times 10^6\) transitions. The Shuttle benchmark contains \(3{,}200\) to \(7{,}200\) states and \(1\times 10^6\) to \(6\times 10^6\) transitions. Each IMDP was checked for stability to ensure that our quantitative analysis applies.

For both case studies, we considered a collection of \(\omega\)-regular specifications expressed in LTL and translated them into good-for-MDP B\"uchi automata using \textsc{Spot}~\cite{DuretLutz}. Figure~\ref{fig:combined_results} reports the runtime of SI, VI, and qualitative analysis for the UAV and Shuttle benchmarks as a function of the number of transitions, with each curve corresponding to a specification.
Overall, the results show that our framework scales to large IMDPs while supporting a broad range of temporal specifications. As expected, runtime increases with model size for all three procedures. Qualitative analysis remains relatively inexpensive, whereas quantitative analysis dominates the overall cost. SI is sometimes slower in practice because of its higher per-iteration cost, but its nature-improvement scheme makes its runtime comparable to VI.

% The results demonstrate that our approach scales to large IMDPs and supports a broad class of complex temporal specifications, with varying size and structure, including nondeterministic Büchi automata; Which shows practical applicability of our framework for realistic stochastic systems where both uncertainty and requirements must be handled simultaneously.
% Across all benchmarks, the runtime of SI, VI, and qualitative analysis grow with the size of the IMDP. Qualitative analysis remains relatively inexpensive, while the quantitative methods dominate the overall computation time.
% Although SI can be slower in practice, our iteration scheme over nature makes its performance comparable to VI.

The experiments also highlight the influence of the specification on computational cost. Simple reachability formulas, such as \(\mathbf{F}\,\mathit{reach}\), are handled efficiently, while formulas with nested temporal operators or combined safety and liveness requirements lead to larger automata and higher runtimes. In particular, specifications such as \(\mathbf{FG}(\mathit{reach}) \land \mathbf{G}\neg(\mathit{deadlock})\) are consistently among the most expensive.

% A key observation is the impact of LTL specifications on computational cost. Simple reachability properties (e.g., $\mathbf{F}\textit{(reach)}$) result in relatively fast convergence. 
% In contrast, specifications involving nested temporal operators and safety constraints induce more complex automata and higher computational cost.
% In particular, formulas that combine safety and liveness constraints 
% (such as $\mathbf{FG}(\textit{reach}) \land \mathbf{G}\neg \textit{(deadlock)}$) consistently exhibit the highest runtimes.

Supporting nondeterministic B\"uchi automata is important in practice, since determinization may incur exponential blow-up for many LTL formulas. By avoiding determinization, our framework remains expressive while retaining scalability, and can therefore handle richer specifications that would otherwise be impractical.

% Supporting nondeterminism is crucial in practice, as many LTL specifications cannot be translated into deterministic automata without an exponential blow-up. By avoiding determinization, our approach remains both expressive and scalable, enabling the analysis of richer specifications that would otherwise be infeasible. 
% Nondeterministic transitions allow the automaton to represent multiple possible evolutions of the specification simultaneously, which increases the branching in the product. While this can lead to larger state spaces and higher computational cost, it is essential for correctly capturing the semantics of general LTL specifications.

All experiments were run on the Barkla HPC cluster using AMD EPYC 9334 processors, with 8 CPUs and 64\,GB RAM per job under Python~3.12.

\begin{figure}[h]
        \centering
          \caption{Average initial-state values for VI lower bounds, VI upper bounds, and SI across iterations for an IMDP with 1600 states under $\mathbf{GF}(\textit{reach})$.}
        \label{fig:val-iter-uperlower-init}
        \includegraphics[width=\columnwidth,trim={0 0.3cm 0 2cm},clip]{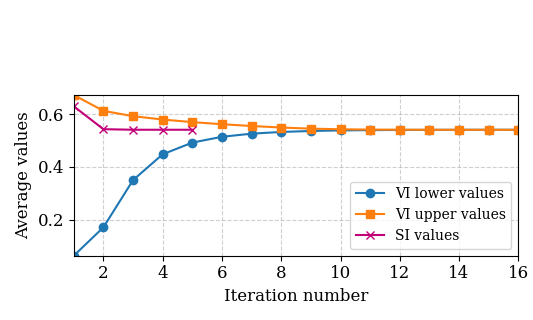}
    \end{figure}
    
\subsection{VI lower/upper bound values vs SI upper bound} 
To compare VI and SI more directly, we examine their convergence behavior. SI typically converges in fewer iterations, although each iteration is more expensive, leading to similar overall runtimes. Figure~\ref{fig:val-iter-uperlower-init} shows, for an IMDP with \(1{,}600\) states under the specification \(\mathbf{GF}(\mathit{reach})\), the average over the initial states of the VI lower bound, the VI upper bound, and the SI value at each iteration. The VI lower bound ($V^-$) increases monotonically, while the VI upper bound ($V^+$) decreases until the two meet, at which point \Cref{alg:value-iteration-new} terminates. The SI curve stabilizes rapidly, reaching its fixed point after about five iterations. The limit of the VI lower and upper bounds matches the SI solution. Thus, VI approaches the same value from below and above, whereas SI computes the exact value for each state.

% To compare methods, we can see SI and VI scaling trends. SI typically converges in fewer iterations but incurs a higher per-iteration cost, resulting in comparable overall runtimes. This confirms that both approaches are viable in practice, with trade-offs depending on the structure of the model and specification. Fig.~\ref{fig:val-iter-uperlower-init} shows the averaged value of VI lower bounds, VI upper bounds, and SI over the set of initial states across each iteration for an IMDP with $1{,}600$ states under the specification $\mathbf{GF}(\textit{reach})$. The VI lower bound increases and the VI upper bound decreases until they cross; Alg.~\ref{alg:value-iteration-new} stops at that point. The magenta curve shows the value obtained by SI at each of its iterations. SI converges fast: after 5 iterations, its value is already stable. The limit of VI’s lower/upper bounds coincides with the SI fixed point. VI eventually hits the same value, but SI gives the exact value for each state.

\begin{figure*}[h!]
    \centering

    % ----------- ROW 1: UAV -----------

    \begin{subfigure}[b]{0.32\textwidth}
        \centering
        \includegraphics[width=\linewidth]{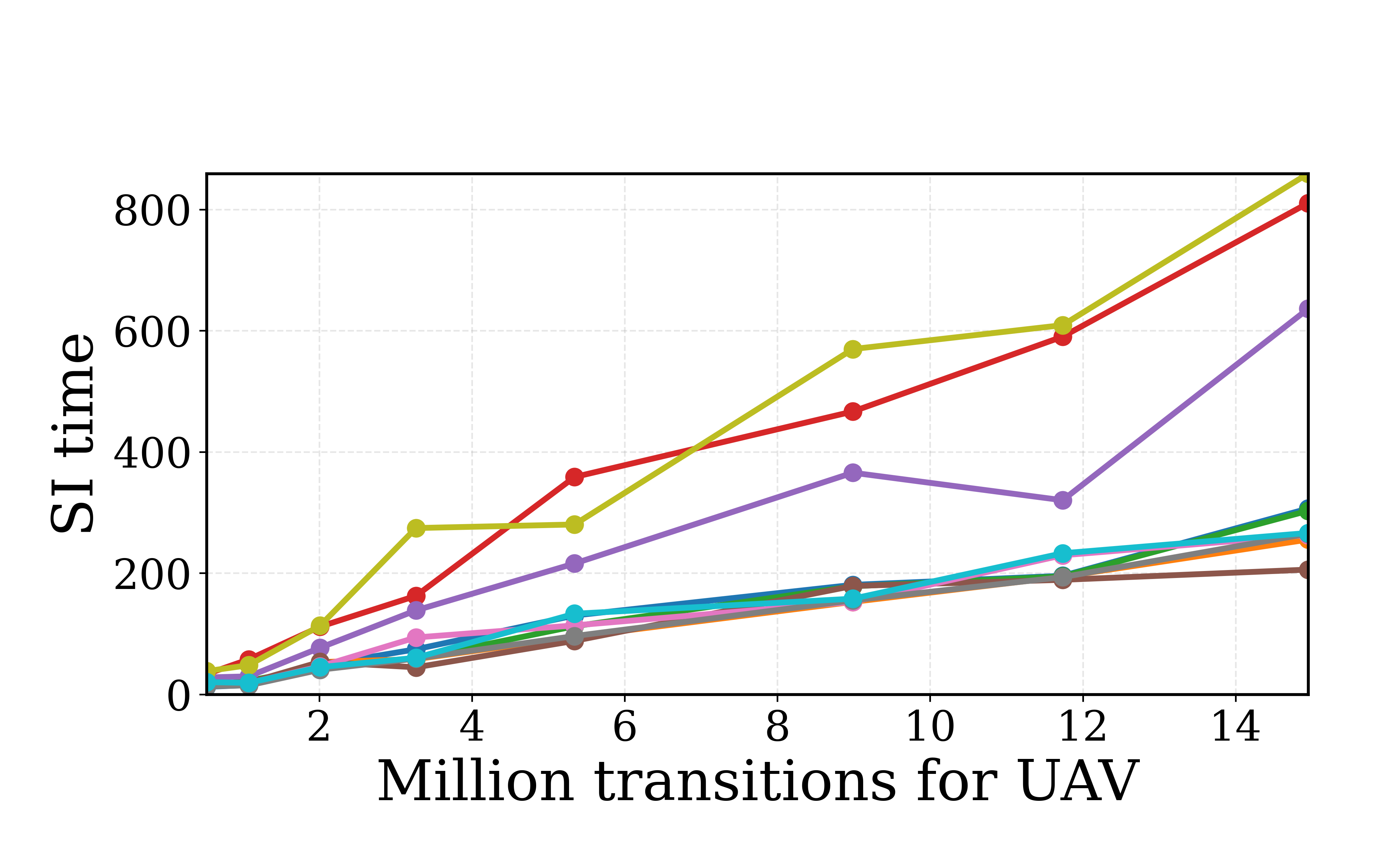}
    \end{subfigure}
    \hfill
    \begin{subfigure}[b]{0.32\textwidth}
        \centering
        \includegraphics[width=\linewidth]{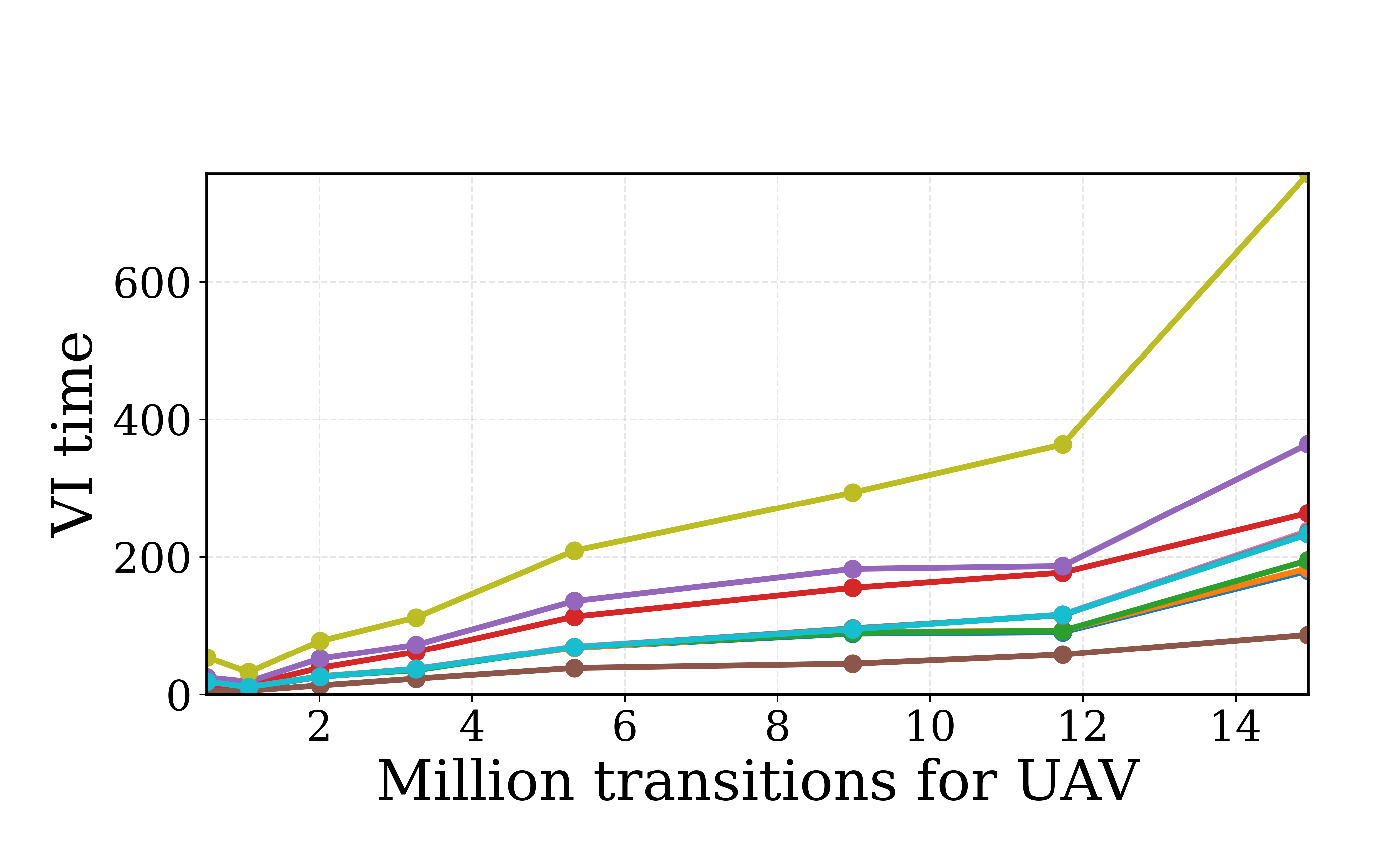}
    \end{subfigure}
    \hfill
    \begin{subfigure}[b]{0.32\textwidth}
        \centering
        \includegraphics[width=\linewidth]{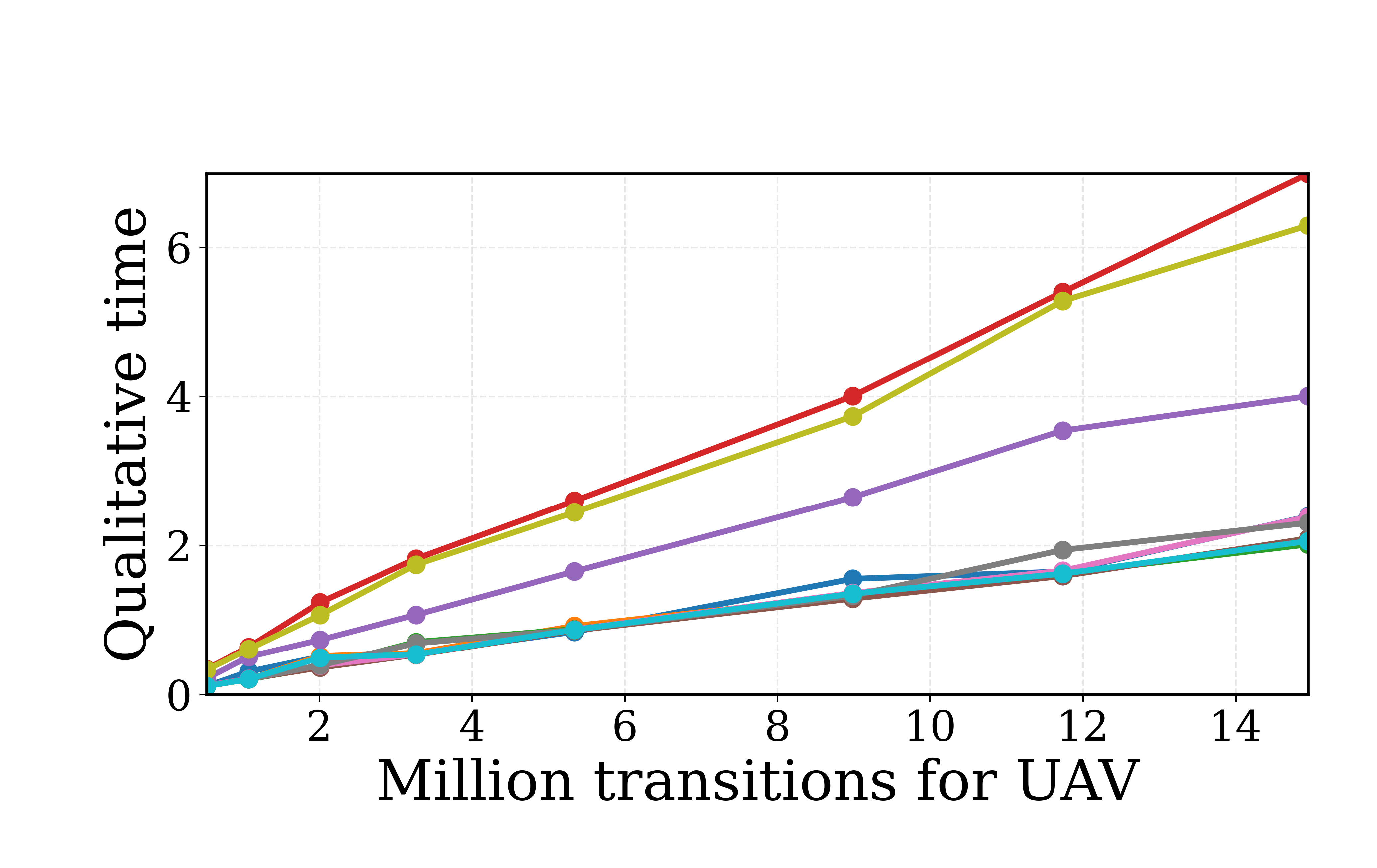}
    \end{subfigure}
    % \vspace{0.5em} 

    % ----------- ROW 2: SHUTTLE -----------

    \begin{subfigure}[b]{0.32\textwidth}
        \centering
        \includegraphics[width=\linewidth]{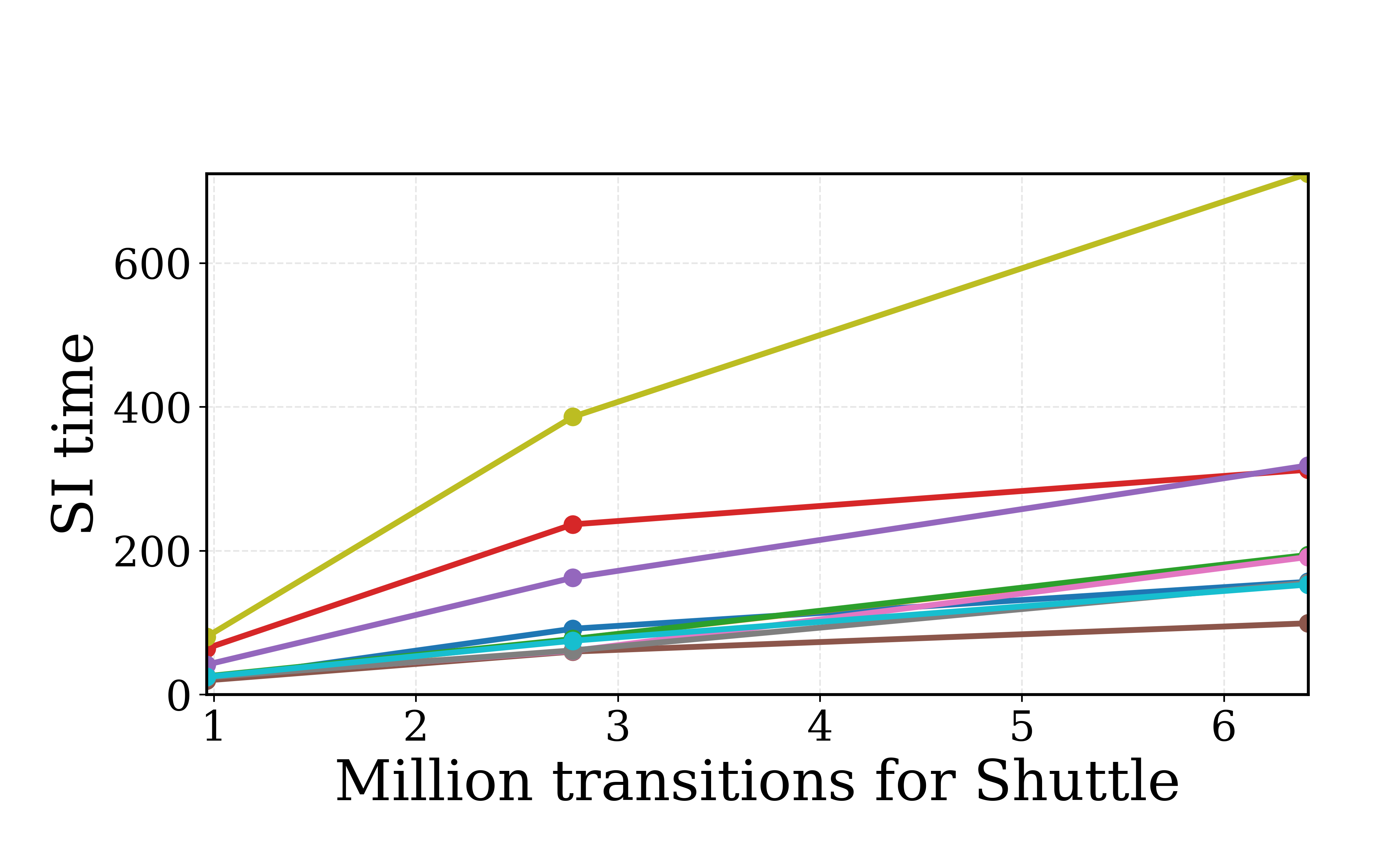}
    \end{subfigure}
    \hfill
    \begin{subfigure}[b]{0.32\textwidth}
        \centering
        \includegraphics[width=\linewidth]{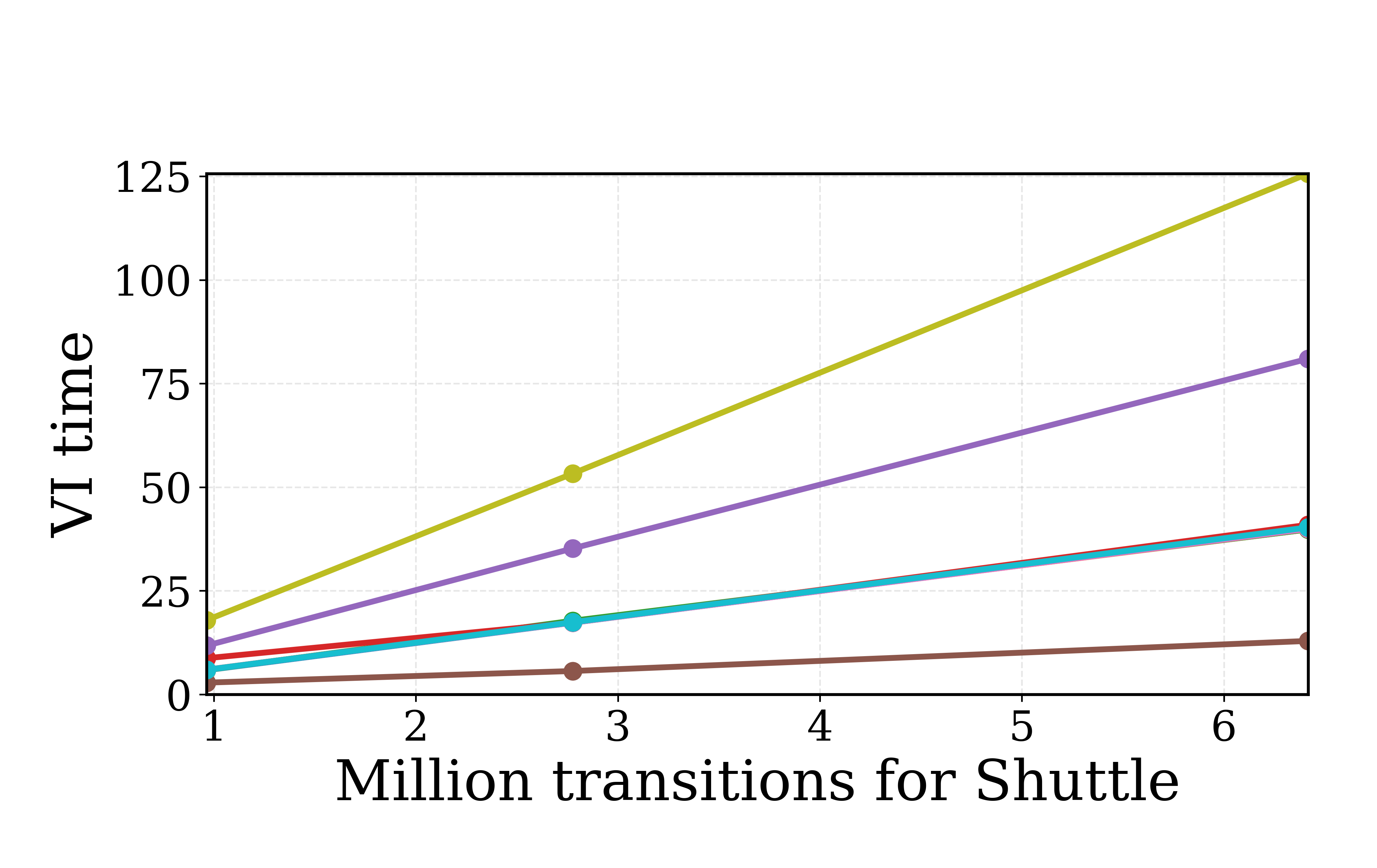}
    \end{subfigure}
    \hfill
    \begin{subfigure}[b]{0.32\textwidth}
        \centering
        \includegraphics[width=\linewidth]{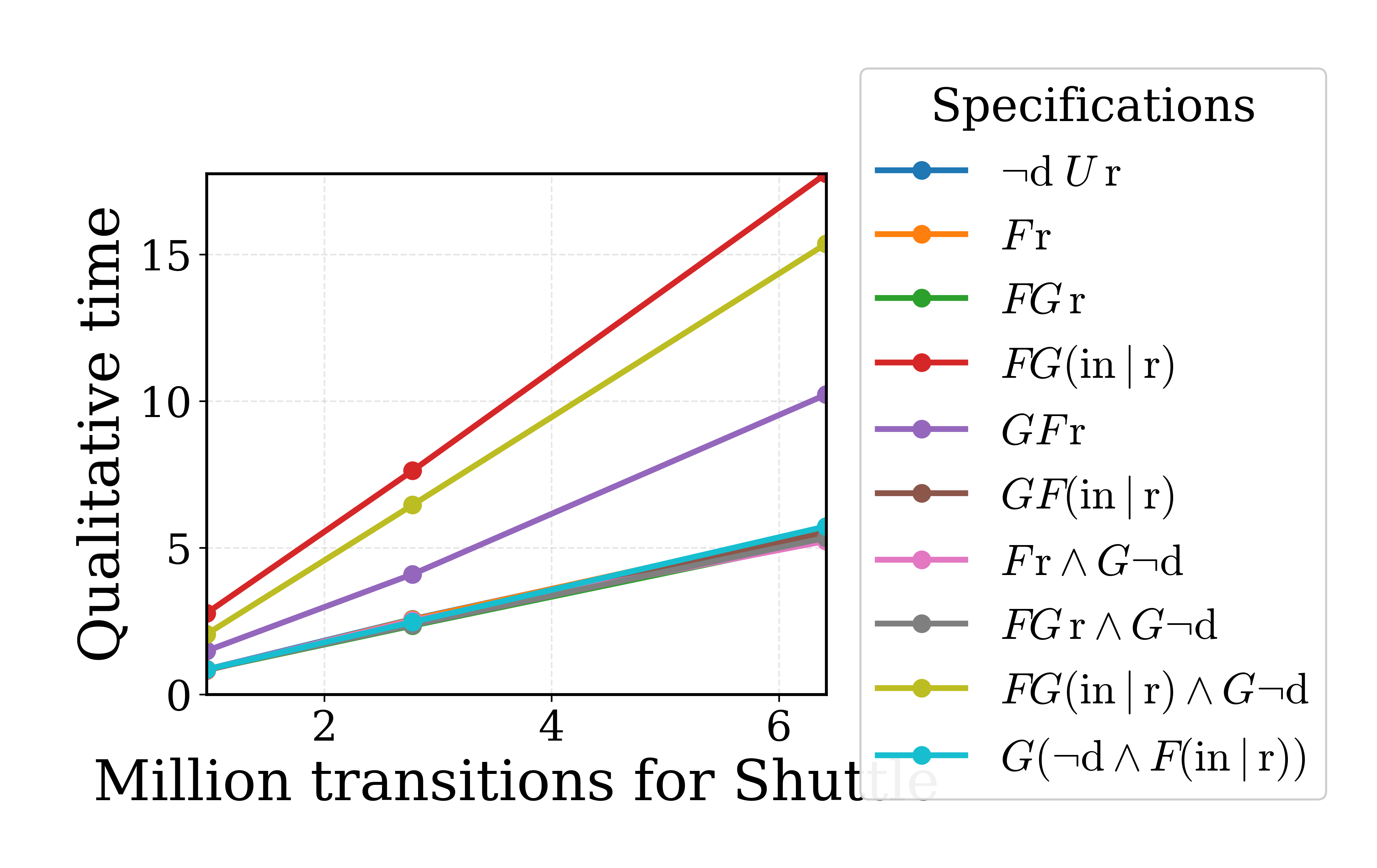}
    \end{subfigure}
    \caption{
Performance comparison for the UAV (top row) and Shuttle (bottom row) benchmarks under the selected specifications. The columns correspond to Strategy Improvement (left), Value Iteration (middle), and Qualitative Analysis (right).  Times are reported in seconds. The legend is shown only once; the same color-to-specification mapping holds for all plots.
    }
    \label{fig:combined_results}
\end{figure*}

\subsection{GR(1) Specifications}
\label{subsec:GR1}
Besides full LTL, we also evaluate our framework on the GR(1) fragment.
%, which naturally gives rise to nondeterminism.
%% Sven: commented that out -- GR(1) became popular because it has a relatively efficient determinisation.
We consider the following GR(1) specifications over the atomic propositions \textit{init}, \textit{reach}, and \textit{deadlock}:
\[
\begin{aligned}
\Phi_1:\quad
& \mathbf{GF}(init)\land \mathbf{GF}(\neg deadlock)
   \rightarrow \mathbf{GF}(reach), \\
\Phi_2:\quad
& \mathbf{GF}(\neg deadlock)
   \rightarrow \mathbf{GF}(reach)\land \mathbf{GF}(init), \\
\Phi_3:\quad
& \mathbf{GF}(\neg deadlock)
   \rightarrow \mathbf{GF}(reach \land \neg init), \\
\Phi_4:\quad
& \mathbf{GF}(init \land \neg deadlock)
   \rightarrow \mathbf{GF}(reach \land \neg deadlock).
\end{aligned}
\]

Here, $\Phi_1$ requires recurring success under recurring initialization and infinitely many non-deadlock states. $\Phi_2$ requires repeated recovery and success, namely infinitely many visits to both \textit{init} and \textit{reach}. $\Phi_3$ strengthens this by requiring recurring success outside the initial region. $\Phi_4$ is deadlock-sensitive: whenever the system can repeatedly return to a live initial condition, it must also repeatedly achieve a live success condition.
Figure~\ref{fig:gr1_uav_results} reports the runtime of SI, VI, and qualitative analysis on the UAV benchmarks. The results show that our framework handles GR(1)-induced nondeterminism in large IMDPs and supports richer reactive mission specifications.

\begin{figure*}[h]
    \centering    
    \begin{subfigure}[b]{0.32\textwidth}
        \centering
        \includegraphics[width=\linewidth]{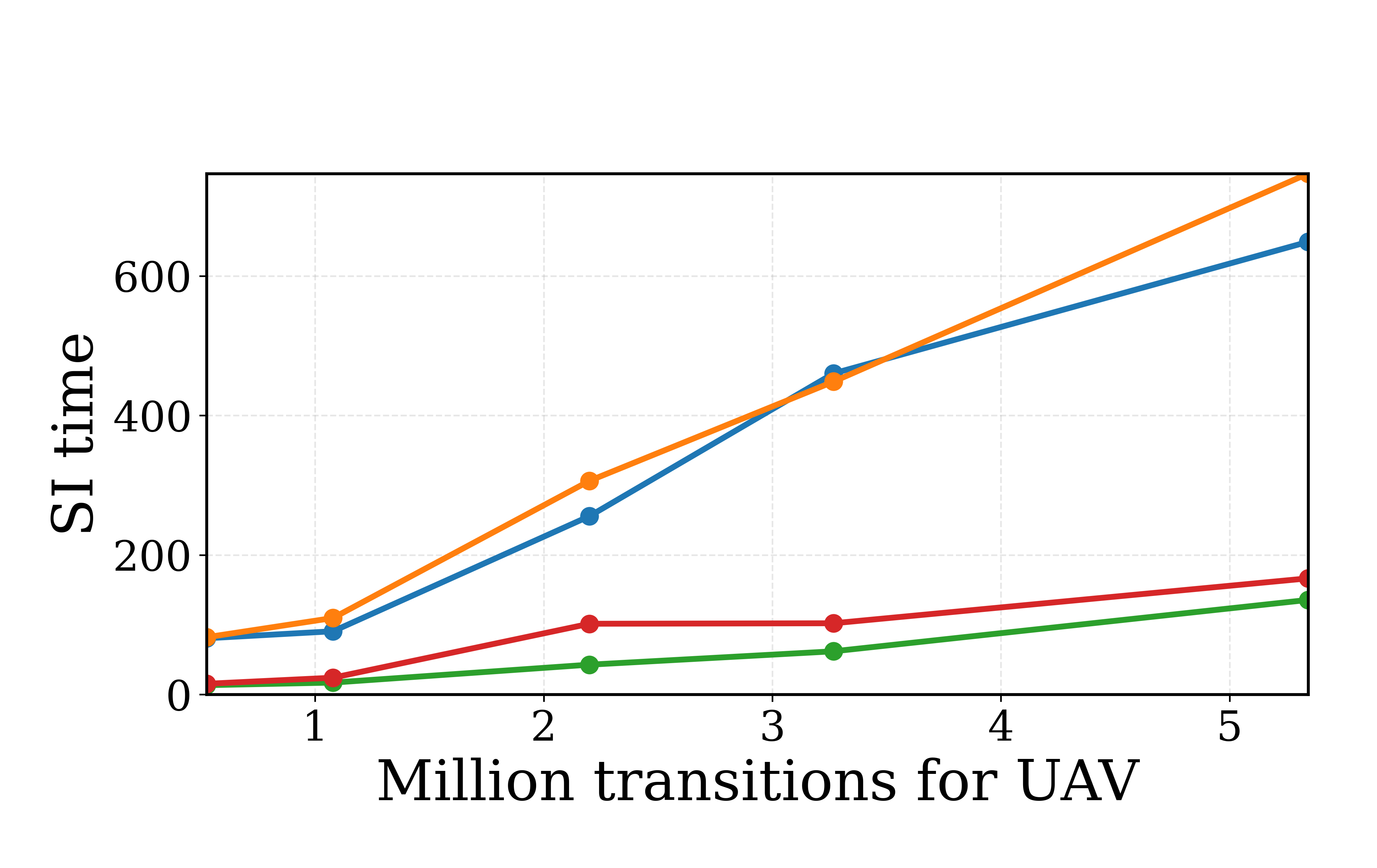}
    \end{subfigure}
    \hfill
    \begin{subfigure}[b]{0.32\textwidth}
        \centering
        \includegraphics[width=\linewidth]{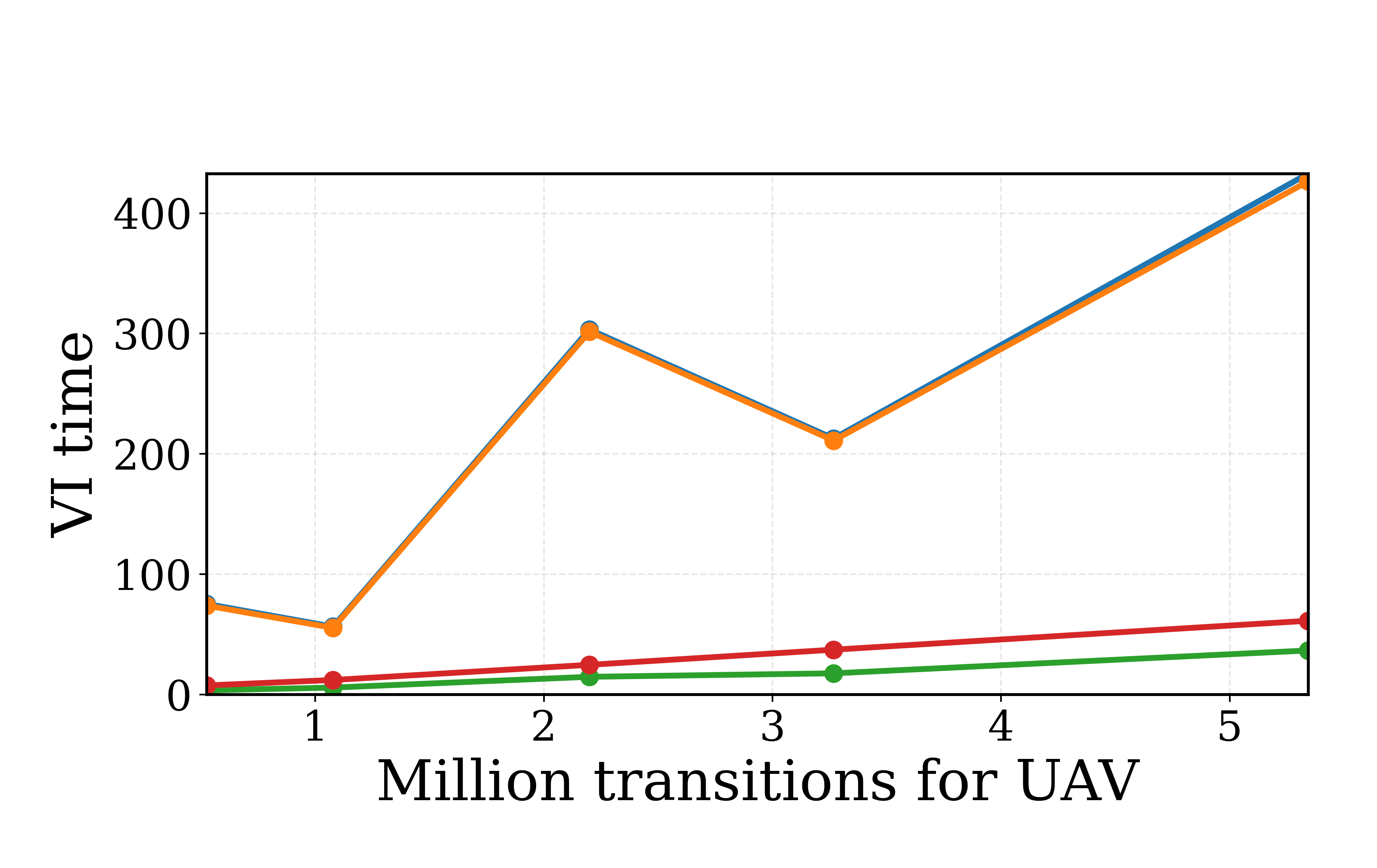}
    \end{subfigure}
    \hfill
        \begin{subfigure}[b]{0.32\textwidth}
        \centering
        \includegraphics[width=\linewidth]{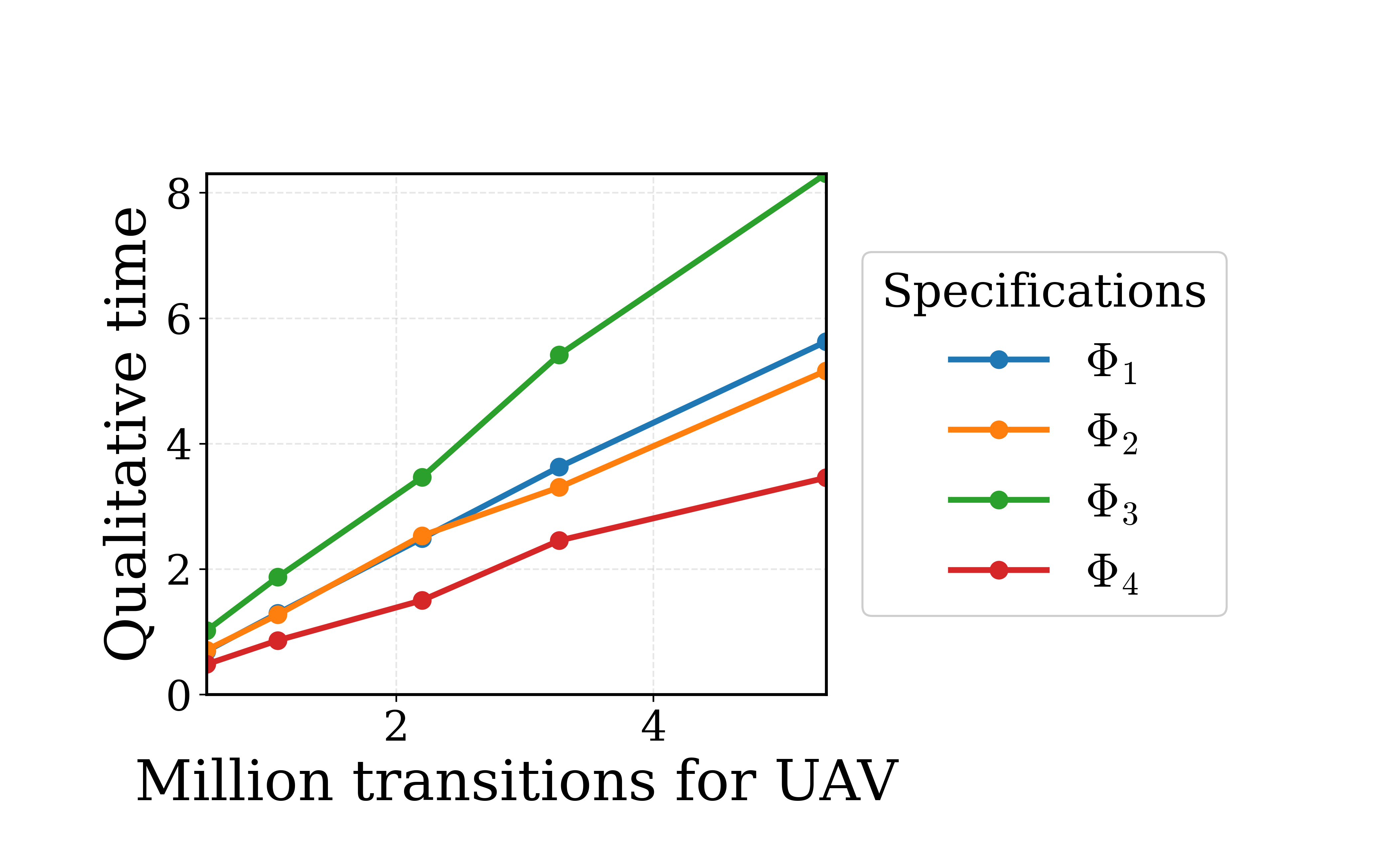}
    \end{subfigure}
    \caption{
Performance comparison for the UAV benchmarks under the GR(1) specifications as a function of the number of transitions. Each plot corresponds to Strategy Improvement (left), Value Iteration (middle), and Qualitative Analysis (right). Times are reported in seconds. The legend is shown only once; the same color-to-specification mapping holds for all plots.
    }
    \label{fig:gr1_uav_results}
\end{figure*}

\section{Conclusion}
We introduced a framework for verifying and synthesizing $\omega$-regular objectives over interval MDPs, motivated by the broader goal of supporting model-based reinforcement learning. Our analysis focused on \emph{stable} IMDPs, namely those whose concretizations share the same support, and showed that they admit a substantially simpler theory: qualitative analysis coincides with that of ordinary MDPs, while quantitative analysis reduces to reachability.

The significance of stability is not merely theoretical. IMDPs derived from model-based RL naturally satisfy this property, since every observed transition appears with positive probability in the learned dynamics. As a result, the abstraction preserves the connectivity structure needed for automata-theoretic verification. By making the relationship among IMDPs, MDPs, and stochastic games explicit, we also clarified when good-for-MDP automata suffice and when the stronger good-for-games condition becomes necessary for robust satisfaction.

On the algorithmic side, we developed and compared value-iteration- and strategy-improvement-based synthesis procedures. Our empirical evaluation on representative benchmarks shows that both approaches scale well and support end-to-end integration of formal specifications with uncertainty-aware decision making for stable IMDPs having millions of transitions under LTL objectives.

\bibliographystyle{IEEEtran}
\bibliography{papers}

% \end{document}doesn't compile otherwise for teh moment -- do remomve that enddocument when you need it :)
\newpage 
% \section{Appendix}

\appendix
%\textcolor{red}{Appendix should appear before Acks.}
%\section{Abstraction algorithms}
% \section{Proofs(if any)}

\subsection{Abstraction Algorithms}
Here we describe the abstraction algorithms. A schema can be found in \Cref{fig:imdp-abstraction}. 
We first recall the definition of A discrete-time stochastic control system.
\begin{definition}
A discrete-time stochastic control system (dtSCS) $\Sigma_{ss}$ is a tuple $(\mathcal S, U,w,f)$, where  
\begin{itemize}
    \item $\mathcal{S}\subseteq \mathbb{R}^n$ is the continuous state space;
    \item $U$ is the input space;
    \item $w$ is a sequence of i.i.d.\ random variables
    \[
    w:=\{w(k):\Omega \to V_w,\; k\in \mathbb{N}\},
    \]
    where $\Omega$ is the sample space and $V_w$ is the noise space;
    \item $f:\mathcal{S}\times U\times V_w \to \mathcal{S}$ is a measurable function describing the state evolution
    \begin{equation}
        x(k+1)=f(x(k),a(k),w(k)), \label{dynamic_evolu}
    \end{equation}
    where $k\in\mathbb{N}$, $x(k)\in\mathcal{S}$, $a(k)\in U$, and $w(k)\in V_w$.
\end{itemize}
\end{definition}
We denote by $\mathcal{U}_{\mathfrak a}$ the set of admissible input sequences
\[
\mathcal{U}_{\mathfrak a}
:=
\bigl\{\{a(k):\Omega\to U,\; k\in\mathbb{N}\}\bigr\},
\]
where, for each $k\in\mathbb{N}$, the input $a(k)$ is independent of $w(t)$ for all $t\ge k$.
We assume that the system~\eqref{dynamic_evolu} is unknown (the disturbance distribution is unknown) but data from sampled trajectories is available. 

\vspace{0.2em}\noindent\textbf{IMDP Abstraction.}
Depending on the assumptions on $\Sigma_{ss}$~\eqref{dynamic_evolu}, one can construct an IMDP abstraction using sampled data with probabilistic guarantees.

\begin{assumption}[Linear Systems]
\label{ass:linear}
Assume $f$ in~\eqref{dynamic_evolu} is linear of the form $f(x,a,w)=Ax+Ba+w$, where $A$ and $B$ are known but the distribution of $w$ is unknown.
Moreover, the pair of matrices $(A,B)$ is controllable.
\end{assumption}
Under Assumption~\ref{ass:linear}, an IMDP abstraction can be constructed using the data-driven procedure shown in Algorithm~\ref{alg:linear-imdp}.
\begin{theorem}[IMDP abstractions for linear systems]
Under Assumption~\ref{ass:linear}, and with the IMDP computed via Algorithm~\ref{alg:linear-imdp}, the number of sampled trajectories can be chosen such that every abstract transition probability is either exactly zero or belongs to an interval whose boundary does not include zero.
\end{theorem}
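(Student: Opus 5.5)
We prove the statement by analysing how the interval bounds for a single abstract transition $(x,a,x')$ are obtained from samples, and splitting into two cases. Since Algorithm~\ref{alg:linear-imdp} is not reproduced above, we assume it works as in~\cite{badings2023robust}. The state space is partitioned into regions. For each abstract state and action, the linear part $Ax+Ba$ (with $(A,B)$ known, and controllable so that any target point can be steered to) fixes the nominal successor. $N$ i.i.d.\ noise samples are drawn, and for each target region $x'$ one counts how many samples land in it. The interval $[\underline{P}(x,a,x'),\overline{P}(x,a,x')]$ is then given by a scenario/Clopper--Pearson-type confidence bound at level $\beta$.

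\textbf{Case 1: $x'$ cannot be reached.} Here the set $x' - (Ax+Ba)$ has zero noise mass, either by support or geometric considerations used in the algorithm. The algorithm sets $\overline{P}=\underline{P}=0$, so the transition is exactly zero. Under the scenario bounds, if zero samples land in $x'$ the upper bound is some $\varepsilon(N,\beta)>0$. We would handle this in one of two ways. One option is to argue that the algorithm declares transitions impossible when the successor region lies outside the (assumed bounded or known) noise support. The other is to post-process: transitions with zero observed samples are assigned the interval $[0,0]$ and the removed mass is absorbed in an explicit ``error'' bound. We will follow whatever convention Algorithm~\ref{alg:linear-imdp} uses and cite it.

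\textbf{Case 2: $x'$ has positive true probability $p^*>0$.} The scenario lower bound $\underline{P}(x,a,x')$ is a function of the sample count $k$, the total $N$ and the confidence $\beta$. It is strictly positive as soon as $k\ge 1$ (and in the standard formula as soon as $k$ exceeds a threshold depending on $N,\beta$). By the law of large numbers, or quantitatively by a Chernoff bound, $k/N\to p^*$. So for $N$ large enough, $k$ exceeds the threshold with probability at least $1-\delta$, and the lower bound is positive; this is where we choose $N$.

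The positive probabilities are bounded away from zero uniformly. There are finitely many triples $(x,a,x')$, so $p_{\min}=\min\{p^*>0\}$ is positive. A union bound over the finitely many triples then gives a single $N$ such that, simultaneously, every positive-probability transition has $\underline{P}>0$. Combined with Case 1, every transition is either exactly zero or lies in an interval bounded away from zero. The resulting IMDP is therefore stable in the sense of the SIMDP definition.

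The main obstacle is Case 1 together with the exact form of the bounds in Algorithm~\ref{alg:linear-imdp}. Which transitions are set to exactly zero, rather than to $[0,\varepsilon]$, depends on the algorithm's conventions, and the statement only holds if ``unreached'' transitions are set to zero. Because $p_{\min}$ is unknown, the sample size $N$ is only guaranteed to exist; we would state it as an existence result, or make it data-adaptive by increasing $N$ until every region with at least one observed sample has a positive lower bound.
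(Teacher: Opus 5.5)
Your Case~2 is a faithful, more quantitative version of the second half of the paper's justification. The paper only remarks that, once the existence of a transition is fixed, its probability is a Bernoulli success probability whose confidence interval moves off zero with enough samples, and it defers details to~\cite{badings2023robust}. Your law-of-large-numbers/Chernoff argument, with $p_{\min}>0$ and a union bound over the finitely many triples, makes this precise. Your high-probability formulation is also the right reading of ``can be chosen''. For any finite $N$, all samples miss a region of true mass $p^*\in(0,1)$ with probability $(1-p^*)^N>0$, so no sure guarantee is possible.

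The gap is Case~1, which you leave open, and it is exactly where the paper's key idea sits. Algorithm~\ref{alg:linear-imdp} decides whether $S_i \xrightarrow{a} S_j$ exists by a deterministic geometric test on the known nominal map: is $Ax+Ba\in S_j$ for some $x\in S_i$? This test is made separately from, and before, any sampling. Transitions that fail it are exactly zero by construction, so for them the $[0,\varepsilon(N,\beta)]$ intervals you worry about never arise. Neither of your proposed fixes is what happens. Zeros are not determined by the noise support, and post-processing zero-sample transitions to $[0,0]$ would change the algorithm rather than prove a property of it.

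Consequently, your case split on the true probability $p^*$ is misaligned; you should split on declared versus undeclared transitions. Undeclared transitions are exactly zero whatever their $p^*$. Your claim that every transition with $p^*>0$ gets $\underline{P}>0$ therefore fails for undeclared ones, but it is not needed for them. Your Case~2 argument must instead be applied to \emph{every declared} transition. That requires each declared transition to have strictly positive noise probability, which does not follow from Assumption~\ref{ass:linear}: nothing there prevents the noise from putting zero mass on a nominally reachable cell. The paper takes this for granted when it calls the estimation a Bernoulli problem whose intervals avoid zero; you should state it as an explicit hypothesis.
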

Intuitively, Algorithm~\ref{alg:linear-imdp} separates the existence of transitions—determined by the nominal linear dynamics—from the estimation of disturbance probabilities, which reduces to a Bernoulli success-probability estimation problem. With sufficiently many samples, the resulting confidence intervals avoid zero as a boundary point. 
See~\cite{badings2023robust} for full technical details.

\begin{assumption}[Nonlinear Systems]
\label{ass:nonlinear}
Assume $f$ in~\eqref{dynamic_evolu} is nonlinear and unknown, with a known upper bound on its Lipschitz constant. The distribution of $w$ is unknown.
\end{assumption}
Under Assumption~\ref{ass:nonlinear}, an IMDP abstraction can be constructed via the Lipschitz-based over-approximation procedure shown in Algorithm~\ref{alg:nonlinear-imdp}.
\begin{theorem}[IMDP abstractions for nonlinear systems]
Under Assumption~\ref{ass:nonlinear}, the IMDP constructed using  Algorithm~\ref{alg:nonlinear-imdp} may contain probability intervals whose boundary includes zero.
\end{theorem}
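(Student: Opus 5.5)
The statement is the last one above, the theorem on IMDP abstractions for nonlinear systems: the IMDP built by Algorithm~\ref{alg:nonlinear-imdp} may contain intervals whose boundary includes zero. Since it only asserts possibility, I would prove it in two parts. First I would identify the mechanism in the Lipschitz-based over-approximation that produces lower bounds equal to zero. Then I would give a concrete instance in which this mechanism fires while the upper bound stays strictly positive. That instance would show the resulting IMDP is unstable in the sense of the SIMDP definition, since it has $\underline{P}(x,a,x')=0<\overline{P}(x,a,x')$.

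\textbf{Mechanism.} For a partition cell $x$ and an input $a$, the algorithm bounds the image $f(x,a,w)$ using sampled successors of finitely many representative points together with the known Lipschitz bound $L_f$. This yields a reachable over-approximation whose radius is roughly $L_f\cdot\mathrm{diam}(x)$ plus the statistical slack. For a target cell $x'$, the upper bound $\overline{P}(x,a,x')$ comes from the probability mass that may land in the enlarged set, and it is positive whenever the over-approximation intersects $x'$. The lower bound $\underline{P}(x,a,x')$ comes from mass that is guaranteed to land in $x'$. This is estimated from samples whose Lipschitz-inflated ball lies entirely inside $x'$, and it is taken as the conservative lower confidence bound, clipped at $0$.

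\textbf{Witness.} I would take a one-dimensional system $f(x,a,w)=x+a+w$ with $w$ uniform on a small interval, declared only through a Lipschitz bound $L_f\ge1$. The partition cells would be intervals of width $h$. I would choose a cell $x$ and an input $a$ so that the nominal image straddles the boundary between cells $x'$ and $x''$. Near that boundary, every sample point's inflated ball of radius about $L_f h$ meets both cells, so no sample is certified to lie inside $x'$. This forces $\underline{P}(x,a,x')=0$, while samples near $x'$ make $\overline{P}(x,a,x')>0$. Equivalently, I could choose $L_f$ large enough that the inflation exceeds the width of $x'$. The lower bound is then $0$ for any sample size, whereas the upper bound is positive.

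\textbf{Main obstacle.} The hardest step is matching the witness to the exact formulas of Algorithm~\ref{alg:nonlinear-imdp}, which appear later in the paper and in~\cite{nazeri2025data}. I must check that the algorithm does not add any correction that keeps lower bounds away from zero. I would first handle this by the large-$L_f$ argument, since it is independent of the sample count and of the confidence level. I would fall back to the boundary-straddling geometry only if the algorithm caps the inflation radius.
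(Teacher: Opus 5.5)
Your overall idea is the paper's, but the paper's entire justification is a one-line remark. It says the Lipschitz over-approximation in Step~2 may enlarge reachable sets enough to include transitions never observed in the sampled data, so those transitions receive $\underline{P}=0<\overline{P}$, and it defers the details to \cite{nazeri2025data}. There are no exact formulas ``later in the paper''---Algorithm~\ref{alg:nonlinear-imdp} is only a sketch. So the obstacle you flag cannot be discharged from the paper, and your witness stands or falls with the lower-bound rule you posit: only samples whose inflated ball, of radius about $L_f\,\mathrm{diam}(x)$, lies inside $x'$ count toward $\underline{P}$.

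The two routes differ in which transition carries the zero. Your boundary-straddling $x'$ is actually hit by samples, so its zero comes entirely from that certification rule. The sketch equally permits other readings, for instance a confidence bound on observed hits. Another is an inflation radius that scales with the spacing of the sampled states, since Step~2 inflates around each sampled pair; that radius shrinks as the sample grows. Under either reading $\underline{P}(x,a,x')$ turns positive for large samples, and your ``for any sample size'' claim fails.

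The paper's mechanism avoids this dependence. Take $x'$ disjoint from the true successor set of $(x,a)$ but inside the inflated reachable set. You can always arrange this, since Assumption~\ref{ass:nonlinear} only asks for an upper bound on the Lipschitz constant and you may declare a large one. No sample ever lands in such an $x'$, so any data-driven lower bound is $0$, while the over-approximation forces $\overline{P}(x,a,x')>0$.

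Your version gives an explicit, checkable instance under a concrete and natural rule. The unobserved-transition witness instead does not depend on how observed samples are turned into lower bounds, which is precisely what the paper leaves unspecified.
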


The presence of zero on interval boundaries stems from the conservatism induced by Lipschitz over-approximation (Step~2 of Algorithm~\ref{alg:nonlinear-imdp}), which may enlarge reachable sets enough to include transitions that are not observed in sampled data. 
See~\cite{nazeri2025data} for the complete development.

With the abstraction procedures in Algorithms~\ref{alg:linear-imdp} and~\ref{alg:nonlinear-imdp}, we obtain an IMDP representation of the underlying dynamical system together with interval-valued transition probabilities.
To synthesize a strategy that ensures satisfaction of temporal specifications on such an IMDP, the uncertainty in the transition probabilities must be treated adversarially.
This naturally leads to a game-theoretic formulation combined with an automaton encoding of the specification. 
Therefore, we introduce stochastic parity games and parity automata next.

% \begin{comment}
% \begin{figure*}[t]
% \centering
% \begin{minipage}[t]{0.48\textwidth}
\begin{figure*}[t]
\centering
\begin{tikzpicture}[>=latex,scale=0.95,every node/.style={font=\small}]

% -------------------------------------------------
% Styles
% -------------------------------------------------
\tikzstyle{region}=[draw, rounded corners=2pt, minimum width=1.2cm, minimum height=1.2cm]
\tikzstyle{absnode}=[draw, circle, minimum size=8mm, inner sep=0pt]
\tikzstyle{sample}=[circle, fill=black, inner sep=1.2pt]
\tikzstyle{arrowinterval}=[->, thick]
\tikzstyle{arrowplain}=[->, thin]

% -------------------------------------------------
% Left: continuous state space with samples
% -------------------------------------------------
\node[anchor=west,font=\bfseries] at (-0.2,4.3) {Continuous state space};
\draw[thick] (0,0) rectangle (4,4);

% Axes labels
\node at (2,-0.35) {$x_1$};
\node[rotate=90] at (-0.35,2) {$x_2$};

% Sample points
\foreach \x/\y in {
0.5/0.7, 0.8/1.4, 1.1/2.8, 1.5/0.9, 1.7/2.1,
2.2/1.0, 2.4/2.5, 2.8/3.0, 3.0/1.7, 3.3/2.4,
3.5/0.8, 2.9/0.6
}{
  \node[sample] at (\x,\y) {};
}

% A faint nonlinear trajectory cloud / arrows
\draw[blue!60, very thick, ->] (0.7,0.8) .. controls (1.3,1.2) and (1.7,1.6) .. (2.2,2.0);
\draw[blue!60, very thick, ->] (1.2,2.7) .. controls (1.8,2.9) and (2.2,3.0) .. (2.9,3.1);
\draw[blue!60, very thick, ->] (2.1,1.0) .. controls (2.5,1.3) and (2.8,1.5) .. (3.2,1.9);

% \node[align=center] at (2,-0.95) {Observed transitions / samples};

% -------------------------------------------------
% Middle: partitioned space
% -------------------------------------------------
\node[anchor=west,font=\bfseries] at (5.2,4.3) {Partition into regions};
\draw[thick] (5.5,0) rectangle (9.5,4);

% Grid
\foreach \x in {6.5,7.5,8.5}
  \draw[gray!70] (\x,0) -- (\x,4);
\foreach \y in {1,2,3}
  \draw[gray!70] (5.5,\y) -- (9.5,\y);

% Highlight a few cells
\fill[blue!12] (6.5,1) rectangle (7.5,2);
\fill[green!12] (7.5,2) rectangle (8.5,3);
\fill[red!12] (8.5,1) rectangle (9.5,2);

\node at (7,1.5) {$q_i$};
\node at (8,2.5) {$q_j$};
\node at (9,1.5) {$q_k$};

% Sample points mapped into cells
\foreach \x/\y in {
5.9/0.7, 6.2/1.4, 6.6/2.8, 7.0/0.9, 7.2/2.1,
7.7/1.0, 7.9/2.5, 8.3/3.0, 8.5/1.7, 8.8/2.4,
9.0/0.8, 8.4/0.6
}{
  \node[sample] at (\x,\y) {};
}

% \node[align=center] at (7.5,-0.95) {Finite abstraction by partitioning};

% Arrow from left to middle
\draw[arrowplain] (4.3,2) -- (5.2,2);
\node[align=center] at (4.75,2.35) {\scriptsize discretize};

% -------------------------------------------------
% Right: IMDP abstraction
% -------------------------------------------------
\node[anchor=west,font=\bfseries] at (10.7,4.3) {IMDP abstraction};

\node[absnode, fill=blue!12]  (qi) at (11.5,1.7) {$q_i$};
\node[absnode, fill=green!12] (qj) at (14.3,2.7) {$q_j$};
\node[absnode, fill=red!12]   (qk) at (14.3,0.7) {$q_k$};
\node[absnode, fill=yellow!15](ql) at (17.0,1.5) {$q_\ell$};

% Interval-labeled transitions
\draw[arrowinterval] (qi) to[bend left=15]
  node[midway, above, sloped] {\scriptsize $a,\;[0.25,0.45]$} (qj);

\draw[arrowinterval] (qi) to[bend right=15]
  node[midway, below, sloped] {\scriptsize $a,\;[0.30,0.55]$} (qk);

\draw[arrowinterval] (qj) to[bend left=12]
  node[midway, above, sloped] {\scriptsize $b,\;[0.40,0.70]$} (ql);

\draw[arrowinterval] (qk) to[bend right=10]
  node[midway, below, sloped] {\scriptsize $b,\;[0.10,0.35]$} (ql);

% Self-loop
\draw[arrowinterval] (qi) edge[loop below] node[below] {\scriptsize $a,\;[0.10,0.25]$} (qi);

% Arrow from middle to right
\draw[arrowplain] (9.9,2) -- (10.6,2);
\node[align=center] at (10.25,2.35) {\scriptsize estimate\\[-1mm]\scriptsize intervals};
\end{tikzpicture}
\caption{Construction of an IMDP abstraction from sampled data: the continuous state space is partitioned into finitely many regions, and transition probabilities under each control input are bounded by intervals, yielding a finite model amenable to verification and policy synthesis under uncertainty. %\Sadegh{Modify the figure to make it attractive.}
}
\label{fig:imdp-abstraction}
\end{figure*}

\begin{algorithm}
\caption{Abstraction of Linear Systems as IMDPs}
\label{alg:linear-imdp}
\LinesNumbered
\DontPrintSemicolon
\KwIn{Linear $f(x,a,w)=Ax+Ba+w$}
\KwOut{Abstract IMDP}

Construct partition $\set{S_1,\ldots,S_n}$ of the state space and select representative points $s_i \in S_i$.\;

\For{each $S_i$ and each $a\in U$}{
  Use $Ax+Ba$ to test whether $(Ax{+}Ba){\in} S_j$ for some $x{\in} S_i$.\;
  \If{so}{Declare abstract transition $S_i \xrightarrow{a} S_j$.\;}
}

Compute confidence intervals for $\Pr[w\in[\alpha,\beta]]$ from samples.\;

Combine Steps 2--3 to find abstract transition probabilities.\;

\end{algorithm}
% \end{minipage}
% \hfill
% \begin{minipage}[t]{0.51\textwidth}
\begin{algorithm}
\caption{Abstraction of Nonlinear Systems as IMDPs}
\label{alg:nonlinear-imdp}
\LinesNumbered
\DontPrintSemicolon
\KwIn{Nonlinear $f$ with known Lipschitz bound}
\KwOut{Abstract IMDP}
Construct partition $\{S_1,\ldots,S_n\}$ of the state space and select representative points $s_i \in S_i$.\;

\For{each sampled pair $(x,a)$}{
  Use Lipschitz bound to compute an over-approximation of the forward reachable set.\;
}

Using sampled trajectories, compute confidence intervals for transition probabilities
consistent with the over-approximated reachable sets.\;
Combine reachable-set information and statistical estimates to obtain probability
intervals between abstract states $S_i$ and $S_j$.\;
\end{algorithm}

% \end{minipage}

% \caption{Algorithms~\ref{alg:linear-imdp} and~\ref{alg:nonlinear-imdp} illustrate how IMDPs are constructed from sampled trajectories under different assumptions on the dynamics.}
% \vspace{-2em}
% \label{fig:imdp-abstraction}
% \end{figure*}
% \end{comment}

\subsection{Proof of Theorem~\ref{thm:qualitative-SIMDP}}
\qualitativeSIMDP*

\begin{proof}
We first observe that, for the qualitative analysis, only the support of the probability distribution matters, not the probabilities themselves, cf.~\cite{Chatte04,Hahn16}.
As a consequence, the choices of the environment in the game are irrelevant and we can instead work with any distribution from $\pd(x,a)$.
In particular, we can use the same one every time, and as they all have the same support, it does not matter which one we choose. In particular, we can choose arbitrarily and consistently for each state action pair $(x,a)$ for all states $(x,q,a)$ (i.e.\ independent of the automata state $q$ in the product).

This brings us back to the qualitative analysis of ordinary MDP. For the complexity and hardness for ordinary NBA see \cite{Courco95,Hahn15}, for good-for-MDP automata see \cite{Hahn20}.
%\qed
\end{proof}

\subsection{Proof of Theorem~\ref{thm:termination}}
%\viTermination*
\begin{proof}
After $n$ steps, the upper bound $V^+$  optimizes $\sum_{i=0}^n \alpha^i p_i'$, where $p_i'$ for $i<n$ is the chance of reaching the a.s.\ winning area in precisely $i$ steps, while $p_n'$ is the chance of not having reached the sure losing area after $n$ steps apply. The lower bound $V^-$ optimizes $\sum_{i=0}^n p_i$, where $p_i$ (including for $i=n$) is the chance of reaching the a.s.\ winning area in precisely $i$ steps.
So, the upper bound converges from above against the discounted reachability value, while the lower bound converges from below to the undiscounted reachability value.
As long as not all initial states are in the a.s. winning area or sure losing area, the limit value of the former is lower than the limit value of the latter, which provides termination of the algorithm.

Now, the value obtained is always strictly larger than the discounted reachability value for the given discount factor $\alpha$.
Given $\varepsilon$, in order to get suitable $\alpha$, we do the following :  pick an $n$ such that $\sum_{i=0}^n p_i \geq (1-0.5 \varepsilon)\textsf{Pr}_{\mo}$
holds for an optimal positional policy to realize $\textsf{Pr}_{\mo}$, and thus also for any policy to optimize $\sum_{i=0}^n p_i$, which is what $V^-$ gives after $n$ steps.

Then pick an $\alpha < 1$ s.t. $\alpha^n \ge (1-0.5\varepsilon)$.
Then $\sum_{i=0}^n \alpha^i p_i \ge \sum_{i=0}^n \alpha^n p_i \ge \alpha^n \cdot (1-0.5\varepsilon) \textsf{Pr}_{\mo} \ge (1-\varepsilon)\textsf{Pr}_{\mo}$, where the $p_i$ is the chance of reaching the a.s.\ winning region in $i$ steps. The estimation holds for all policies, and thus also for the supremum.
   % \qed
\end{proof}

\subsection{Proof of \Cref{thm:policy-iteration-new}}

\begin{proof}
The proof follows standard fixed-point characterization of stochastic reachability games.
Define the operator
\[
F(v)(s)=
\begin{cases}
1, & s\in W^+,\\
0, & s\in W^-,\\[1mm]
\displaystyle \max_{a\in A(s)} \min_{p\in \Delta(s,a)}
\sum_{t\in S} p(t)\,v(t), & \text{otherwise,}
\end{cases}
\]
Its least fixed point is the value vector $V^*$ of the game.

For a fixed positional nature strategy $\theta$, let
\[
T_\theta(v)(s)=
\begin{cases}
1, & s\in W^+,\\
0, & s\in W^-,\\[1mm]
\displaystyle \max_{a\in A(s)} \sum_{t\in S}\theta(s,a,t)\,v(t),
& \text{otherwise.}
\end{cases}
\]
Let $V^\theta$ be the value vector obtained by solving the MDP against $\theta$,
which is the least fixed point of $T_\theta$.

Let $(\theta_k,\pi_k,V_k)$ be the nature strategy in the $k$th iteration, $\pi_k$  a best
response to $\theta_k$ and $V_k=V^{\theta_k}$. For every state-action pair $(s,a)$,
\[
\sum_t \theta_{k+1}(s,a,t)\,V_k(t)
=
\min_{p\in \Delta(s,a)} \sum_t p(t)\,V_k(t).
\]
Hence
\[
T_{\theta_{k+1}}(V_k)=F(V_k)\leq T_{\theta_k}(V_k)=V_k.
\]
Since $V_{k+1}=V^{\theta_{k+1}}$ is the least fixed point of $T_{\theta_{k+1}}$
and $V_k$ is a post-fixed point, we obtain for all $s$,
\[
V_{k+1}(s)\leq V_k(s)
\]
Thus, the vectors $V_k$ are monotonically non-increasing.

If $V_{k+1}=V_k=:V$, then
\[
V=T_{\theta_{k+1}}(V)=F(V),
\]
So $V$ is a fixed point of $F$, and thus $V=V^*$.
Moreover $\pi_{k+1}$ is an optimal policy and $V$ is the optimal value vector.

For each $(s,a)$, the feasible set
$\Delta(s,a)$ is a polytope, and the linear minimization is
attained at a vertex. Therefore only finitely many positional nature strategies
can occur. Since each non-terminal iteration satisfies $V_{k+1}\leq V_k$ and
strictly decreases at least one component, no step can repeat before
termination. Hence the algorithm terminates after finitely many steps.

\end{proof}

\section{Sub-routines}

\begin{algorithm}
\caption{Classic Strategy Improvement}
\label{alg:policy-iteration-classic}
\KwIn{Regions $W^+$ (a.s. winning), $W^-$ (surely losing)}
\KwOut{Optimal strategy $\pi^*$ and value vector $V$}

$\pi \gets$ arbitrary policy\;
$\theta \gets$ arbitrary policy of nature\;
$V \gets$ values of states, $V(s) = \textsf{Pr}_{\mo}^{\pi,\theta}(s)$\;

\While{$\exists s \notin W^+ \cup W^-$ s.t.\ $\pi(s) \neq \arg \max_{a \in A(s)} \sum_{s' \in S} \theta(s,a,s') V(s')$}{
    \ForEach{state $s \notin W^+ \cup W^-$}{
        $\pi(s) \gets \arg \max_{a \in A(s)} \sum_{s' \in S} \theta(s,a,s') V(s')$\;
        $\theta \gets$ optimal nature against $\pi$\;
        $V \gets$ updated values under $\pi$ and $\theta$\;
    }
}

\Return{$\pi, V$}\;
\end{algorithm}

\subsection{Details on Experiments}

This appendix provides a detailed evaluation of the UAV motion-control benchmark introduced in \Cref{sec:experiments}. 
We vary the abstraction granularity in the continuous state space, producing SIMDPs with $787$--$2{,}430$ states and $0.52 \times 10^6$--$14.9 \times 10^6$ transitions.
All benchmarks are stable IMDPs, ensuring that the quantitative results reported here represent exact optimal probabilities.
We present performance trends for qualitative solving, value iteration (VI), and strategy improvement (SI), including convergence time, iteration counts, and value evolution across model sizes.
These additional plots complement the summary results in the main text and illustrate the scalability of our techniques on progressively refined abstractions of UAV dynamics.
\begin{itemize}
\item \textbf{Qualitative solving time.}
Figure~\ref{fig:Qual-Time-vs-criteria} reports the runtime of the qualitative analysis versus SIMDP size (left: states; right: transitions). Transition counts range from $0.52\!\times\!10^6$ (787 states) to $14.9\!\times\!10^6$ (2{,}430 states).

\item \textbf{VI runtime.}
Figure~\ref{fig:Val_iter_time_vs_criteria} shows VI convergence time versus states and transitions. As the abstraction granularity increases, transitions per state grow and runtime scales accordingly.

\item \textbf{VI iteration counts.}
Figure~\ref{fig:Val_iter-cnvrg-iter-vs-criteria} plots the number of iterations required for VI convergence. Both measures increase with model size, reflecting slower contraction in larger abstractions.

\item \textbf{VI lower/upper value bounds vs.\ SI upper bound.}
Figure~\ref{fig:app-val-iter-uperlower-init} shows the evolution of VI lower and upper bounds on a representative model with $1{,}600$ states. The bounds converge from opposite directions while SI reaches the exact fixed point in only a few iterations.

\item \textbf{SI runtime.}
Figure~\ref{fig:si-time-vs-criteria} reports SI convergence time versus states and transitions on the same benchmarks.

\item \textbf{SI vs.\ VI runtime ratio.}
Figure~\ref{fig:SI-over-VI-ratio} plots the ratio of SI runtime to VI runtime for each model. Ratios below $1$ indicate that SI consistently outperforms VI across all benchmarks.

\end{itemize}

\begin{figure}[h!]
\centering
    \includegraphics[scale=0.45]{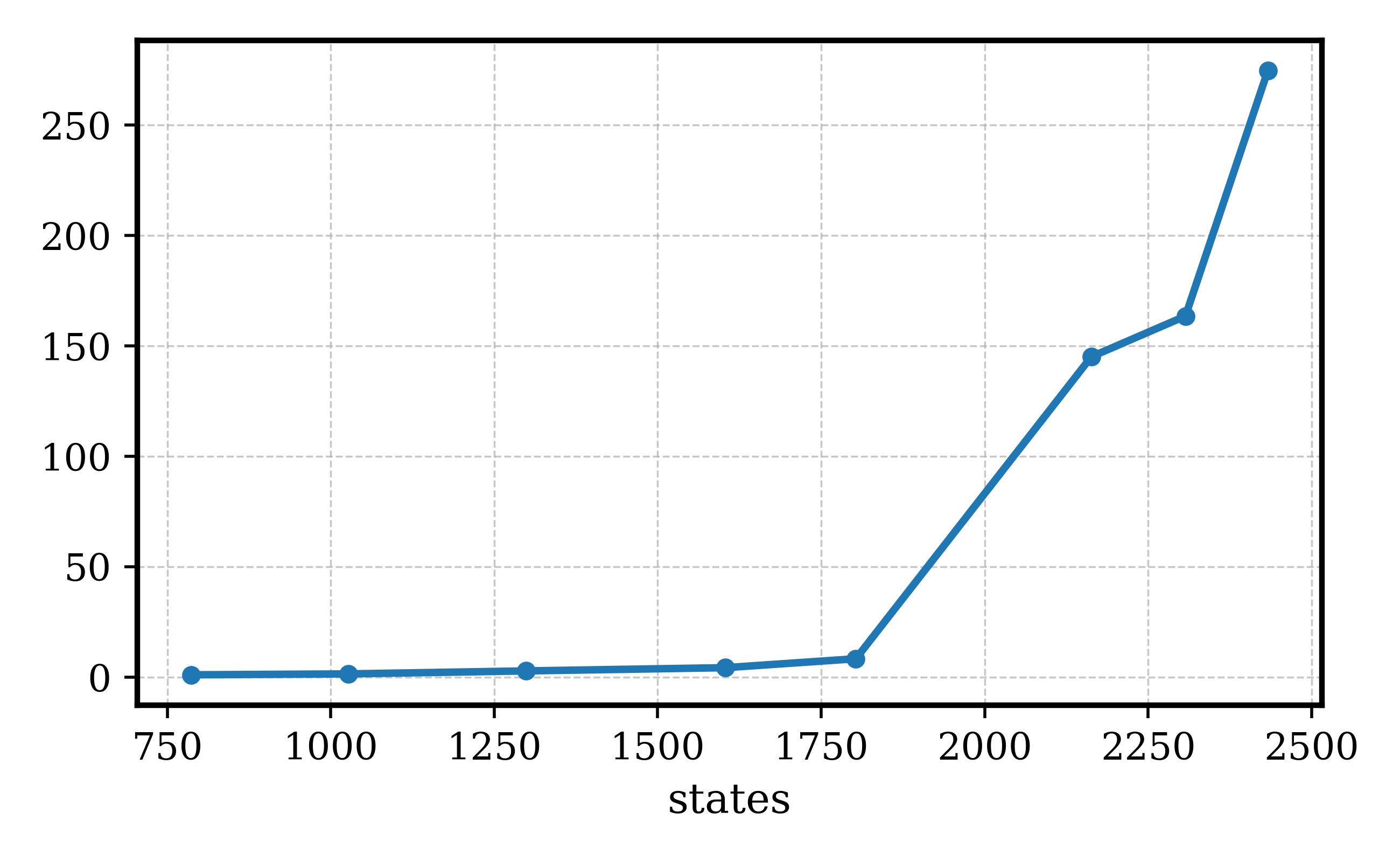}
    \includegraphics[scale=0.45]{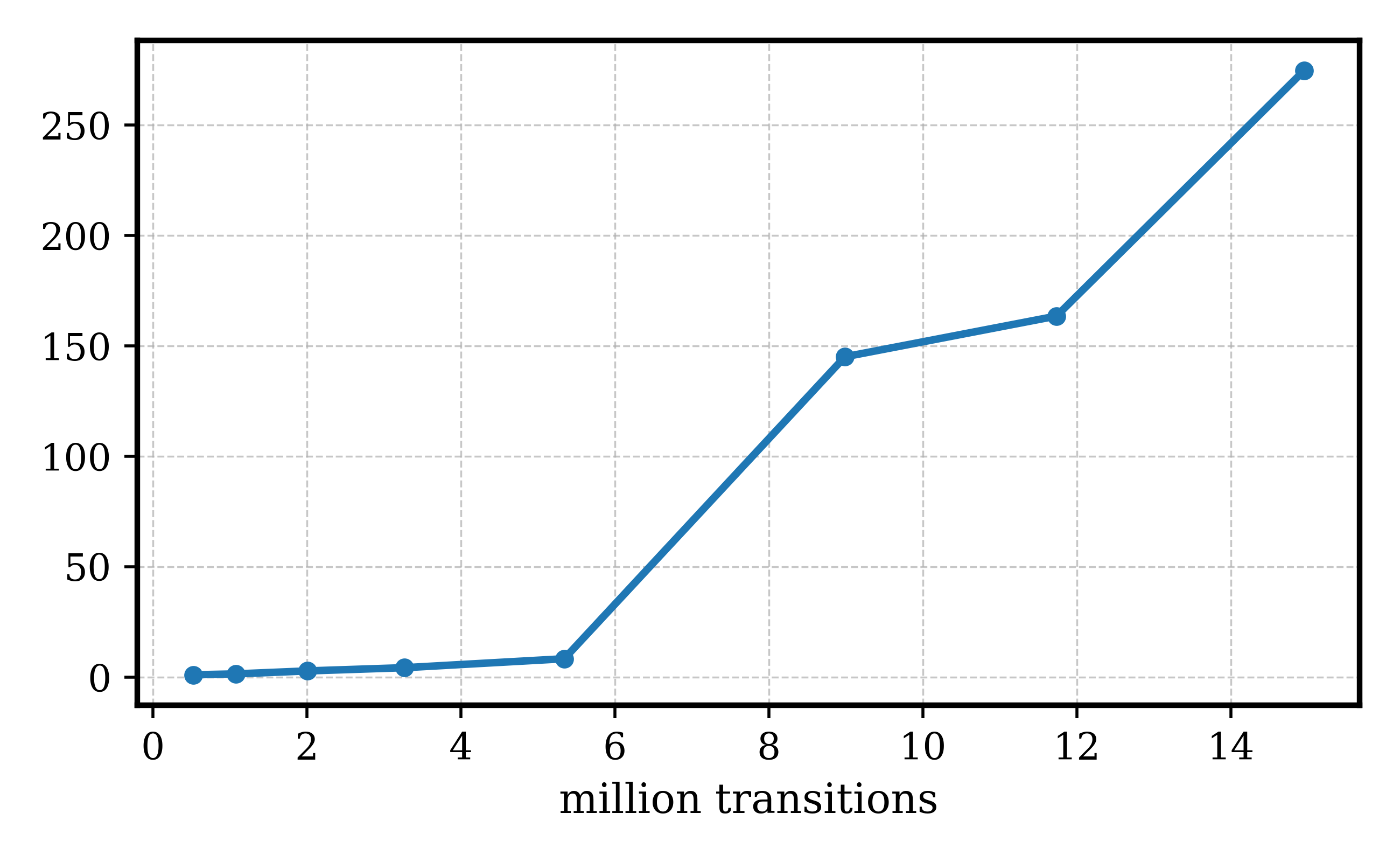}
    \caption{Qualitative analysis time (seconds) vs.\ number of transitions (millions) and states}
    \label{fig:Qual-Time-vs-criteria}
\end{figure}

\begin{figure}[h!]
\centering
\includegraphics[scale=0.45]{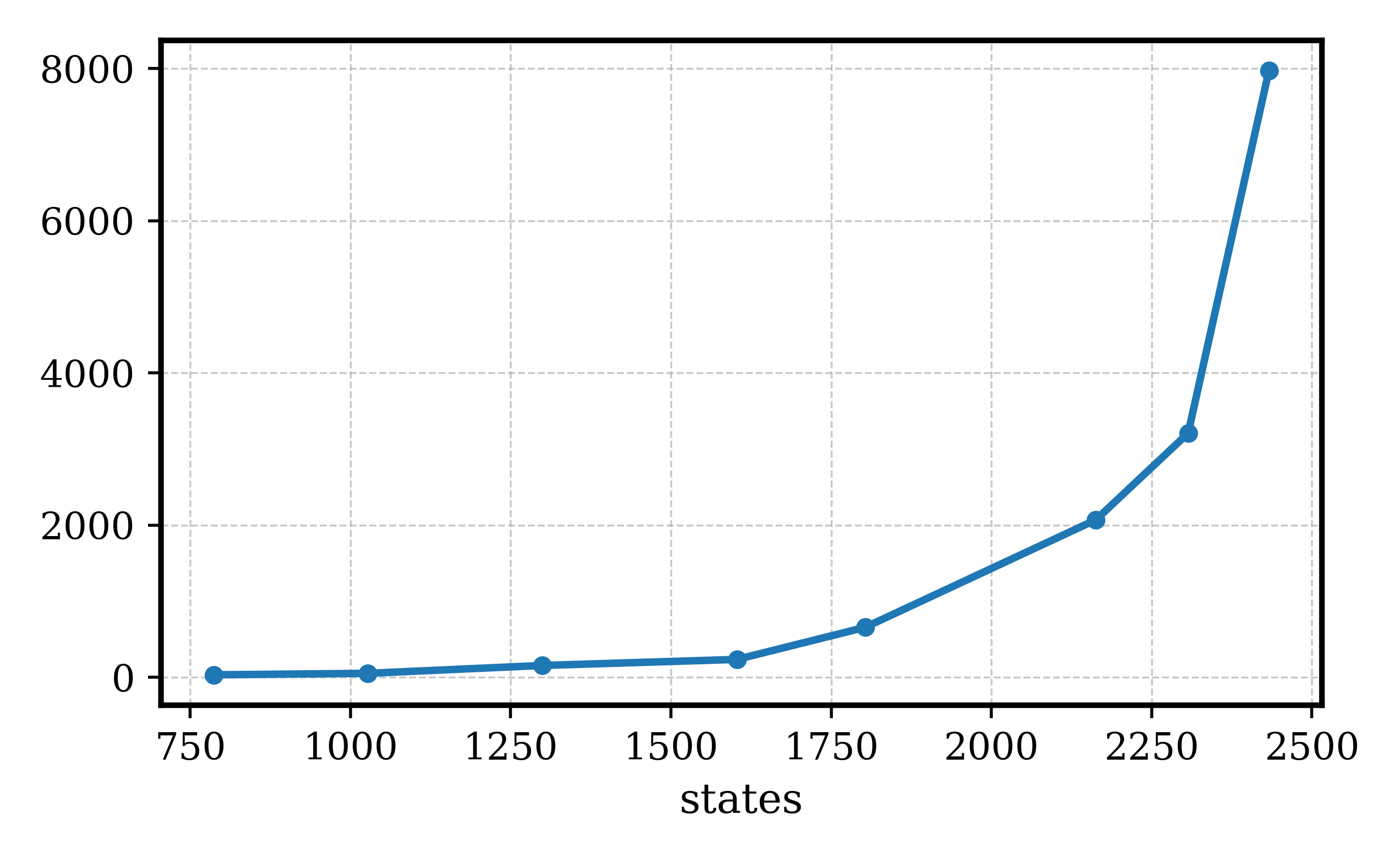}
    \includegraphics[scale=0.45]{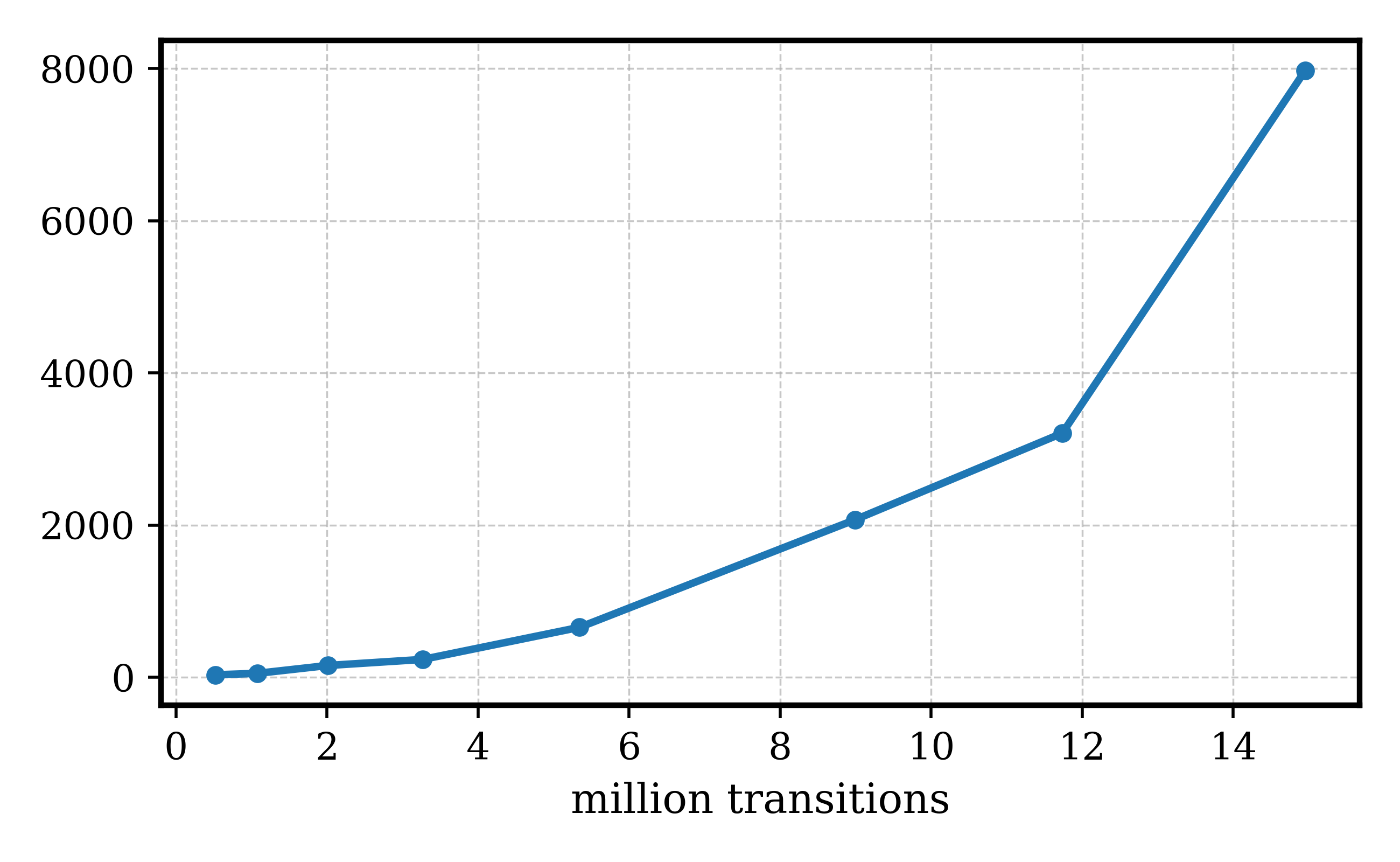}
    \caption{VI convergence time (seconds) vs.\ number of transitions (millions) and states}
    \label{fig:Val_iter_time_vs_criteria}
\end{figure}

\begin{figure}[h!]
\centering
    \includegraphics[scale=0.45]{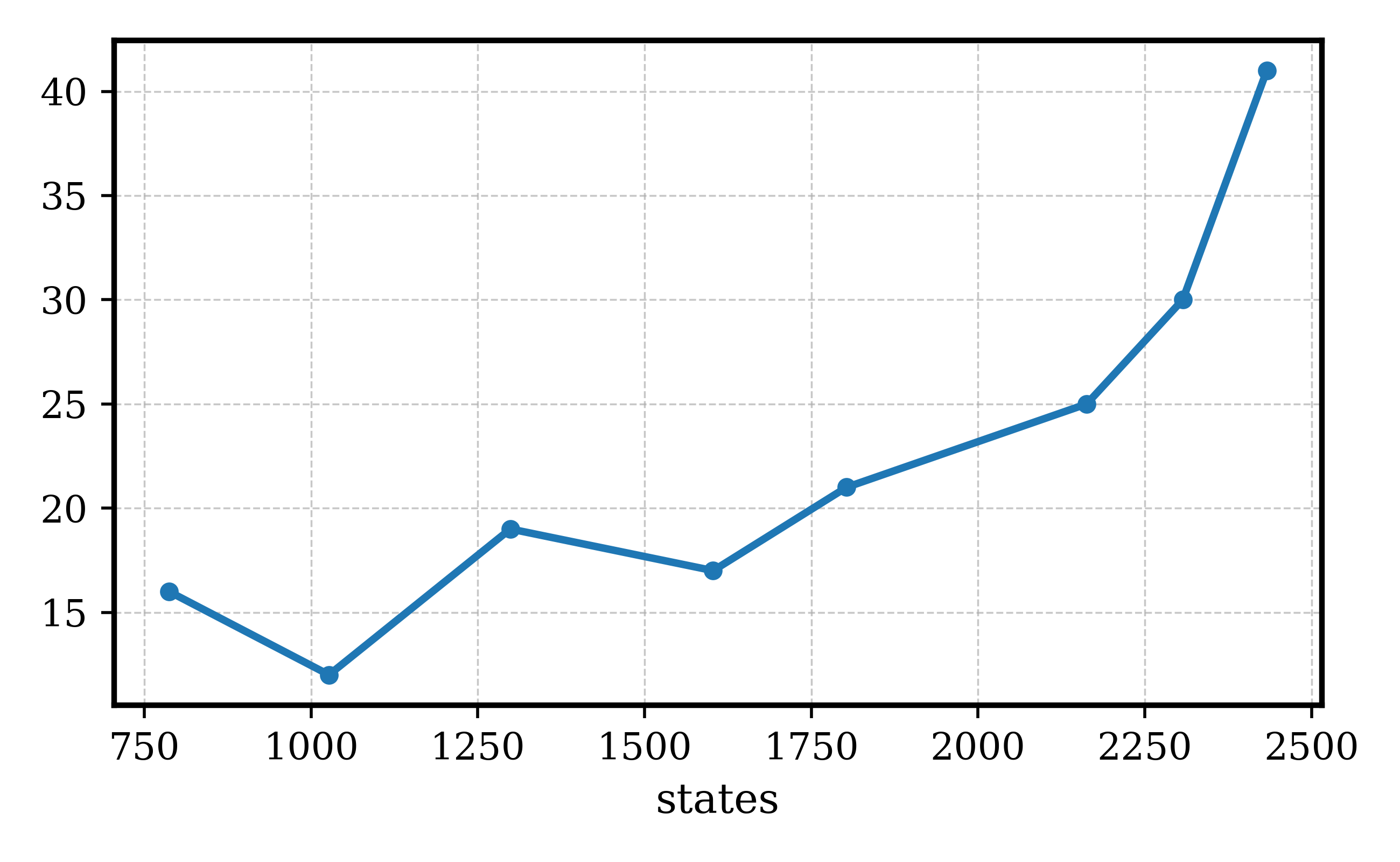}
    \includegraphics[scale=0.45]{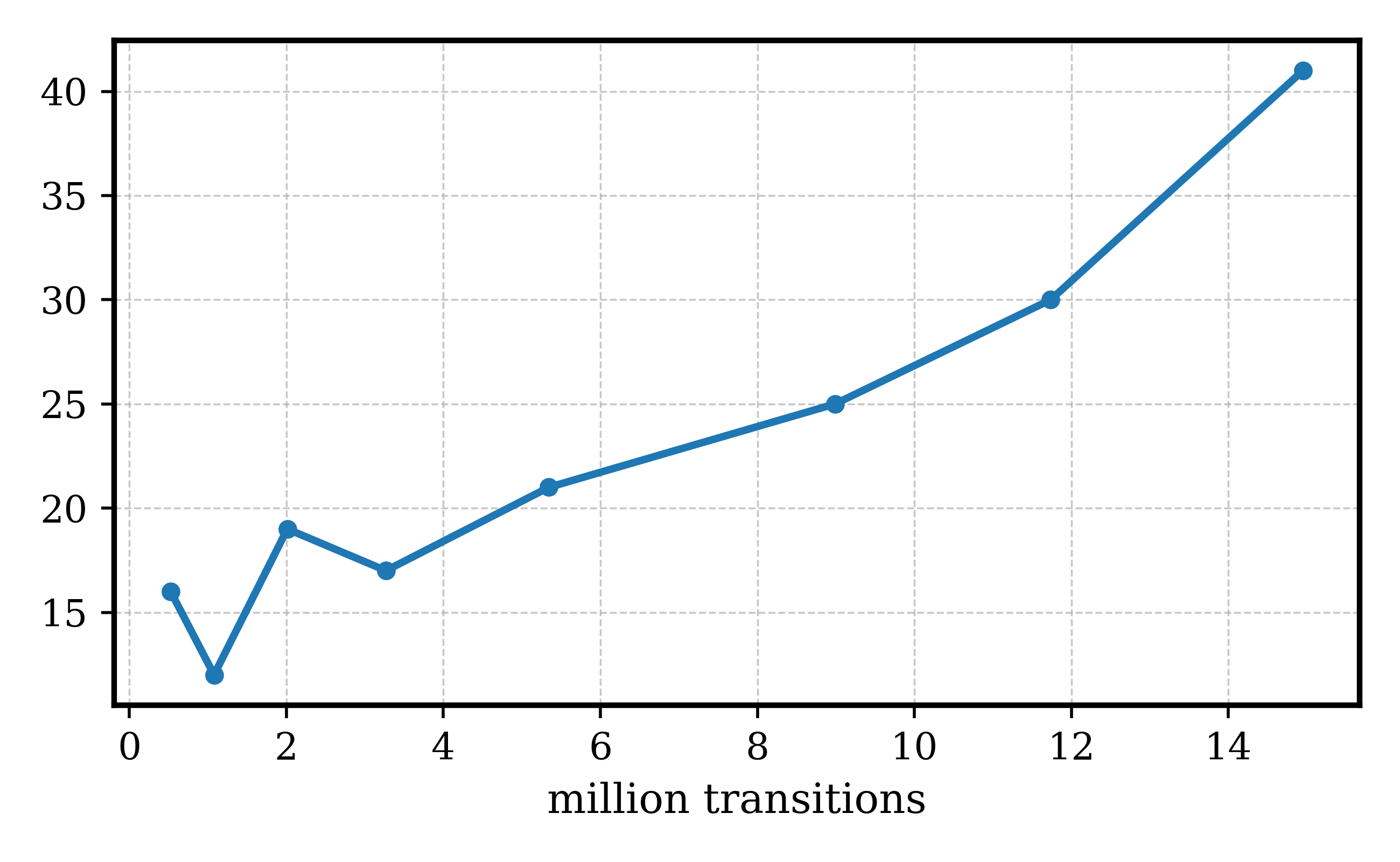}
    \caption{VI convergence iterations vs.\ number of transitions and states}
    \label{fig:Val_iter-cnvrg-iter-vs-criteria}
\end{figure}

% \begin{figure}[h!]
%         \centering
%         \includegraphics[width=.8\textwidth]{Figures/initial_states/Evolution_InitSt_VI_1600-3-withPI.png}
%         \caption{VI lower and upper values and SI values for SIMDP with 1600 states.}
%         \label{fig:app-val-iter-uperlower-init}
%     \end{figure}

\begin{center}
\begin{figure}[h!]
\centering
    \includegraphics[scale=0.45]{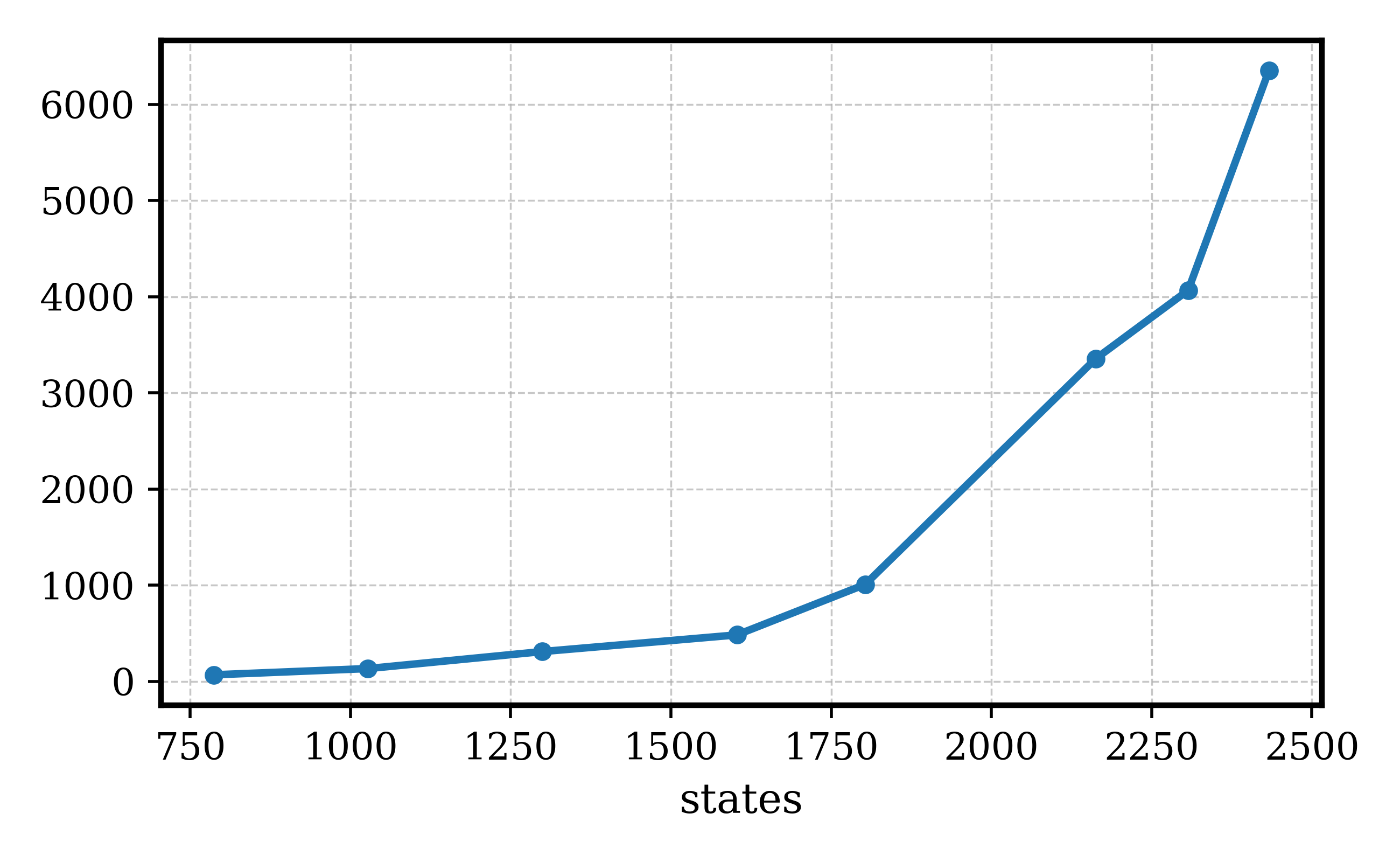}
    \includegraphics[scale=0.45]{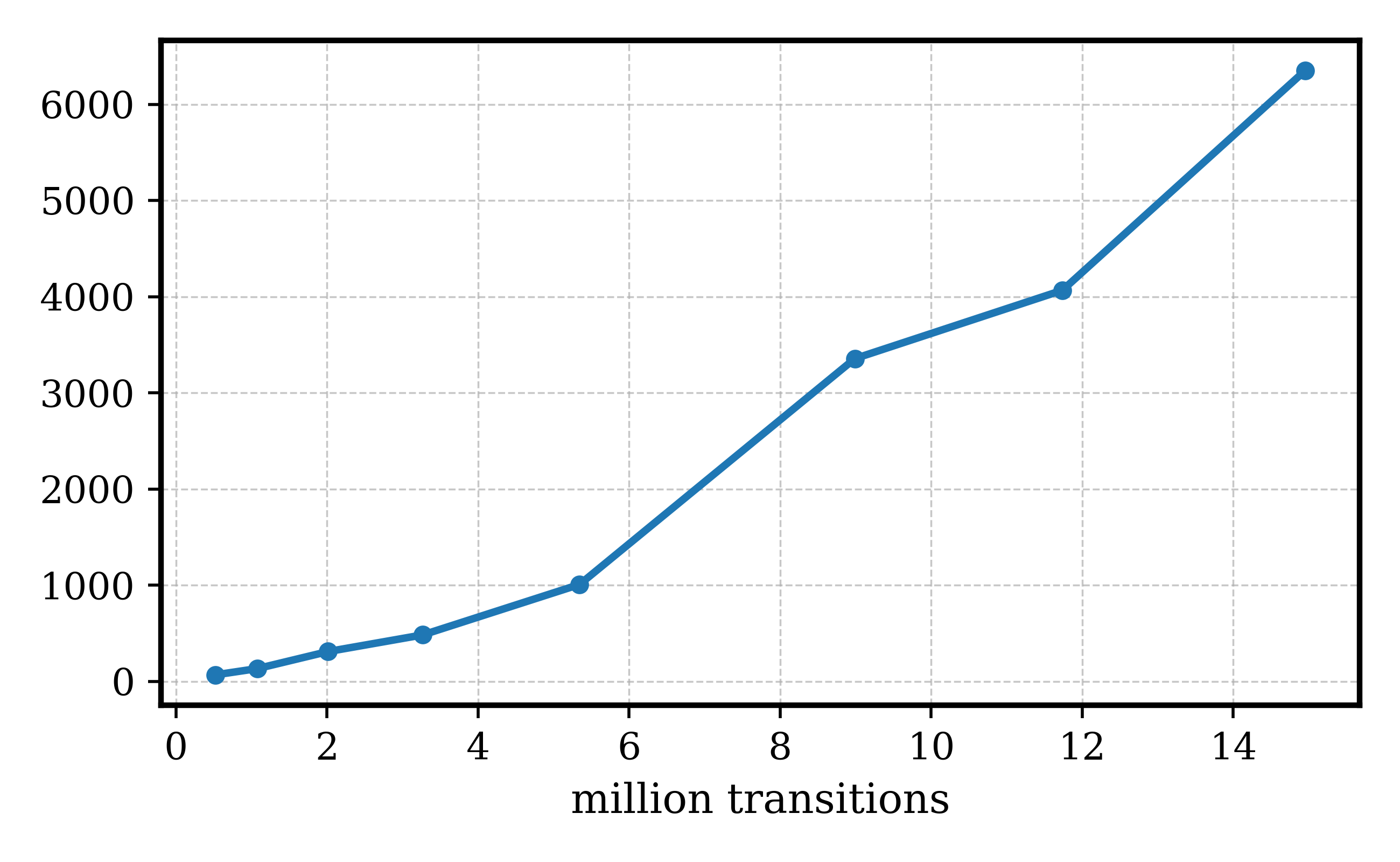}
    \caption{SI time in seconds vs.\ number of transitions and states}
    \label{fig:si-time-vs-criteria}
\end{figure}
\end{center}

\begin{figure}[h!]
    \centering
    \includegraphics[scale=0.55]{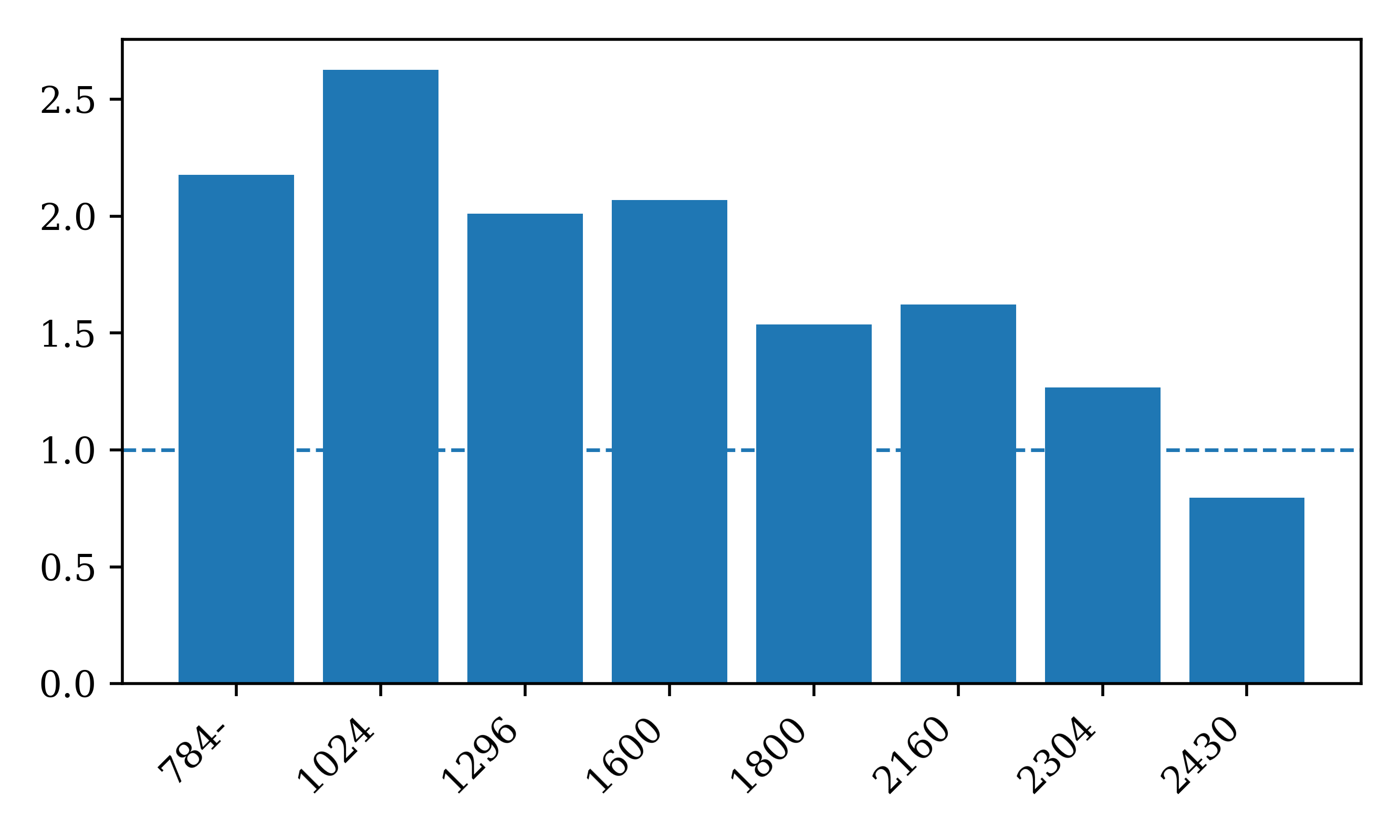}
    \caption{Ratio of SI time and VI time versus number of states.}
    \label{fig:SI-over-VI-ratio}
\end{figure}

\begin{figure}[h]
        \centering
        \includegraphics[width=\columnwidth,trim={0 0.3cm 0 2cm},clip]{Figures/3GFr.hoa_Evolution_InitSt.png}
        \centering
          \caption{Average initial-state values for VI lower bounds, VI upper bounds, and SI across iterations for an IMDP with 1600 states under $\mathbf{GF}(\textit{reach})$.}
        \label{fig:app-val-iter-uperlower-init}
    \end{figure}

\newpage
\subsection{Details on GR(1)}
In addition to general LTL specifications, we evaluate our framework on the GR(1) fragment, which provides a structured and practically useful class of reactive objectives. A GR(1) specification has the form
% \mbox{$
\[
\Bigl(\bigwedge_{i=1}^{m} \mathbf{GF}\,\varphi_i\Bigr)
\;\rightarrow\;
\Bigl(\bigwedge_{j=1}^{n} \mathbf{GF}\,\psi_j\Bigr),
\]
% $}
where the boolean formulas $\varphi_i$ and $\psi_i$ represent assumptions and guarantees, respectively: if the assumptions hold infinitely often, then the guarantees must hold infinitely often. GR(1) is substantially richer than plain reachability or a single B\"uchi objective, yet still structured enough to remain tractable. GR(1) specifications naturally induce nondeterministic automata. In the context of IMDPs, GR(1) allows us to express recurring liveness requirements together with safety-sensitive behavior, such as repeated recovery, repeated successful task completion, and avoidance of deadlocks.
\begin{figure}[h]
\centering
\scalebox{0.7}{
\begin{tikzpicture}[
    >=Stealth,
    shorten >=1pt,
    node distance=2.5cm,
    on grid,
    auto,
    semithick,
    state/.style={circle,draw,minimum size=9mm},
    acc/.style={state, path picture={\draw circle[radius=3mm];}}]

% ---------- Top row ----------
\node[state] (q0) {};
% \node[state, initial] (q0) {};
\node[acc, right=of q0] (q1) {};
\node[acc, right=of q1] (q2) {};
\node[right=of q2] (dots1) {$\cdots$};
\node[acc, right=of dots1] (qm) {};

% ---------- Bottom row ----------
\node[acc, below=2cm of qm] (p0) {};
\node[state, left=of p0] (p1) {};
\node[state, left=of p1] (p2) {};
\node[left=of p2] (dots2) {$\cdots$};
\node[state, left=of dots2] (pm) {};

% ---------- Top row edges ----------

\draw[->] ($(q0.south)+(0,-0.6cm)$) -- (q0.south);

\path[->]
(q0) edge[loop above] node {$*$} ()
     edge node {$*$} (q1)

(q1) edge[loop above] node {$\neg \varphi_1$} ()
     edge node {$\varphi_1$} (q2)

(q2) edge[loop above] node {$\neg \varphi_2$} ()
     edge node {$\varphi_2$} (dots1)

(dots1) edge node {$\varphi_{m-1}$} (qm)

(qm) edge[loop above] node {$\neg \varphi_m$} ()
     edge node[right] {$\varphi_m$} (p0);

% ---------- Bottom row edges ----------
\path[->]
(p0) edge node {$*$} (p1)

(p1) edge[loop below] node {$\neg \psi_1$} ()
     edge node[below] {$\psi_1$} (p2)

(p2) edge[loop below] node {$\neg \psi_2$} ()
     edge node[below] {$\psi_2$} (dots2)

(dots2) edge node[below] {$\psi_{n-1}$} (pm)

(pm) edge[loop below] node {$\neg \psi_n$} ()
     edge[bend left=15] node[above] {$\psi_n$} (p0);

\end{tikzpicture}
}
\caption{GR(1) automaton.}
\label{fig:gr1_schema}
\end{figure}

% \begin{figure*}[b]
%     \centering    
%     \begin{subfigure}[b]{0.32\textwidth}
%         \centering
%         \includegraphics[width=\linewidth]{Figures/gr1_uav_si_time_vs_transitions.png}
%     \end{subfigure}
%     \hfill
%     \begin{subfigure}[b]{0.32\textwidth}
%         \centering
%         \includegraphics[width=\linewidth]{Figures/gr1_uav_vi_time_vs_transitions.png}
%     \end{subfigure}
%     \hfill
%         \begin{subfigure}[b]{0.32\textwidth}
%         \centering
%         \includegraphics[width=\linewidth]{Figures/gr1_uav_qual_time_vs_transitions.png}
%     \end{subfigure}

%     \caption{
% Performance comparison for the UAV benchmarks under the selected GR(1) specifications. The columns correspond to Strategy Improvement (left), Value Iteration (middle), and Qualitative Analysis (right). All times are reported in seconds. The legend is shown only once; the same color-to-specification mapping is used across all subfigures.
%     }
%     \label{fig:app-gr1_uav_results}
% \end{figure*}

\Cref{fig:gr1_schema} shows the automaton schema underlying GR(1) specifications: the upper part corresponds to the assumptions ($\varphi_i$) and the lower part corresponds to the guarantees ($\psi_i$). We then instantiate this pattern with the following four GR(1) specifications over the atomic propositions \textit{init}, \textit{reach} (\textit{r}), and \textit{deadlock} (\textit{d}):
\[
\begin{aligned}
\Phi_1:\quad
& \mathbf{GF}(init)\land \mathbf{GF}(\neg deadlock)
   \rightarrow \mathbf{GF}(reach) \\
\Phi_2:\quad
& \mathbf{GF}(\neg deadlock)
   \rightarrow \mathbf{GF}(reach)\land \mathbf{GF}(init) \\
\Phi_3:\quad
& \mathbf{GF}(\neg deadlock)
   \rightarrow \mathbf{GF}(reach \land \neg init) \\
\Phi_4:\quad & \mathbf{GF}(init \land \neg deadlock) \\ & \qquad\rightarrow \mathbf{GF}(reach \land \neg deadlock).
% \text{GR(1)}_4:\quad & \mathbf{GF}(init \land \neg deadlock) \rightarrow \mathbf{GF}(reach \land \neg deadlock)
\end{aligned}
\]
These specifications capture richer recurring mission requirements. 
 $\Phi_1$ requires eventual recurring success under recurring initialization and infinitely often absence of deadlock. The automaton modeling $\Phi_1$ can be found in Figure \ref{fig:gr1_1}.
 $\Phi_2$ models repeated recovery-and-success behavior, requiring the system to repeatedly revisit both \textit{init} (as recovery) and \textit{reach} (as success). (automaton in Figure \ref{fig:gr1_2}).
 $\Phi_3$ strengthens this by requiring nontrivial recurring success, namely reaching a progressed goal state distinct from the initial region. (automaton in Figure \ref{fig:gr1_3}). 
 $\Phi_4$ is deadlock-sensitive and requires that whenever the system can repeatedly return to a live initial condition, it must also repeatedly achieve a live success condition. automaton in Figure \ref{fig:gr1_4}).

% \begin{figure}[h]
% \centering
% \scalebox{0.7}{
% \input{Figures/gr1_main}
% }
% \caption{GR(1) automaton.}
% \label{fig:gr1_schema_1row}
% \end{figure} 

\begin{figure}[h]
\centering
\scalebox{0.7}{
\begin{tikzpicture}[
    >=Stealth,
    shorten >=1pt,
    node distance=3cm,
    on grid,
    auto,
    semithick,
    state/.style={circle,draw,minimum size=9mm},
    acc/.style={state, path picture={\draw circle[radius=3mm];}}] % adjust inner circle size here   

\node[state] (q0) {};
\node[acc, right=2cm of q0] (q1) {};
\node[acc, right=of q1] (q2) {};
\node[acc, right=of q2] (p0) {};
\node[state, right=2cm of p0] (p1) {};

\draw[->] ($(q0.south)+(0,-0.6cm)$) -- (q0.south);

\path[->]

(q0) edge[loop above] node {$*$} ()
     edge node {$*$} (q1)

(q1) edge[loop above] node {$\neg init$} ()
     edge node {$init$}  (q2)

(q2) edge[loop above] node {$deadlock$} ()
     edge node {$\neg deadlock$}  (p0);

\path[->]
(p0) edge node {$*$} (p1)

(p1) edge[loop above] node {$\neg reach$} ()
     edge[bend left=45] node[below] {$reach$} (p0);

\end{tikzpicture}
}
\caption{\mbox{$\Phi_1:$ $\mathbf{GF}(init)\land \mathbf{GF}(\neg d)\rightarrow \mathbf{GF}(r)$.}}
\label{fig:gr1_1}
\end{figure}

\begin{figure}[h]
\centering
\scalebox{0.7}{
\begin{tikzpicture}[
    >=Stealth,
    shorten >=1pt,
    node distance=3cm,
    on grid,
    auto,
    semithick,
    state/.style={circle,draw,minimum size=9mm},
    acc/.style={state, path picture={\draw circle[radius=3mm];}}]

\node[state] (q0) {};
\node[acc, right=2cm of q0] (q1) {};
\node[acc, right=of q1] (p0) {};
\node[state, right=2cm of p0] (p1) {};
\node[state, right=of p1] (p2) {};

\draw[->] ($(q0.south)+(0,-0.6cm)$) -- (q0.south);

\path[->]
(q0) edge[loop above] node {$*$} ()
     edge node {$*$} (q1)
(q1) edge[loop above] node {$\mathit{deadlock}$} ()
     edge node {$\neg \mathit{deadlock}$}  (p0);

\path[->]
(p0) edge node {$*$} (p1)
(p1) edge[loop above] node {$\neg \mathit{init}$} ()
     edge node {$\mathit{init}$} (p2)
(p2) edge[loop above] node {$\neg \mathit{reach}$} ()
     edge[bend left=45] node[below] {$\mathit{reach}$} (p0);

\end{tikzpicture}
}
\caption{\mbox{
$\Phi_2:$ $\mathbf{GF}(\neg d)
   \rightarrow \mathbf{GF}(r)\land \mathbf{GF}(init)$.}}  
\label{fig:gr1_2}
\end{figure}

\begin{figure}[h]
\centering
\scalebox{0.7}{
\begin{tikzpicture}[
    >=Stealth,
    shorten >=1pt,
    node distance=3.8cm,
    on grid,
    auto,
    semithick,
    state/.style={circle,draw,minimum size=9mm},
    acc/.style={state, path picture={\draw circle[radius=3mm];}}] % adjust inner circle size here   

% States
\node[state] (q0) {};
\node[acc, right=2cm of q0] (q1) {};
\node[acc, right=of q1] (p0) {};
\node[state, right=of p0] (p1) {};

% Left part
\draw[->] ($(q0.south)+(0,-0.6cm)$) -- (q0.south);

\path[->]
(q0) edge[loop above] node {$*$} ()
     edge node {$*$} (q1)

(q1) edge[loop above] node {$\mathit{deadlock}$} ()
     edge node {$\neg \mathit{deadlock}$}  (p0);

% Right part
\path[->]
(p0) edge node {$*$} (p1)

(p1) edge[loop above] node {$\neg \mathit{reach} \lor \mathit{init}$} ()
     edge[bend left=45] node[below] {$\mathit{reach \land \neg init}$} (p0);

\end{tikzpicture}}
\caption{\mbox{
$\Phi_3:$ $\mathbf{GF}(\neg d)
   \rightarrow \mathbf{GF}(r \land \neg init)$.
   }}  
\label{fig:gr1_3}
\end{figure} 

\begin{figure}[h]
\centering
\scalebox{0.7}{
\begin{tikzpicture}[
    >=Stealth,
    shorten >=1pt,
    node distance=3.8cm,
    on grid,
    auto,
    semithick,
    state/.style={circle,draw,minimum size=9mm},
    acc/.style={state, path picture={\draw circle[radius=3mm];}}] % adjust inner circle size here   

% States
\node[state] (q0) {};
% \node[state, initial] (q0) {};
\node[acc, right=2cm of q0] (q1) {};
\node[acc, right=of q1] (p0) {};
\node[state, right=of p0] (p1) {};

% Left part
\draw[->] ($(q0.south)+(0,-0.6cm)$) -- (q0.south);

\path[->]
(q0) edge[loop above] node {$*$} ()
     edge node {$*$} (q1)

(q1) edge[loop above] node {$\neg init \lor deadlock$} ()
     edge node {$init \land \neg deadlock$}  (p0);

% Right part
\path[->]
(p0) edge node {$*$} (p1)

(p1) edge[loop above] node {$\neg reach \lor deadlock$} ()
     edge[bend left=45] node[below] {$reach \land \neg deadlock$} (p0);

\end{tikzpicture}

% \begin{tikzpicture}[
%     >=Stealth,
%     shorten >=1pt,
%     node distance=2.8cm,
%     on grid,
%     auto,
%     semithick,
%     state/.style={circle,draw,minimum size=9mm},
%     acc/.style={state, path picture={\draw circle[radius=3mm];}}] % adjust inner circle size here   

% % States
% \node[state, initial] (q0) {};
% \node[acc, right=of q0] (q1) {};
% \node[acc, right=of q1] (q2) {};
% \node[acc, right=of q2] (p0) {};
% \node[state, right=of p0] (p1) {};

% % Left part
% \path[->]
% (q0) edge[loop above] node {$*$} ()
%      edge node {$*$} (q1)

% (q1) edge[loop above] node {$\neg init$} ()
%      edge node {$init$}  (q2)

% (q2) edge[loop above] node {$deadlock$} ()
%      edge node {$\neg deadlock$}  (p0);

% % Right part
% \path[->]
% (p0) edge (p1)

% (p1) edge[loop above] node {$\neg reached$} ()
%      edge[bend left=45] node[below] {$reached$} (p0);

% \end{tikzpicture}
}
\caption{\mbox{
$\Phi_4:$ $\mathbf{GF}(init \land \neg d)\rightarrow \mathbf{GF}(r \land \neg d)$.
   }}  
\label{fig:gr1_4}
\end{figure}

Figure~\ref{fig:app-gr1_uav_results} reports the runtime for SI, VI, and qualitative analysis for the UAV benchmarks as a function of the number of transitions. These experiments show that our framework handles GR(1)-induced nondeterminism in large IMDPs and can support richer reactive mission requirements than simple reachability-style specifications.

\begin{figure}[h]
    \centering    
    \begin{subfigure}[b]{0.32\textwidth}
        \centering
        \includegraphics[width=\linewidth]{Figures/gr1_uav_si_time_vs_transitions.png}
    \end{subfigure}
    % \hfill
    \begin{subfigure}[b]{0.32\textwidth}
        \centering
        \includegraphics[width=\linewidth]{Figures/gr1_uav_vi_time_vs_transitions.png}
    \end{subfigure}
    % \hfill
        \begin{subfigure}[b]{0.32\textwidth}
        % \centering
        \raggedleft

        \includegraphics[width=1.1\linewidth]{Figures/gr1_uav_qual_time_vs_transitions.png}
    \end{subfigure}

    \caption{
Performance comparison for the UAV benchmarks under the selected GR(1) specifications. The plots correspond to Strategy Improvement (top), Value Iteration (middle), and Qualitative Analysis (bottom). All times are reported in seconds. The legend is shown only once; the same color-to-specification mapping is used across all subfigures.
    }
    \label{fig:app-gr1_uav_results}
\end{figure}

\begin{comment}
\begin{figure*}[h]
    \centering
    \begin{subfigure}[b]{0.32\textwidth}
        \centering
        \includegraphics[width=\linewidth]{Figures/gr1_uav_si_time_vs_transitions.png}
    \end{subfigure}\hspace{0.01\textwidth}
    \begin{subfigure}[b]{0.32\textwidth}
        \centering
        \includegraphics[width=\linewidth]{Figures/gr1_uav_vi_time_vs_transitions.png}
    \end{subfigure}\hspace{0.01\textwidth}
    \begin{subfigure}[b]{0.32\textwidth}
        \centering
        \includegraphics[width=\linewidth]{Figures/gr1_uav_qual_time_vs_transitions.png}
    \end{subfigure}
\end{figure*}

\end{comment}

\end{document}